\documentclass[11pt, twoside]{article}

\usepackage[english]{babel}

\usepackage{amssymb}
\usepackage{amsmath}
\usepackage{amsthm}
\usepackage{yhmath}
\usepackage{mathrsfs}
\usepackage{color}

\usepackage{txfonts}
\usepackage{indentfirst}
\usepackage{booktabs}
\usepackage{graphicx}
\usepackage{subfigure}
\usepackage{tikz}
\usepackage{enumerate}
\usepackage{anysize}
\usepackage{setspace}
\usepackage{float}
\usepackage{color}
\usepackage{chngcntr}
\usepackage{titlesec}
\usepackage[title,titletoc]{appendix}
\titleformat{\paragraph}[block]{\normalfont\bfseries}{\theparagraph}{1em}{} 
\titleformat{\subparagraph}[block]{\normalfont\bfseries}{\thesubparagraph}{1em}{} 

\usepackage[colorlinks=true,
linkcolor=blue,
citecolor=red,
urlcolor=magenta,
]{hyperref}

\allowdisplaybreaks

\def\red{\color{red}}

\def\sgn{{\mathop\mathrm{\,sgn\,}}}

\newtheorem{theorem}{Theorem}[section]
\newtheorem{lemma}[theorem]{Lemma}

\newtheorem{proposition}[theorem]{Proposition}
\theoremstyle{definition}
\newtheorem{remark}[theorem]{Remark}

\newtheorem{definition}[theorem]{Definition}
\renewcommand{\appendix}{\par
   \setcounter{section}{0}%
   \setcounter{subsection}{0}%
   \setcounter{subsubsection}{0}%
   \gdef\thesection{\@Alph\c@section}%
   \gdef\thesubsection{\@Alph\c@section.\@arabic\c@subsection}%
   \gdef\theHsection{\@Alph\c@section.}%
   \gdef\theHsubsection{\@Alph\c@section.\@arabic\c@subsection}%
   \csname appendixmore\endcsname
 }

\numberwithin{equation}{section}

\begin{document}

\arraycolsep=1pt

\title{\bf\Large
Limit Properties at Critical Indices of  Linear Canonical Riesz Potentials and Their Applications to Security of Multi-Image Encryption
\footnotetext{\hspace{-0.35cm} 2020 {\it Mathematics Subject
Classification}.
Primary 42B15; Secondary  42B10, 47G40,  35J05, 94A08, 68P25.
\endgraf {\it Key words and phrases.} 
linear canonical Riesz potential,  linear canonical Laplacian operator, linear canonical transform, (non-)convergence, asymmetric cascaded method, multi-image encryption.
\endgraf This project is partially supported by
the National Natural Science Foundation of China
(Grant Nos. 12431006 and 12371093),   the
China Postdoctoral Science Foundation (Grant No.
2024M760238),  the Beijing Natural Science Foundation (Grant No. 1262011), the Fundamental Research Funds for the Central Universities (Grant No. 2253200028),
and the National Research Foundation of Korea (Grant No. RS-2026-25474453).}}
\author{
Zunwei Fu,  Dachun Yang\footnote{Corresponding
author, E-mail: \texttt{dcyang@bnu.edu.cn}/{\red \today}/Final version.}
\  \ and Shuhui Yang}
\date{}
\maketitle

\vspace{-0.8cm}

\begin{center}
\begin{minipage}{13.5cm}
{\small{\textbf{Abstract}}\quad}
In this article we introduce the linear canonical Riesz potential (for short, LCRP)
and give its symbol in terms of linear canonical transforms.
Driven by image processing, we establish the convergence/divergence of these LCRPs for different kinds of functions. Concretely,  for grating functions, we prove that their classical Riesz potentials diverge, whereas their LCRP converge due to the key role of chirp functions. For the characteristic function ${\mathbf 1}_P$ of a convex polygon $P$, we show that the limit of its Riesz potential at any non-boundary point $\boldsymbol{x}$ equals ${\mathbf 1}_P(\boldsymbol{x})$, but its limit at the boundaries differ from ${\mathbf 1}_P$, while it is known that, for any Schwartz function $f$, the limit of its Riesz potential at any point $\boldsymbol{x}$ always equals $f(\boldsymbol{x})$.   Based on these and the inverse operator of the LCRP (namely the linear canonical Laplacian operator), we propose an asymmetric cascaded  LCRP method for the multi-image encryption and create an efficient and secure cryptosystem. Systematic security evaluations, including sensitivity, statistical, noise attack, and occlusion attack analyses, demonstrate its robustness and its security. Even for a single image, the proposed method is more efficient than the known encryption approach based on the fractional Riesz potential. The novelty of these results lies in that the convergence and the divergence of LCRTs at the critical indices, respectively, for ``good" Schwartz functions and for ``bad" discrete image functions essentially affect the security of image encryption and decryption. 
\end{minipage}
\end{center}

\vspace{0.2cm}

\tableofcontents

\vspace{0.2cm}
\section{Introduction} 
It is well known that the Fourier transform and the Riesz potential serve as powerful tools for analyzing stationary signals. However, a fundamental limitation arises when applying them to nonstationary signals, their global frequency representation fails to capture localized or time-varying structures. In numerous real-world applications, signals exhibit time-varying spectral content and hence need analytical tools capable of revealing the localized frequency behavior over time. A widely used model for such nonstationary behavior is the following chirp signal, characterized by a linear variation in its instantaneous frequency over time, which may either increase (up-chirp) or decrease (down-chirp) (see, for example, Jiang et al. \cite{cjll23,cjll211,cjll212,jlcl25,lcj22,ljcl251} and Fu et al. \cite{ fglwy23,flyy24,yfyll25}).

\begin{definition}\label{def-dancp}
Let  $\boldsymbol{\alpha}
:=(\alpha_1,\ldots,\alpha_n)\in  \mathbb{R}^n$. The \emph{chirp function}
$e_{\boldsymbol{\alpha}}$ is defined by setting, for any
$\boldsymbol{x}:=(x_1, \ldots,x_n)\in \mathbb{R}^n$,
\begin{equation}\label{1cf}
e_{\boldsymbol{\alpha}}(\boldsymbol{x}):=e^{2\pi
i\sum\limits_{i=1}^{n}\alpha_ix_i^2}.
\end{equation}  
\end{definition}

 In what follows, $\mathscr{S}(\mathbb{R}^n)$ denotes the \textit{Schwartz space} on $\mathbb{R}^n$ equipped with its standard topology (defined by a countable family of norms), and $\mathscr{S}'(\mathbb{R}^n)$ its \textit{dual} (the \textit{space of all Schwarz distributions}) equipped with the weak-$\ast$ topology. Based on the chirp function, we consider the following  \emph{chirp Poisson equation}. For any $\boldsymbol{\alpha} := (\alpha_1, \ldots, \alpha_n) \in \mathbb{R}^n$ and $u \in \mathscr{S}(\mathbb{R}^n)$,
\begin{equation}\label{1cpe}
\Delta_{\boldsymbol{\alpha}} u := e_{-\boldsymbol{\alpha}} \Delta (e_{\boldsymbol{\alpha}} u)= f,  
\end{equation}
where $f$ is a known function, $\Delta:=\sum_{i=1}^{n}\frac{\partial^2}{\partial x_i^2}$ is the classical \textit{Laplacian operator}, and $e_{\boldsymbol{\alpha}}$ is as in \eqref{1cf}. 
This equation can be solved by reducing it to a standard Poisson equation via the substitution $w := e_{\boldsymbol{\alpha}} u$ and $g := e_{\boldsymbol{\alpha}} f$ and using the fundamental solution of the Laplacian equation (see, for instance, \cite[Chapter 2, p. 22]{e10}).  When $n>2$, the solution $u$ of \eqref{1cpe} is given by setting, for any $\boldsymbol{x}\in \mathbb{R}^n$,
\begin{equation}\label{e1.2x}
u(\boldsymbol{x}) = -\frac{1}{(n-2)\omega_n} e_{-\boldsymbol{\alpha}}(\boldsymbol{x}) \int_{\mathbb{R}^n} \frac{  f(\boldsymbol{y})}{ |\boldsymbol{x} - \boldsymbol{y}|^{n-2}}e_{\boldsymbol{\alpha}}(\boldsymbol{y})\, d\boldsymbol{y},
\end{equation}
where $\omega_n$ denotes the area of the unit sphere in $\mathbb{R}^n$.

Replacing the exponent 2 in \eqref{e1.2x} by a real parameter $\beta \in (0, n)$
yields the following fractional Riesz potential associated with chirp functions.
Recall that the \textit{Gamma function  $\Gamma$} is defined by setting,
for any $\beta\in (0,\infty)$,
$$\Gamma(\beta): = \int_0^{\infty} t^{\beta-1} e^{-t}\,dt.$$

\begin{definition}\label{frp1}
Let $\beta\in(0,n)$ and $\boldsymbol{\alpha}\in  \mathbb{R}^n$.
The \emph{fractional Riesz potential}
$I^{\boldsymbol{\alpha}}_\beta $
is defined by setting, for any $f\in \mathscr{S}(\mathbb{R}^n)$  and
$\boldsymbol{x}\in \mathbb{R}^n$,
\begin{equation}\label{eq-rp}
I^{\boldsymbol{\alpha}}_\beta(f)(\boldsymbol{x}):=
\frac{1}{\gamma(\beta)}e_{-\boldsymbol{\alpha}}(\boldsymbol{x})
\int_{\mathbb{R}^n}\frac{f(\boldsymbol{y})}{|\boldsymbol{x}
-\boldsymbol{y}|^{n-\beta}}e_{\boldsymbol{\alpha}}
(\boldsymbol{y})\,d\boldsymbol{y},
\end{equation}
where $e_{\boldsymbol{\alpha}}$ is as in \eqref{1cf} and $\gamma(\beta):=\pi ^{\frac{n}{2}}2^\beta
\frac{\Gamma(\frac{\beta}{2})}{\Gamma(\frac{n-\beta}{2})}.$
\end{definition}

\begin{remark}
\begin{enumerate}[(i)]
\item If $\boldsymbol{\alpha}:=(\frac{\cot\zeta_1}{2},\ldots,\frac{\cot\zeta_n}{2})$ with each
$\zeta_k\in (0,2\pi)$ and $\zeta_k \neq \pi$, then $I^{\boldsymbol{\alpha}}_\beta$ in this case coincides with the \textit{fractional Riesz potential} in  \cite[Definition 1.4]{flyy24}. 

\item If $\boldsymbol{\alpha}:=(0,\ldots,0)$,
then $I^{\boldsymbol{\alpha}}_\beta$ in this case coincides with the classical Riesz potential in  \cite{r1949,s1970}. 
\end{enumerate}
\end{remark}

Fu et al. \cite{flyy24} applied the fractional Riesz potential to image encryption. The experiments in \cite{flyy24} demonstrate that the encryption method based on the fractional Riesz potential significantly enhances the system security with its effectiveness validated in encrypting both grayscale and color images. Next, we recall the two-parameter counterpart of the single-parameter chirp function given in Definition \ref{def-dancp}.
\begin{definition}
Let $\boldsymbol{a}:=(a_1,\ldots,a_n),
\boldsymbol{b}:=(b_1,\ldots,b_n)\in\mathbb R^n$ with each $b_j\neq0$.
The \emph{chirp function}
$e_{\boldsymbol{a},\boldsymbol{b}}$  is defined by setting,  for any $\boldsymbol{x}:=(x_1,\ldots,x_n)\in \mathbb{R}^n$,
\begin{equation}\label{1eq}
{e}_{\boldsymbol{a},\boldsymbol{b}}( \boldsymbol x
) :=e^{\pi i\sum\limits_{j=1}^n\frac{a_j}{b_j}x_j^2}.
\end{equation}
\end{definition}

A \textit{natural question} is that, if we introduce the following linear canonical Riesz potential by replacing the single-parameter chirp function in \eqref{eq-rp} with the two-parameter chirp function in \eqref{1eq}, whether this new potential can characterize and analyze non-stationary signals more effectively. The main target of this article is to give a positive answer to this question.
 
\begin{definition}\label{def-lcrp}
Let  $\beta\in(0,n)$ and
$\boldsymbol{a}:=(a_1,\ldots,a_n),
\boldsymbol{b}:=(b_1,\ldots,b_n)\in\mathbb R^n$  with each $b_j\neq0$.
The  \emph{linear canonical Riesz potential} (for short, 
LCRP) $I^{\boldsymbol{a},\boldsymbol{b}}_\beta$ is defined 
by setting, for any $f\in \mathscr{S}(\mathbb{R}^n)$ 
and $\boldsymbol{x}\in \mathbb{R}^n$,
\begin{equation*} 
I^{\boldsymbol{a},\boldsymbol{b}}_\beta(f)(\boldsymbol{x}):=
\frac{1}{\gamma(\beta)}{e}_{-\boldsymbol a,\boldsymbol b} 
(\boldsymbol{x})\int_{\mathbb{R}^n}\frac{f(\boldsymbol{y})}
{|\boldsymbol{x}-\boldsymbol{y}|^{n-\beta}}
{e}_{\boldsymbol a,\boldsymbol b}
(\boldsymbol{y})\,d\boldsymbol{y}
\end{equation*}
with $\gamma(\beta)$ as in \eqref{eq-rp},
${e}_{-\boldsymbol a,\boldsymbol b}$  as  in \eqref{1eq} with $\boldsymbol{a}$ 
replaced by $-\boldsymbol a$, and ${e}_{\boldsymbol a,\boldsymbol b}$
as in \eqref{1eq}.
\end{definition}

\begin{remark}
\begin{enumerate}[(i)]
\item Let the notation be the same as in Definition \ref{def-lcrp}.
If let $\boldsymbol{\alpha}:=(\alpha_1,\ldots,\alpha_n)$ with each $\alpha_j:=\frac{a_j}{2b_j} $, then the LCRP 
coincides with the fractional Riesz potential $I^{\boldsymbol{\alpha}}_{\beta}$ in Definition \ref{frp1}. It is known that the fractional Fourier transform completely matches the fractional Riesz potential and  both together play a crucial role in image processing. Observe that two-parameter chirp functions simultaneously appear in the definitions of LCRPs and linear canonical transforms. Thus, it is reasonable to expect that the linear canonical transform may perfectly match the LCRP, which is proved to be true in Section \ref{sec2}.

\item If $n=3$, $\beta=2$, $\boldsymbol{a}:=(0, 0, 0)$, and 
$\boldsymbol{b}:=(1,1,1)$, then the LCRP in this case coincides with  the classical \textit{Newton potential} in \cite[(1.1)]{ks11}.
\end{enumerate}
\end{remark}

In what follows we give the symbol of LCRPs in terms of linear canonical transforms. Driven by image processing, we establish the convergence/divergence of these LCRPs for different kinds of functions. Concretely,  for grating functions, we prove that their classical Riesz potentials diverge, whereas their LCRPs converge due to the key role of chirp functions. For the characteristic function ${\mathbf 1}_P$ of a convex polygon $P$, we show that the limit of its Riesz potential at any non-boundary point $\boldsymbol{x}$ equals ${\mathbf 1}_P(\boldsymbol{x})$, but its limit at the boundaries differ from ${\mathbf 1}_P$, while it is known that, for any Schwartz function $f$, the limit of its Riesz potential at any point $\boldsymbol{x}$ always equals $f(\boldsymbol{x})$.   Based on these and the inverse operator of the LCRP (namely the linear canonical Laplacian operator), we propose an asymmetric cascaded  LCRP method for the multi-image encryption and create an efficient and secure cryptosystem. Systematic security evaluations, including sensitivity, statistical, noise attack, and occlusion attack analyses, demonstrate its robustness and its security. Even for a single image, the proposed method is more efficient than the known encryption approach based on the fractional Riesz potential. The novelty of these results lies in that the convergence and the divergence of LCRTs at the critical indices, respectively, for ``good" Schwartz functions and for ``bad" discrete image functions essentially affect the security of image encryption and decryption. 

The remainder of this article is organized as follows.

Section \ref{sec2} consists of two subsections.
In Subsection \ref{sec2.1} we  give the symbol of LCRPs in terms of linear canonical transforms (Theorem \ref{thm-frp}).
Building upon this, we further introduce the linear canonical Laplacian operator (for short, LCLO) in  Definition \ref{def-lclo}, which serves as the inverse of the LCRP (Proposition \ref{pro-lclo}).
Subsequently, we conduct a rigorous analysis on the limit behavior at the critical cases of the parameters $\beta$, $\boldsymbol{a}$, and $\boldsymbol{b}$ appearing in the LCRP $I^{\boldsymbol{a},\boldsymbol{b}}_\beta$. To be precise, in Theorem \ref{p-sharp} we  prove that, for any Schwartz function, the limits at the critical points of the parameters $\beta$, $\boldsymbol{a}$, and $\boldsymbol{b}$ in $I^{\boldsymbol{a},\boldsymbol{b}}_\beta$ pointwise converge, while
in Theorem \ref{rem-imp-1}, we consider the parameters $\boldsymbol{a}:=(a_1, a_2), \boldsymbol{b}:=(b_1, b_2)\in\mathbb{R}^2$ with each $b_j\neq0$, and, if letting $k_j:= \frac{a_j}{b_j}$ for $j\in\{1,2\}$, we show that, for a grating function $f$ on $\mathbb{R}^2$, its classical Riesz potential diverges and, assuming   $k_1\neq 0\neq k_2$, then its LCRP $I_1^{\boldsymbol{a},\boldsymbol{b}} f $ converges.
Furthermore, in Theorem \ref{thm-polygon}, we demonstrate that, for $f := \mathbf{1}_{P}$ with $P$ being an arbitrary open convex polygon on $\mathbb{R}^2$, 
$$\lim_{\beta\to 0^+}I_\beta f(\boldsymbol{x})= f(\boldsymbol{x})$$ 
for any interior or exterior point  $\boldsymbol{x}$ of $P$, while the limits at the boundary heavily depend on the local geometric structure: having limit $\frac{1}{2}$ for points on straight edges and having limit $\frac{\alpha}{2\pi}$ for corner vertices with an internal angle $\alpha$.

These results demonstrate two distinct analytic mechanisms (Remarks \ref{zyr-1} and \ref{zyr-2}). First, the classical Riesz potentials strictly diverge for bounded functions that do not decay at infinity (such as grating functions). In contrast, their LCRPs  converge because their chirp functions introduce highly oscillatory factors that regularize the behavior at infinity. This highlights a significant transition in image processing from the ``incapability'' of the classical Riesz potential to the ``capability'' of the LCRP when handling global image backgrounds. Second, for discontinuous functions (such as the characteristic functions of open convex polygons), the pointwise limiting behavior of the classical Riesz potential $I_\beta$ exhibits jump discontinuities and severe geometric dependencies at the boundaries (for example,   straight edges and sharp corners) as the parameter $\beta \to 0^+$. This indicates that, in image encryption, if there is a slight deviation in the parameter $\beta$ of $I_\beta$, the original information at the image edges and corners becomes extremely difficult to recover, thereby significantly enhancing the security and the key sensitivity of the ciphertext.

To prove Theorem \ref{rem-imp-1}, we transform the divergent integral into a convergent Fresnel-type integral by completing the square in the phase function and applying the Lebesgue dominated convergence theorem. Meanwhile, to prove Theorem \ref{thm-polygon}, we employ the asymptotic expansion of the Gamma function through the normalization constant and the integral computation of the Riesz potential, and precisely compute the limit values across different spatial regions by leveraging the radial symmetry of the potential kernel alongside topological localization and sectorial bounding techniques via the squeeze theorem.
Furthermore, in Remark \ref{the-spaces}, we generalize these phenomena in Theorems \ref{rem-imp-1} and \ref{thm-polygon} to two fundamental spaces in image processing: bounded non-decaying function spaces and piecewise smooth function spaces. This generalization uncovers the intrinsic analytic mechanisms, specifically oscillatory regularization and limit discontinuity, that explain why the LCRP succeeds in encrypting images with global backgrounds where classical methods fail and how it achieves high sensitivity at image edges.

In Subsection \ref{sec2.3}, applying Theorem \ref{thm-frp} we present numerical simulations of the LCRP and the LCLO. Through systematic simulations on images with different parameter sets (the parameters  $\boldsymbol{A}$ and $\beta$ of the LCRP $I^{\boldsymbol{A}}_\beta$  and  the parameters  $\boldsymbol{A}$ and $\gamma$ of the LCLO $\Delta_{\gamma}^{\boldsymbol{A}}$),  we present the resulting  2D grayscale images, 3D color images, and their LCT domain amplitude distributions (Figures \ref{FIG3.1}--\ref{FIG3.5}).  These experiments demonstrate that the LCRP and the LCLO are mutually inverse under specific conditions and that the parameter $\beta$ of the LCRP $I^{\boldsymbol{A}}_\beta$ effectively governs the image sparsity in the spatial domain and the amplitude modulation in the LCT domain.

In Section \ref{sec4}, based on Theorems \ref{thm-frp}, \ref{p-sharp}, \ref{rem-imp-1},  and \ref{thm-polygon}, Proposition \ref{pro-lclo}, and Remarks \ref{zyr-1}, \ref{zyr-2}, and \ref{the-spaces}, we propose a novel asymmetric cascaded LCRP multi-image encryption method.
We elaborate on the encryption procedure, which includes a dual-phase mask generation algorithm based on image separation, nonlinear phase modulation, and a cascaded LCRP process. The decryption side utilizes the LCLO for recovery.
Numerical results (Figures \ref{fig.421}--\ref{fig.424}) confirm that the method exhibits high security with respect to keys (LCRT matrices $A_j$ and orders $\beta_j$). A systematic four-pronged security evaluation, encompassing quantitative key sensitivity (Figures  \ref{fig. 11}--\ref{fig. bjmse}), statistical analysis (Tables  \ref{tab.421}--\ref{tab.424} and Figures  \ref{fig.3ys}--\ref{fig. zhifangtu}), noise attack resistance (Figure  \ref{fig. jxzs}), and occlusion attack robustness (Figure  \ref{fig. oa}), validates the exceptional security and the robustness of the method proposed above.

Finally, Section \ref{sec5} concludes this article with a detailed summary.

We end this section by making some notational conventions. Let $\mathbb{N}:=\{1,2,\ldots\}$ and $\mathbb{Z}_+$
denote the set of all non-negative integers. We use $C$ to denote a positive constant,
which is independent of the main parameters involved,
whose value may vary from line to line. The \emph{notation}
$g\lesssim h$ means $g\leq Ch$. If $g\lesssim h$ and $h\lesssim g$, we then write $g\sim h$. For any set $\Omega \subset \mathbb{R}^n$, we use $\mathring{\Omega}$, $\overline{\Omega}$, and $\partial \Omega$ to denote its interior, closure, and boundary, respectively.  Furthermore, throughout this article, unless otherwise specified, we consistently assume that $\boldsymbol{A} := (A_{1}, \ldots, A_{n})$ with each $A_{k} := \begin{bmatrix} a_{k} & b_{k} \\ c_{k} & d_{k} \end{bmatrix} \in M_{2\times2}(\mathbb{R})$ (where $M_{2\times2}(\mathbb{R})$ denotes the set of all $2 \times 2$ real matrices with determinant 1). For convenience, we also denote the vectors formed by these entries as $\boldsymbol{a} := (a_1, \ldots, a_n)$, $\boldsymbol{b} := (b_1, \ldots, b_n)$, $\boldsymbol{c} := (c_1, \ldots, c_n)$, and $\boldsymbol{d} := (d_1, \ldots, d_n)$. We denote the \textit{origin} of $\mathbb{R}^n$ by $\boldsymbol{0}$. The limit ${\beta\to 0^+}$ means that $\beta>0$ and $\beta\to 0$. Finally, in all proofs we consistently retain the notation introduced in the original theorem (or related statement). 

\section[Riesz Potentials and Laplacian Operators Meet Linear Canonical Transforms]{Riesz Potentials and  Laplacian Operators Meet  \\ Linear Canonical Transforms}\label{sec2}

The Fourier transform, refined over more than two centuries of development, has become a cornerstone in scientific and engineering disciplines. It underpins the theoretical foundations of signal processing and analysis, and remains widely used for analyzing stationary signals (for recent applications in graph signal processing, see Plonka et al. \cite{ppst23} and Sun et al. \cite{ccs23,ccls23}). As a generalization of the Fourier transform, the fractional Fourier transform introduces additional degrees of freedom, enhancing its flexibility and effectiveness in handling non-stationary signals (see, for example, Abbas et al.  \cite{ asf2017} and Zayed et al. \cite{krz21,krz22,krz23, n1980,z1998, z2018,z2019,z24}).

Stein, in his monograph \cite{s1993}, succinctly summarized the scope of harmonic analysis into three central categories: maximal averages, singular integrals, and oscillatory integrals. The Fourier transform serves as a canonical example of an oscillatory integral, while the Hilbert transform stands as a pivotal example of a singular integral. As a quintessential manifestation of singular integrals, the Hilbert transform plays a significant role in signal processing and has been widely applied in areas such as modulation theory, edge detection, and filter design (see, for example, \cite{b19921, g1946, ppst23, xwx09}). In recent years, the integration of the fractional Fourier transform with the Hilbert transform has led to the introduction of the fractional Hilbert transform, with related research yielding a series of advancements (see, for example, \cite{cfgw21, z1998}). On the other hand, the Riesz transform, as a natural extension of the Hilbert transform to the $n$-dimensional Euclidean space, has undergone corresponding developments. In \cite{fglwy23}, Fu et al. combined the fractional Fourier transform with the Riesz transform to propose a fractional Riesz transform, which has subsequently been applied to edge detection. Compared with the classical Riesz transform, the fractional Riesz transform not only retains the capability to capture global information but also enables the extraction of local features along arbitrary directions.

Both being integral operators, the Riesz transform and the Riesz potential offer distinct advantages that stem from their specific types, singular integrals versus fractional integrals. For instance, the Riesz transform, being inherently directional, is particularly suitable for applications like edge detection, where feature orientation is crucial.  In contrast, the Riesz potential incorporates an additional parameter, offering greater degrees of freedom that can be exploited in more complex processing tasks like image encryption. 
The profound connection between the classical Riesz potential and the Fourier transform is well established. Specifically, for any $\beta \in \mathbb{R}$, the \textit{Riesz potential $I_\beta$} is defined in the Fourier domain for functions $f$ in the Schwartz class $\mathscr{S}(\mathbb{R}^n)$ by the multiplicative relation: for any $\boldsymbol{\xi}\in\mathbb{R}^n$,
\begin{equation}\label{jdsz}
 \left( I_\beta f \right)^{\wedge} (\boldsymbol{\xi}) = (2\pi|\boldsymbol{\xi}|)^{-\beta} \widehat{f}(\boldsymbol{\xi}),\end{equation}
 where the \textit{Fourier transform $\widehat{f}$} of $f$ is defined as
 $$\widehat{f}(\boldsymbol{\xi}) := \mathcal{F}(f)(\boldsymbol{\xi}) := \int_{\mathbb{R}^{n}} f(\boldsymbol{y}) e^{-2\pi i \boldsymbol{\xi} \cdot \boldsymbol{y}}  d\boldsymbol{y}.$$
This elegant relation in the Fourier domain unifies the fractional integral operator (when $\beta > 0$), the identity operator ($\beta = 0$), and the differential operator ($\beta< 0$, namely the fractional Laplacian operator). For a comprehensive treatment of this operator and its properties, we refer to the original work of Riesz \cite{r1949} and the comprehensive monograph by Stein \cite[Chapter V]{s1970}.
A similar relationship holds between the fractional Riesz potential and the fractional Fourier transform, as established in \cite[Theorem 2.6]{flyy24}. This foundational principle naturally leads to the question which transform shares an analogous relationship with the newly introduced LCRP.

To answer this, we turn to the linear canonical transform (LCT), providing a more comprehensive mathematical tool for analyzing and processing non-stationary signals as a further extension of the fractional Fourier transform. The LCT is also referred to as the ABCD matrix transform \cite{b1996}, the generalized Fresnel transform \cite{ja1996}, the quadratic phase system \cite{b1978}, and the extended fractional Fourier transform \cite{hll1997}. The multidimensional LCT was introduced in \cite{dp1999} as follows.

\begin{definition}\label{nDLCT}
 For any $f \in \mathscr{S}(\mathbb{R}^{n})$, its \emph{linear canonical transform} (for short, LCT) $\mathscr{L}_{\boldsymbol{A}}(f)$ is defined by setting, for any $\boldsymbol{u} \in \mathbb{R}^{n}$,
$$
\mathscr{L}_{\boldsymbol{A}}(f)(\boldsymbol{u}) := \int_{\mathbb{R}^{n}} f(\boldsymbol{x}) K_{\boldsymbol{A}}(\boldsymbol{x}, \boldsymbol{u})  d\boldsymbol{x},
$$
where, for any $\boldsymbol{x} := (x_{1}, \ldots, x_{n}), \boldsymbol{u} := (u_{1}, \ldots, u_{n}) \in \mathbb{R}^{n}$,
$$
K_{\boldsymbol{A}}(\boldsymbol{x}, \boldsymbol{u}) := \prod_{k=1}^{n} K_{A_{k}}(x_{k}, u_{k})
$$
and, for any $k \in \{1, \ldots, n\}$,
$$
K_{A_{k}}(x_{k}, u_{k}) :=
\begin{cases}
C_{A_{k}} e_{a_{k},b_{k}}(x_{k}) e_{-b_{k}}(x_{k}, u_{k}) e_{d_{k},b_{k}}(u_{k}) & \text{if } b_{k} \neq 0, \\
\sqrt{d_{k}} e^{{\pi i}c_{k}d_{k}u_{k}^{2}} \delta(x_{k} - d_{k}u_{k}) & \text{if } b_{k} = 0
\end{cases}
$$
with $C_{A_{k}} := \sqrt{\dfrac{1}{i  b_{k}}}$, $e_{a_{k},b_{k}}(x_{k}) := e^{\pi i\frac{a_{k}}{b_{k}}x_{k}^{2}}$, $e_{-b_{k}}(u_{k}, x_{k}) := e^{-2\pi i\frac{u_{k}x_{k}}{b_{k}}}$, $e_{d_{k},b_{k}}(u_{k}) := e^{\pi i\frac{d_{k}}{b_{k}}u_{k}^{2}}$, and 
$\delta$ being the \emph{Dirac measure} at the origin.
\end{definition}

Next, we recall the concept of the LCT of Schwartz distributions, which is exactly  \cite[Definition 2.7]{yfyll25}.

\begin{definition} Assume $b_k \neq 0$ for any $k \in \{1, \ldots, n\}$. 
For any Schwartz distribution $u\in \mathscr{S}'(\mathbb{R}^n)$, its LCT $\mathscr{L}_{\boldsymbol{A}}u$  is defined by setting, for any $f \in \mathscr{S}(\mathbb{R}^n)$,
$$\langle\mathscr{L}_{\boldsymbol{A}}u,f\rangle := \langle u, \mathscr{L}_{\boldsymbol{\widetilde{A}}} f\rangle, 
$$ where $\boldsymbol{\widetilde{A}}:=(\widetilde{A}_1,\ldots,\widetilde{A}_n)$ with each $\widetilde{A}_k:=\begin{bmatrix} d_k & b_k \\ c_k & a_k \end{bmatrix}$.
\end{definition}

\begin{remark}(see \cite{yfyll25})\label{rem-sd-lct}
Assume $b_k \neq 0$ for any $k \in \{1, \ldots, n\}$.
Define $C_{\boldsymbol{A}}:=\prod^n_{k=1}C_{A_k}$,
where $C_{A_k}$ is the same as in Definition \ref{nDLCT}. Let $f\in{\mathscr{S}'(\mathbb{R}^n)}$.
Its LCT $\mathscr{L}_{\boldsymbol{A}}(f)$ is
precisely the Fourier transform of the signal
$g:={e}_{\boldsymbol{a},
	\boldsymbol{b}}f$
multiplied by a chirp function, where ${e}_{\boldsymbol{a},
	\boldsymbol{b}}$ is the same as in \eqref{1eq}.
In other words, the LCT $\mathscr{L}_{\boldsymbol{A}}(f)$ 
can be decomposed into that, for any
$\boldsymbol x\in \mathbb{R}^n$,
\begin{equation*} 
	\mathscr{L}_{\boldsymbol A}(f)(\boldsymbol x)=
	C_{\boldsymbol A}
	{e}_{\boldsymbol{d},\boldsymbol{b}}
	(\boldsymbol x)\mathscr{F}\left({e}_{\boldsymbol{a},
		\boldsymbol{b}}f \right)
	(\widetilde{\boldsymbol{x}}),
\end{equation*}
holds in  $\mathscr{S}'(\mathbb{R}^n)$,
where ${e}_{\boldsymbol{d},
	\boldsymbol{b}}$ is the same as in \eqref{1eq}
with $\boldsymbol{a}$  
replaced  by   $\boldsymbol{d}$
and where $\boldsymbol{\widetilde x}:=
\frac{\boldsymbol x}{\boldsymbol b}:= (\frac{x_1}{b_1},
\ldots, \frac{x_n}{b_n})$.
\end{remark}

Now, we recall the definition of the inverse LCT, which is exactly \cite[Definition 39]{z24}.

\begin{definition}\label{def-ia}
Define the inverse matrix vector $\boldsymbol{A}^{-1} := (A_1^{-1}, \ldots, A_n^{-1})$ with each $A_k^{-1} := \begin{bmatrix} d_k & -b_k \\ -c_k & a_k \end{bmatrix}$.
For any $f \in \mathscr{S}(\mathbb{R}^n)$, its \emph{multidimensional inverse} LCT $(\mathscr{L}_{\boldsymbol{A}})^{-1}(f)$ is defined by setting, for any $\boldsymbol{x} \in \mathbb{R}^n$, 
$$
(\mathscr{L}_{\boldsymbol{A}})^{-1}(f)(\boldsymbol{x}) := \mathscr{L}_{\boldsymbol{A}^{-1}}(f)(\boldsymbol{x}).
$$
\end{definition}

The following lemma gives the inverse LCT theorem (see \cite[p. 9]{yfyll25}).

\begin{lemma}
For any $f\in\mathscr{S}'(\mathbb{R}^n)$ and
$\boldsymbol{x}\in \mathbb{R}^n$,
$({\mathscr{L}_{\boldsymbol{A}}})^{-1}
{\mathscr{L}_{\boldsymbol{A}}}(f)
(\boldsymbol x)=f
(\boldsymbol x).$
\end{lemma}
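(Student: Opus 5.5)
The plan is to reduce the inversion to the Fourier inversion theorem on $\mathscr{S}'(\mathbb{R}^n)$ by means of the chirp factorization in Remark \ref{rem-sd-lct}. The computation is coordinatewise and every chirp is a smooth multiplier of modulus one. We assume throughout that $b_k\neq 0$ for all $k$, which is the setting of Remark \ref{rem-sd-lct}. The entries of $A_k^{-1}$ are $a_k'=d_k$, $b_k'=-b_k$ and $d_k'=a_k$, so the nonzero-$b$ case is preserved under inversion.

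First I would write $\mathscr{L}_{\boldsymbol A}f = C_{\boldsymbol A}\, e_{\boldsymbol d,\boldsymbol b}\,(\mathcal F g)(\cdot/\boldsymbol b)$ with $g:=e_{\boldsymbol a,\boldsymbol b}f$. Next I would apply Remark \ref{rem-sd-lct} again, now to $\mathscr{L}_{\boldsymbol A^{-1}}$ acting on $h:=\mathscr{L}_{\boldsymbol A}f$. This gives
\[
\mathscr{L}_{\boldsymbol A^{-1}}h(\boldsymbol x)=C_{\boldsymbol A^{-1}}\,e_{\boldsymbol a,-\boldsymbol b}(\boldsymbol x)\,\mathcal F\bigl(e_{\boldsymbol d,-\boldsymbol b}h\bigr)(-\boldsymbol x/\boldsymbol b).
\]
The key observation is that $e_{\boldsymbol d,-\boldsymbol b}=e_{-\boldsymbol d,\boldsymbol b}$, so this factor exactly cancels the chirp $e_{\boldsymbol d,\boldsymbol b}$ in $h$. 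That cancellation leaves
\[
e_{\boldsymbol d,-\boldsymbol b}h=C_{\boldsymbol A}\,(\mathcal Fg)(\cdot/\boldsymbol b).
\]
Multiplying distributions by the smooth, polynomially bounded chirps is legitimate in $\mathscr{S}'$, so this step is valid for distributions.

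The core step is then to compute $\mathcal F[(\mathcal F g)(\cdot/\boldsymbol b)](-\boldsymbol x/\boldsymbol b)$. The dilation rule for the Fourier transform on $\mathscr S'$ gives
\[
\mathcal F[(\mathcal F g)(\cdot/\boldsymbol b)](\boldsymbol\eta)=|b_1\cdots b_n|\,(\mathcal F\mathcal F g)(\boldsymbol b\boldsymbol\eta).
\]
Fourier inversion, $\mathcal F\mathcal F g(\boldsymbol y)=g(-\boldsymbol y)$, then yields
\[
|b_1\cdots b_n|\,g(\boldsymbol x)=|b_1\cdots b_n|\,e_{\boldsymbol a,\boldsymbol b}(\boldsymbol x)f(\boldsymbol x).
\]
The remaining outer chirp $e_{\boldsymbol a,-\boldsymbol b}(\boldsymbol x)$ cancels $e_{\boldsymbol a,\boldsymbol b}(\boldsymbol x)$. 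What is left is $C_{\boldsymbol A^{-1}}C_{\boldsymbol A}|b_1\cdots b_n|\,f(\boldsymbol x)$.

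The main obstacle is the constant. I must verify that
\[
\sqrt{\tfrac{1}{-ib_k}}\,\sqrt{\tfrac{1}{ib_k}}\,|b_k|=1
\]
for each $k$, using the principal branch convention for the square root. For $b_k>0$ the product of the square roots is $e^{i\pi/4}e^{-i\pi/4}/b_k=1/|b_k|$, and the case $b_k<0$ works symmetrically. I would check both sign cases explicitly, since a wrong branch would produce a spurious sign. Finally, I would note that for $f\in\mathscr S'$ the identity "for any $\boldsymbol x$" is to be read as equality in $\mathscr S'(\mathbb R^n)$, holding pointwise when $f\in\mathscr S$.
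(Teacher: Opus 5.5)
Your proof is correct; there is no gap. The paper itself gives no argument for this lemma and simply quotes it from \cite[p.~9]{yfyll25}, so your computation is a self-contained verification along the natural route: the chirp factorization of Remark \ref{rem-sd-lct} applied to both $\boldsymbol{A}$ and $\boldsymbol{A}^{-1}$, cancellation of the chirps, the dilation rule, and $\mathcal{F}\mathcal{F}g=g(-\cdot)$ in $\mathscr{S}'$. Your explicit check that $C_{A_k^{-1}}C_{A_k}|b_k|=1$ for both signs of $b_k$ (principal branch) is exactly where a careless convention would produce a spurious sign. Restricting to $b_k\neq 0$ is also consistent with the paper, since the LCT on $\mathscr{S}'$ is only defined in that case.

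An equally short alternative avoids Fourier calculus on distributions by transposing. Since $\widetilde{\boldsymbol{A}^{-1}}=(\widetilde{\boldsymbol{A}})^{-1}$ (both have entries $a_k,-b_k,-c_k,d_k$), the definition of the LCT on $\mathscr{S}'$ gives, for any $\phi\in\mathscr{S}(\mathbb{R}^n)$,
\[
\langle \mathscr{L}_{\boldsymbol{A}^{-1}}\mathscr{L}_{\boldsymbol{A}}f,\phi\rangle=\langle f,\mathscr{L}_{\widetilde{\boldsymbol{A}}}\mathscr{L}_{(\widetilde{\boldsymbol{A}})^{-1}}\phi\rangle .
\]
This reduces the distributional statement to the Schwartz-function case, where the same chirp/Fourier computation is just classical Fourier inversion. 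Your direct route gives an explicit formula at the distributional level. The duality route needs only the identity on $\mathscr{S}$ together with $\mathscr{L}_{\boldsymbol{A}}(\mathscr{S})\subset\mathscr{S}$.

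One small precision: multiplication by $e_{\boldsymbol{a},\boldsymbol{b}}$ is legitimate on $\mathscr{S}'$ because all its derivatives grow at most polynomially (it lies in $\mathcal{O}_M$), not merely because it is smooth and bounded. Chirps do satisfy this, so your argument is unaffected.
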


The LCT has been widely applied in various areas including filter design, sampling, image processing, and pattern recognition (see, for example, \cite{ddp24,   hs22,hs222,hkos15, kock08,lcjl22,  pd2000, pd2002,sko98,zl18,yz24}).  Recently, Yang et al. \cite{yfyll25} combined the LCT with the Riesz transform to propose the linear canonical Riesz transform, demonstrating its superiority over the fractional Riesz transform for refined image edge detection.

The remainder of this section is structured as follows. In Subsection \ref{sec2.1}, we first give the symbol of LCRPs in terms of linear canonical transforms (Theorem \ref{thm-frp}). Based on this, we further introduce the inverse operator of the LCRP, namely the LCLO (Definition \ref{def-lclo}). Subsequently, we analyze the limit behaviors at critical indices (Theorems \ref{p-sharp}, \ref{rem-imp-1}, and \ref{thm-polygon}) and generalize these results in Remark \ref{the-spaces} to reveal the intrinsic analytic mechanisms on image backgrounds, edges, and corners. Finally, Subsection \ref{sec2.3} presents numerical simulations of the LCRP and the LCLO applied to images.

\subsection{Linear Canonical Riesz Potentials and Linear Canonical Laplacian Operators}\label{sec2.1}

We next give  LCRPs symbol in terms of linear canonical transforms.  

\begin{theorem}\label{thm-frp}
Let $\beta \in (0, n)$ and assume $b_k \neq 0$ for any $k \in \{1, \dots, n\}$.
Then, for any $f \in \mathscr{S}(\mathbb{R}^n)$ and $\boldsymbol{u} := (u_1, \dots, u_n) \in \mathbb{R}^n$,
$$
\mathscr{L}_{\boldsymbol{A}} \left(I_\beta^{\boldsymbol{a}, \boldsymbol{b}} f \right)(\boldsymbol{u}) = (2\pi)^{-\beta} |\widetilde{\boldsymbol{u}}|^{-\beta} \mathscr{L}_{\boldsymbol{A}}(f)(\boldsymbol{u})
$$
holds in $\mathscr{S}'(\mathbb{R}^n)$, where $\widetilde{\boldsymbol{u}} := (\widetilde{u}_1, \dots, \widetilde{u}_n) = (\frac{u_1}{b_1}, \dots, \frac{u_n}{b_n})$.
The function $(2\pi)^{-\beta} |\widetilde{\boldsymbol{u}}|^{-\beta}$ is referred to as the \emph{symbol} of the \emph{LCRP}.
\end{theorem}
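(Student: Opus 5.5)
The plan is to reduce everything to the classical Fourier-side identity \eqref{jdsz} using the chirp factorization in Remark \ref{rem-sd-lct}. Write $g := e_{\boldsymbol a,\boldsymbol b} f$. Since $|e_{\boldsymbol a,\boldsymbol b}|=1$ and $e_{\boldsymbol a,\boldsymbol b}$ is a smooth function all of whose derivatives grow at most polynomially, $g\in\mathscr S(\mathbb R^n)$. By Definition \ref{def-lcrp},
$$
I_\beta^{\boldsymbol a,\boldsymbol b} f = e_{-\boldsymbol a,\boldsymbol b}\, \frac{1}{\gamma(\beta)}\left(|\cdot|^{-(n-\beta)} * g\right) = e_{-\boldsymbol a,\boldsymbol b}\, I_\beta g ,
$$
where $I_\beta$ is the classical Riesz potential. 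Its integral form with the constant $\gamma(\beta)$ agrees with the multiplier definition \eqref{jdsz} for $\beta\in(0,n)$; this is \cite[Chapter V]{s1970}. Then $I_\beta g$ is a tempered distribution: it is a locally integrable function of polynomial growth, since $g$ decays rapidly and the kernel is locally integrable and bounded at infinity. Moreover, $\widehat{I_\beta g}=(2\pi|\boldsymbol\xi|)^{-\beta}\widehat g$ in $\mathscr S'$, the right-hand side being locally integrable because $\beta<n$.

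Next I apply Remark \ref{rem-sd-lct} to the tempered distribution $I_\beta^{\boldsymbol a,\boldsymbol b}f$. The chirps cancel, because $e_{\boldsymbol a,\boldsymbol b}e_{-\boldsymbol a,\boldsymbol b}=1$:
$$
\mathscr L_{\boldsymbol A}\left(I_\beta^{\boldsymbol a,\boldsymbol b}f\right)(\boldsymbol u)
= C_{\boldsymbol A}\, e_{\boldsymbol d,\boldsymbol b}(\boldsymbol u)\,\mathscr F\left(I_\beta g\right)(\widetilde{\boldsymbol u})
= C_{\boldsymbol A}\, e_{\boldsymbol d,\boldsymbol b}(\boldsymbol u)\,(2\pi|\widetilde{\boldsymbol u}|)^{-\beta}\,\widehat g(\widetilde{\boldsymbol u}).
$$
Applying Remark \ref{rem-sd-lct} again, now to $f$ itself, gives $C_{\boldsymbol A}e_{\boldsymbol d,\boldsymbol b}(\boldsymbol u)\widehat g(\widetilde{\boldsymbol u})=\mathscr L_{\boldsymbol A}(f)(\boldsymbol u)$. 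This is exactly the claimed identity.

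The main obstacle is rigor at the distributional level. Three points need care.

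First, the multiplication $e_{-\boldsymbol a,\boldsymbol b}\cdot I_\beta g$ must be legitimate in $\mathscr S'$. It is, because the chirp is a multiplier of $\mathscr S$: it is smooth and every derivative is bounded by a polynomial.

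Second, the composition of $\mathscr F$ with the dilation $\boldsymbol u\mapsto\widetilde{\boldsymbol u}$ must be interpreted correctly. I will verify it by pairing with a test function $\varphi$ and changing variables $\boldsymbol v=\widetilde{\boldsymbol u}$, which produces the Jacobian $\prod_k|b_k|$. The same factor appears on both sides, so it cancels. Equivalently, I can check the identity $\langle \mathscr L_{\boldsymbol A}u,\varphi\rangle=\langle u,\mathscr L_{\widetilde{\boldsymbol A}}\varphi\rangle$ directly, using the definition of the LCT of distributions.

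Third, the product of the locally integrable function $|\widetilde{\boldsymbol u}|^{-\beta}$ with the Schwartz function $\widehat g(\widetilde{\boldsymbol u})$ must be a tempered distribution. It is, because it is in $L^1_{\rm loc}$ with polynomial growth control.

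If one prefers to avoid quoting \eqref{jdsz} for $I_\beta g$, there is a fallback. Compute the Fourier transform of the kernel, $\mathscr F(|\cdot|^{-(n-\beta)})=\gamma(\beta)(2\pi)^{-\beta}|\boldsymbol\xi|^{-\beta}$, by the standard Gaussian subordination argument: write $|\boldsymbol x|^{-(n-\beta)}$ as an integral of Gaussians $e^{-t|\boldsymbol x|^2}$ against a power of $t$ and use Fubini. Then apply the convolution theorem in $\mathscr S'*\mathscr S$.
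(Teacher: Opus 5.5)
Your proposal is correct and follows essentially the same route as the paper. The paper also factors $\mathscr{L}_{\boldsymbol A}$ through Remark~\ref{rem-sd-lct}, writes $e_{\boldsymbol a,\boldsymbol b}I_\beta^{\boldsymbol a,\boldsymbol b}f=\gamma(\beta)^{-1}(e_{\boldsymbol a,\boldsymbol b}f)*|\cdot|^{-(n-\beta)}$, applies the convolution theorem together with the Fourier transform of the Riesz kernel (citing Grafakos, which is exactly your ``fallback''), and then reapplies the remark to $f$. Your checks on multipliers, dilations and local integrability spell out details the paper leaves implicit.
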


\begin{proof}
Let ${e}_{\boldsymbol d,\boldsymbol b}$  be the same as 
in \eqref{1eq} with $\boldsymbol{a}$  replaced  by   
$\boldsymbol d$ and   ${e}_{\boldsymbol a,\boldsymbol b}$
be the same as in \eqref{1eq}. 
From  Remark \ref{rem-sd-lct} and
Definition \ref{def-lcrp},  we infer that, for any  $f\in \mathscr{S}(\mathbb{R}^n)$ and $\boldsymbol{u} \in
\mathbb{R}^n$,
\begin{align*}
\mathscr{L}_{\boldsymbol{A}}
\left(I_\beta^{\boldsymbol{a},\boldsymbol{b}}
f\right)(\boldsymbol{u})=&C_{\boldsymbol A}
{e}_{\boldsymbol{d},\boldsymbol{b}}
(\boldsymbol u)\mathscr{F}\left({e}_{\boldsymbol{a},
\boldsymbol{b}} \, I_\beta^{\boldsymbol{a},\boldsymbol{b}}f\right)
(\widetilde{\boldsymbol{u}}) \\
=&\  \frac{1}{\gamma(\beta)}C_{\boldsymbol A}
{e}_{\boldsymbol{d},\boldsymbol{b}}
(\boldsymbol u)\mathscr{F}
\left({e}_{\boldsymbol{a},
\boldsymbol{b}}f* \frac{1}{|\boldsymbol{\cdot}|
^{n-\beta}} \right )(\widetilde{\boldsymbol{u}})
\end{align*}
in  $\mathscr{S}'(\mathbb{R}^n)$, where $\widetilde{\boldsymbol{u}} :=(\frac{u_1}{b_1},\ldots,\frac{u_n}{b_n}) $.
By this, together with \cite[Proposition 2.3.22(11) and Theorem 2.4.6]{g20141} and Remark \ref{rem-sd-lct},  we further conclude that,  for any $f\in \mathscr{S}(\mathbb{R}^n)$
and $\boldsymbol{u}\in\mathbb{R}^n$,
\begin{align*}
\mathscr{L}_{\boldsymbol{A}}
\left(I_\beta^{\boldsymbol{a},\boldsymbol{b}}
f\right)(\boldsymbol{u})=&\
\frac{1}{\gamma(\beta)}C_{\boldsymbol A}
{e}_{\boldsymbol{d},\boldsymbol{b}}
(\boldsymbol u)\mathcal{F}({e}_{\boldsymbol{a},
\boldsymbol{b}}f)
(\widetilde{\boldsymbol{u}})
\mathscr{F} \left(\frac{1}{|\boldsymbol{\cdot}|^{n-\beta}}\right)
(\widetilde{\boldsymbol{u}})\\
=&\ \frac{1}{\gamma(\beta)}C_{\boldsymbol A}
{e}_{\boldsymbol{d},\boldsymbol{b}}
(\boldsymbol u)\mathscr{F}({e}_{\boldsymbol{a},
\boldsymbol{b}}f)
(\widetilde{\boldsymbol{u}})
\gamma(\beta)(2\pi)^{-\beta}
|\widetilde{\boldsymbol{u}}|^{-\beta} \\
=&\ (2\pi)^{-\beta}|\widetilde{\boldsymbol{u}}|^{-\beta}\mathscr{L}_{\boldsymbol{A}}\left( f\right)(\boldsymbol{u})
\end{align*}
in  $\mathscr{S}'(\mathbb{R}^n)$,
which completes the proof of   Theorem \ref{thm-frp}.
\end{proof}
 
Motivated by this symbolic correspondence, similarly to the classical case \eqref{jdsz}, we next introduce the linear canonical Laplacian operator.

\begin{definition} \label{def-lclo}
Assume $b_k \neq 0$ for any $k \in \{1, \dots, n\}$. Let $\gamma \in (0, \infty)$.
The \emph{linear canonical Laplacian operator} (for short, LCLO) $\Delta_{\gamma}^{\boldsymbol A}$ is defined by setting, for any $f \in \mathscr{S}(\mathbb{R}^n)$ and $\boldsymbol{u} := (u_1, \dots, u_n) \in \mathbb{R}^n$,
$$\mathscr{L}_{\boldsymbol A}\left(-\Delta_{\gamma}^{\boldsymbol A}f
\right)(\boldsymbol{u}) := \mathscr{L}_{\boldsymbol A}\left(\left[- e_{-\boldsymbol a,
	\boldsymbol b}\Delta\left(e_{
	\boldsymbol a,\boldsymbol b}
f\right)\right]^{\frac{\gamma}{2}}\right)
(\boldsymbol{u})
:= (2\pi)^{\gamma}|\widetilde{\boldsymbol{u}}|^{\gamma}\mathscr{L}_{\boldsymbol A}(f)
(\boldsymbol{u}),$$
where $\widetilde{\boldsymbol{u}} := (\widetilde{u}_1, \dots, \widetilde{u}_n) = (\frac{u_1}{b_1}, \dots, \frac{u_n}{b_n})$ and where $e_{\boldsymbol a, \boldsymbol b}$ and $e_{-\boldsymbol a, \boldsymbol b}$ are as in Definition \ref{def-lcrp}.
The function $(2\pi)^{\gamma}|\widetilde{\boldsymbol{u}}|^{\gamma}$ is referred to as the \emph{symbol} of the LCLO.
\end{definition}

To perfectly match the matrix notation of the LCLO and to provide a more convenient representation for the numerical simulations in the subsequent sections, we now introduce the LCRP associated with the parameter matrix $\boldsymbol{A}$.

\begin{remark}\label{rm-LCRP}
Let $\beta \in (0, n)$ and assume $b_k \neq 0$ for any $k \in \{1, \ldots, n\}$.
The \emph{linear canonical Riesz potential} $I^{\boldsymbol{A}}_\beta$ associated with $\boldsymbol{A}$ is defined by setting, for any $f \in \mathscr{S}(\mathbb{R}^n)$,
$$I^{\boldsymbol{A}}_\beta(f) := I^{\boldsymbol{a}, \boldsymbol{b}}_\beta(f).$$
\end{remark}

Based on Theorem \ref{thm-frp}, Definition \ref{def-lclo}, and Remark \ref{rm-LCRP}, we immediately obtain the following property; we omit the details.

\begin{proposition} \label{pro-lclo}
Let $\beta \in (0, n)$ and assume $b_k \neq 0$ for any $k \in \{1, \dots, n\}$.
For any $f \in \mathscr{S}(\mathbb{R}^n)$ and $\boldsymbol{u} \in \mathbb{R}^n$,
$$\mathscr{L}_{\boldsymbol A}\left(-\Delta_{\beta}^{\boldsymbol A}I^{\boldsymbol{A}}_\beta f
\right)(\boldsymbol{u}) = \mathscr{L}_{\boldsymbol A}(f)
(\boldsymbol{u}) \text{.}$$ 
\end{proposition}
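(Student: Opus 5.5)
The plan is to chain the two symbol identities on the LCT side. Set $g:=I^{\boldsymbol{A}}_\beta f = I^{\boldsymbol{a},\boldsymbol{b}}_\beta f$ by Remark \ref{rm-LCRP}. Apply Definition \ref{def-lclo} with $\gamma:=\beta$ to $g$, which gives
\begin{equation*}
\mathscr{L}_{\boldsymbol A}\left(-\Delta_{\beta}^{\boldsymbol A} g\right)(\boldsymbol u)=(2\pi)^{\beta}|\widetilde{\boldsymbol u}|^{\beta}\mathscr{L}_{\boldsymbol A}(g)(\boldsymbol u).
\end{equation*}
Then Theorem \ref{thm-frp} gives $\mathscr{L}_{\boldsymbol A}(g)(\boldsymbol u)=(2\pi)^{-\beta}|\widetilde{\boldsymbol u}|^{-\beta}\mathscr{L}_{\boldsymbol A}(f)(\boldsymbol u)$. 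Multiplying the two symbols, $(2\pi)^{\beta}|\widetilde{\boldsymbol u}|^{\beta}\cdot(2\pi)^{-\beta}|\widetilde{\boldsymbol u}|^{-\beta}=1$ for every $\boldsymbol u\neq\boldsymbol 0$. This yields the identity for all $\boldsymbol u\neq \boldsymbol 0$, hence almost everywhere and as distributions.

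The main obstacle is legitimacy: $g$ is generally not a Schwartz function, yet Definition \ref{def-lclo} is stated only for $f\in\mathscr{S}(\mathbb{R}^n)$. I would therefore read $-\Delta^{\boldsymbol A}_\beta$ on $g$ through its LCT-side definition. By Remark \ref{rem-sd-lct}, $\mathscr{L}_{\boldsymbol A}(g)=C_{\boldsymbol A}e_{\boldsymbol d,\boldsymbol b}\,\mathscr{F}(e_{\boldsymbol a,\boldsymbol b}\,g)(\widetilde{\cdot})$. Theorem \ref{thm-frp} identifies this distribution with $(2\pi)^{-\beta}|\widetilde{\boldsymbol u}|^{-\beta}\mathscr{L}_{\boldsymbol A}(f)(\boldsymbol u)$. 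Since $\beta<n$, this is a locally integrable function of polynomial growth, so it is a tempered distribution given by a function.

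Multiplying it by the continuous, polynomially bounded function $(2\pi)^\beta|\widetilde{\boldsymbol u}|^\beta$ is then meaningful pointwise. The product equals $\mathscr{L}_{\boldsymbol A}(f)$ off the null set $\{\boldsymbol 0\}$, so the two agree in $\mathscr{S}'(\mathbb{R}^n)$. There is no delta mass at the origin, because the product is a bounded function near $\boldsymbol 0$ rather than a genuinely distributional product.

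If desired, one can finally invoke the inverse LCT lemma to conclude $-\Delta^{\boldsymbol A}_\beta I^{\boldsymbol A}_\beta f=f$ as well.
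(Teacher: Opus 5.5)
Your proposal is correct and follows exactly the route the paper has in mind: the paper says the proposition follows immediately from Theorem~\ref{thm-frp}, Definition~\ref{def-lclo} (with $\gamma=\beta$), and Remark~\ref{rm-LCRP}, i.e., by multiplying the symbols $(2\pi)^{\beta}|\widetilde{\boldsymbol u}|^{\beta}$ and $(2\pi)^{-\beta}|\widetilde{\boldsymbol u}|^{-\beta}$, and it omits all details. Your extra remarks supply the justification the paper leaves implicit: reading $-\Delta^{\boldsymbol A}_\beta$ on the non-Schwartz function $I^{\boldsymbol A}_\beta f$ through its LCT side, using $\beta<n$ to see that $(2\pi)^{-\beta}|\widetilde{\boldsymbol u}|^{-\beta}\mathscr{L}_{\boldsymbol A}(f)$ is a locally integrable tempered function, and interpreting the identity almost everywhere and in $\mathscr{S}'(\mathbb{R}^n)$ rather than literally at $\boldsymbol u=\boldsymbol 0$.
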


Clearly, Proposition \ref{pro-lclo} indiates that the LCLO $\Delta_{\gamma}^{\boldsymbol{A}}$ can be regarded as the inverse operators of the LCRP $I^{\boldsymbol{A}}_\beta$. These operators are proved to play a key role in the encryption and the decryption processes of the image encryption scheme discussed in Section \ref{sec4}.
 
Next,  we  show that, for any Schwartz function $f$, $I^{\boldsymbol{a},\boldsymbol{b}}_\beta f$ pointwise converges at the critical points of the parameters  $\boldsymbol{a}$,   $\boldsymbol{b}$, and $\beta$. It is worth noting that, while \cite{lx25} provides a complete summary of the classical results found in \cite{r1949} and \cite[Chapter V]{s1970}, our theorem generalizes these classical results to the case of LCRPs.

\begin{theorem}\label{p-sharp} 
Let  $\beta\in(0,n)$,  and let  
$\boldsymbol{a}:=(a_1,\ldots,a_n),
\boldsymbol{b}:=(b_1,\ldots,b_n)\in\mathbb R^n$  with each $b_j\neq0$.
Then the following statements hold.
\begin{enumerate}[\rm (i)]
\item
For any $f\in\mathscr{S}(\mathbb{R}^n)$ and $\boldsymbol x\in\mathbb R^n$,  
$$\lim_{\boldsymbol{a}\to\boldsymbol{0}}I^{\boldsymbol{a},\boldsymbol{b}}_\beta
f(\boldsymbol x)=I_\beta f(\boldsymbol x),$$
where $\boldsymbol{a}\to\boldsymbol{0}$ means that $a_j \to 0$ for any $j \in \{1, \dots, n\}$.
\item 
For any $f\in\mathscr{S}(\mathbb{R}^n)$ and $\boldsymbol x\in\mathbb R^n$,  
$$\lim_{\beta\to 0^+}I^{\boldsymbol{a},\boldsymbol{b}}_\beta f(\boldsymbol x)
= f(\boldsymbol x).$$

\item 
For any $f \in \mathscr{S}(\mathbb{R}^n)$ with $\int_{\mathbb{R}^n} {e}_{\boldsymbol{b},\boldsymbol{a}}(\boldsymbol{y}) f(\boldsymbol{y}) \, d\boldsymbol{y} = 0$ and for any $\boldsymbol{x} \in \mathbb{R}^n$,
$$\lim_{\beta \to n^-} I^{\boldsymbol{a},\boldsymbol{b}}_\beta f(\boldsymbol{x}) = I^{\boldsymbol{a},\boldsymbol{b}}_n f(\boldsymbol{x}),$$
where ${\beta\to n^-}$ means $\beta\in (0,n)$ and $\beta\to n$ and where $I^{\boldsymbol{a},\boldsymbol{b}}_n$ denotes the normalized logarithmic potential defined by setting
$$I^{\boldsymbol{a},\boldsymbol{b}}_n f(\boldsymbol{x}) := \frac{e_{-\boldsymbol{a},\boldsymbol{b}}(\boldsymbol{x})}{\pi^{\frac{n}{2}} 2^{n-1} \Gamma\left(\frac{n}{2}\right)} 
\int_{\mathbb{R}^n} e_{\boldsymbol{a},\boldsymbol{b}}(\boldsymbol{y}) \ln \frac{1}{|\boldsymbol{x} - \boldsymbol{y}|} f(\boldsymbol{y}) \, d\boldsymbol{y}$$
with ${e}_{-\boldsymbol a,\boldsymbol b}$  as 
in \eqref{1eq} in which $\boldsymbol{a}$ is replaced  by   
$-\boldsymbol a$ and   with ${e}_{\boldsymbol a,\boldsymbol b}$
as in \eqref{1eq}.
\end{enumerate}
\end{theorem}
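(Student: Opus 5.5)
The plan is to reduce everything to the classical Riesz potential through the substitution already used in the introduction. Set $g:=e_{\boldsymbol a,\boldsymbol b}f$. Each factor $e^{\pi i\frac{a_j}{b_j}y_j^2}$ is smooth and all its derivatives grow at most polynomially, so $g\in\mathscr S(\mathbb R^n)$. Moreover, $|g|=|f|$ and $e_{-\boldsymbol a,\boldsymbol b}(\boldsymbol x)g(\boldsymbol x)=f(\boldsymbol x)$. By Definition \ref{def-lcrp},
$$I^{\boldsymbol a,\boldsymbol b}_\beta f(\boldsymbol x)=e_{-\boldsymbol a,\boldsymbol b}(\boldsymbol x)\,I_\beta g(\boldsymbol x),$$
where $I_\beta$ is the classical Riesz potential with normalization $1/\gamma(\beta)$. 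Each of the three limits then becomes a statement about the scalar integral $\int g(\boldsymbol x-\boldsymbol z)|\boldsymbol z|^{\beta-n}\,d\boldsymbol z$ together with asymptotics of $\gamma(\beta)$.

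\textbf{Step (i).} Fix $\boldsymbol b$, $\beta$ and $\boldsymbol x$. As $\boldsymbol a\to\boldsymbol 0$ we have $e_{\pm\boldsymbol a,\boldsymbol b}\to1$ pointwise. The integrand is dominated by $|f(\boldsymbol y)||\boldsymbol x-\boldsymbol y|^{\beta-n}$, which is integrable: the singularity is locally integrable because $\beta>0$, and $f$ decays rapidly at infinity. The Lebesgue dominated convergence theorem then gives the limit $I_\beta f(\boldsymbol x)$.

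\textbf{Step (ii).} Write $I_\beta g(\boldsymbol x)=\frac1{\gamma(\beta)}\int g(\boldsymbol x-\boldsymbol z)|\boldsymbol z|^{\beta-n}\,d\boldsymbol z$ and split the integral into $|\boldsymbol z|<1$ and $|\boldsymbol z|\ge1$.
\begin{enumerate}[(a)]
\item The far part is bounded by $\|g\|_{L^1}$. Since $\Gamma(\beta/2)\to\infty$, we have $1/\gamma(\beta)\to0$, so this part vanishes in the limit.
\item On the near part, write $g(\boldsymbol x-\boldsymbol z)=g(\boldsymbol x)+[g(\boldsymbol x-\boldsymbol z)-g(\boldsymbol x)]$. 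The bracket is at most $\|\nabla g\|_\infty|\boldsymbol z|$, so its contribution is $O(1)$ uniformly for small $\beta$. Multiplied by $1/\gamma(\beta)\to0$, it also vanishes.
\item The remaining term is $g(\boldsymbol x)\,\omega_n/(\beta\gamma(\beta))$. Since $\beta\Gamma(\beta/2)\to2$, we get $\beta\gamma(\beta)\to 2\pi^{n/2}/\Gamma(n/2)=\omega_n$, so this term tends to $g(\boldsymbol x)$.
\end{enumerate}
Multiplying by $e_{-\boldsymbol a,\boldsymbol b}(\boldsymbol x)$ yields $f(\boldsymbol x)$.

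\textbf{Step (iii).} Put $\varepsilon:=n-\beta\to0^+$. Using $\Gamma(\varepsilon/2)\sim 2/\varepsilon$, we obtain
$$\frac1{\gamma(\beta)}\sim\frac{1}{\varepsilon\,\pi^{n/2}2^{n-1}\Gamma(n/2)}.$$
The vanishing-moment hypothesis will be used as $\int g=\int e_{\boldsymbol a,\boldsymbol b}f=0$. The statement writes $e_{\boldsymbol b,\boldsymbol a}$, and I would read it as $e_{\boldsymbol a,\boldsymbol b}$, since that is what makes the argument work. With this, subtract $\int g=0$ to write
$$\int g(\boldsymbol y)|\boldsymbol x-\boldsymbol y|^{-\varepsilon}\,d\boldsymbol y=\varepsilon\int g(\boldsymbol y)\frac{e^{\varepsilon L(\boldsymbol y)}-1}{\varepsilon}\,d\boldsymbol y,\qquad L(\boldsymbol y):=\ln\frac1{|\boldsymbol x-\boldsymbol y|}.$$
The factor $\varepsilon$ cancels the blow-up of $1/\gamma(\beta)$. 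Pointwise, $(e^{\varepsilon L}-1)/\varepsilon\to L$. For domination, when $\varepsilon<1/2$,
$$\left|\frac{e^{\varepsilon L}-1}{\varepsilon}\right|\le|L|e^{\varepsilon|L|}\le|L|\left(|\boldsymbol x-\boldsymbol y|^{-1/2}+|\boldsymbol x-\boldsymbol y|^{1/2}\right).$$
Multiplied by the rapidly decaying $|g|$, this bound is integrable. Dominated convergence then gives exactly the normalized logarithmic potential $I^{\boldsymbol a,\boldsymbol b}_n f(\boldsymbol x)$.

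\textbf{Main obstacle.} The delicate step is (iii). There one must match the constant $\pi^{n/2}2^{n-1}\Gamma(n/2)$ precisely and justify the integrable majorant near both $\boldsymbol y=\boldsymbol x$ and infinity. One must also use the moment condition in the form $\int e_{\boldsymbol a,\boldsymbol b}f=0$, which is the version the cancellation actually requires.
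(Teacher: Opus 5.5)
Your proof is correct. Parts (i) and (iii) follow the paper's lines, and (ii) takes a genuinely different route.

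\textbf{Part (i).} The paper also lets the chirps drop out. It does so through the explicit bound $|e_{\boldsymbol a,\boldsymbol b}(\boldsymbol y)-1|\le\pi\sum_j\frac{|a_j|}{|b_j|}y_j^2$ and a near/far split, which even gives a rate, rather than through dominated convergence.

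\textbf{Part (iii).} The paper makes the same reduction to $g:=e_{\boldsymbol a,\boldsymbol b}f$. It then simply cites \cite[Lemma 1.1(ii)]{lx25} for $I_\beta g(\boldsymbol x)\to I_n g(\boldsymbol x)$. Your subtraction of $\int g=0$, together with the majorant $|L|(|\boldsymbol x-\boldsymbol y|^{-1/2}+|\boldsymbol x-\boldsymbol y|^{1/2})$, proves that lemma directly, constant included. The paper's proof also uses the hypothesis in the form $\int e_{\boldsymbol a,\boldsymbol b}f=0$, so your reading of $e_{\boldsymbol b,\boldsymbol a}$ as a typo agrees with it.

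\textbf{Part (ii).} The paper works on the transform side. By Theorem \ref{thm-frp}, $I^{\boldsymbol a,\boldsymbol b}_\beta f=\mathscr L_{\boldsymbol A}^{-1}[(2\pi)^{-\beta}|\widetilde{\boldsymbol u}|^{-\beta}\mathscr L_{\boldsymbol A}f]$, and the symbol tends to $1$ off the origin. Dominated convergence with the majorant $\max\{1,|\widetilde{\boldsymbol u}|^{-\beta_0}\}|\mathscr L_{\boldsymbol A}f|$, integrable since $\beta_0<n$, then carries the limit through the inverse LCT.

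\textbf{What each route buys.} The paper's argument is short and conceptually transparent once the symbol calculus is available. However, it needs $\mathscr L_{\boldsymbol A}f\in L^1$, which already forces $e_{\boldsymbol a,\boldsymbol b}f$ to be continuous. Your spatial approximate-identity argument uses only $1/\gamma(\beta)\to0$ and $\beta\gamma(\beta)\to\omega_n$. It is local: it works whenever $g$ is integrable and continuous at $\boldsymbol x$, if you replace the gradient bound by continuity on a small ball. It is also exactly the mechanism the paper later uses in Lemma \ref{lem-sector-limit} and Theorem \ref{thm-polygon}. There the angular measure of a sector replaces $\omega_n$ and produces the boundary values $\frac12$ and $\frac{\alpha}{2\pi}$.
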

\begin{proof}
In order to prove (i), fix $f\in\mathscr{S}(\mathbb{R}^n)$ and $\boldsymbol x\in\mathbb R^n$.
Using the elementary inequality $|e^{i\theta}-1|\leq |\theta|$ for any $\theta\in\mathbb{R}$, we obtain, for any $\boldsymbol y\in\mathbb{R}^n$, 
\begin{equation*} 
\left|{e}_{\boldsymbol a,\boldsymbol b}(\boldsymbol{y})-1\right| = \left|e^{\pi i\sum_{j=1}^{n}\frac{a_j}{b_j}y_j^2}-1\right| \leq \pi \sum^{n}_{j=1}\frac{|a_j|}{|b_j|}y_j^2.
\end{equation*}
This yields 
\begin{align*}
&\left| \int_{\mathbb{R}^n}\frac{f(\boldsymbol{y}){e}_{\boldsymbol a,\boldsymbol b}(\boldsymbol{y})}{|\boldsymbol{x}-\boldsymbol{y}|^{n-\beta}}\,d\boldsymbol{y} - \int_{\mathbb{R}^n}\frac{f(\boldsymbol{y})}{|\boldsymbol{x}-\boldsymbol{y}|^{n-\beta}}\,d\boldsymbol{y}\right| \\
&\quad \leq \int_{\mathbb{R}^n}\frac{|f(\boldsymbol{y})|}{|\boldsymbol{x}-\boldsymbol{y}|^{n-\beta}}\left|{e}_{\boldsymbol a,\boldsymbol b}(\boldsymbol{y})-1\right|\,d\boldsymbol{y}
\leq \sum^{n}_{j=1}\frac{\pi|a_j|}{|b_j|}\int_{\mathbb R^n}\frac{|f(\boldsymbol y)| y_j^2}{|\boldsymbol x-\boldsymbol y|^{n-\beta}}\,d\boldsymbol{y}\\
&\quad = \sum^{n}_{j=1}\frac{\pi|a_j|}{|b_j|}\int_{|\boldsymbol x-\boldsymbol y|\leq1}\frac{|f(\boldsymbol y)| y_j^2}{|\boldsymbol x-\boldsymbol y|^{n-\beta}}\,d\boldsymbol{y} + \sum^{n}_{j=1}\frac{\pi|a_j|}{|b_j|}\int_{|\boldsymbol x-\boldsymbol y|>1}\cdots\,d\boldsymbol{y}\\
&\quad =: {\rm H}_1+{\rm H}_2.
\end{align*}

We first estimate ${\rm H}_2$. Since $f\in\mathscr{S}(\mathbb{R}^n)$, we deduce that the function $|f(\boldsymbol y)|y_j^2$ is integrable over $\mathbb R^n$, and hence we obtain
\begin{align*}
0\leq {\rm H}_2\leq\sum^{n}_{j=1}\frac{\pi |a_j|}{|b_j|}\int_{\mathbb R^n}\left|f(\boldsymbol y)\right|y_j^2\,d\boldsymbol{y}\to 0
\end{align*}
as $\boldsymbol{a}\to\boldsymbol{0}$; consequently, ${\rm H}_2 \to 0$ as $\boldsymbol{a}\to\boldsymbol{0}$.

We now turn to the estimation of ${\rm H}_1$.  
For any $\boldsymbol y\in\mathbb{R}^n$ satisfying $|\boldsymbol x-\boldsymbol y|\leq1$, the triangle inequality gives $|\boldsymbol y|\leq|\boldsymbol y-\boldsymbol x|+|\boldsymbol x|\leq 1+|\boldsymbol x|$, which implies $y_j^2 \le (1+|\boldsymbol x|)^2$. Since $f\in\mathscr{S}(\mathbb{R}^n)$, from \cite[Definition 1.8.1]{g24}, it follows that, for any $N\in\mathbb{Z}_+$, there exists a positive constant $C_N$ such that $|f(\boldsymbol y)| \leq C_N (1+|\boldsymbol y|)^{-N}$ for any $y\in\mathbb{R}^n$. Choosing $N=0$, we find that $|f(\boldsymbol y)| \leq C_0$ uniformly on $\mathbb{R}^n$. Therefore, we bound ${\rm H}_1$ by
\begin{align*}
0\leq {\rm H}_1 &\leq \sum^{n}_{j=1}\frac{\pi|a_j|}{|b_j|}\int_{|\boldsymbol x-\boldsymbol y|\leq1}\frac{C_0 y_j^2}{| \boldsymbol x-\boldsymbol y|^{n-\beta}}\,d\boldsymbol{y}\\
&\leq \sum^{n}_{j=1}\frac{\pi|a_j|C_0(1+|\boldsymbol x|)^2}{|b_j|}\int_{|\boldsymbol x-\boldsymbol y|\leq1}\frac{1}{| \boldsymbol x-\boldsymbol y|^{n-\beta}}\,d\boldsymbol{y}\to 0
\end{align*}
as $\boldsymbol{a}\to\boldsymbol{0}$ (note that the integral $\int_{|\boldsymbol x-\boldsymbol y|\leq1} |\boldsymbol x-\boldsymbol y|^{\beta-n}\,d\boldsymbol{y}$ is finite since $\beta>0$).
Hence, ${\rm H}_1 \to 0$ as $\boldsymbol{a}\to\boldsymbol{0}$.

Combining the estimates for ${\rm H}_1$ and ${\rm H}_2$, we finally conclude that
$$\lim_{\boldsymbol{a}\to\boldsymbol{0}}\left| \int_{\mathbb{R}^n}\frac{f(\boldsymbol{y})}{|\boldsymbol{x}-\boldsymbol{y}|^{n-\beta}}{e}_{\boldsymbol a,\boldsymbol b}(\boldsymbol{y})\,d\boldsymbol{y}-\int_{\mathbb{R}^n}\frac{f(\boldsymbol{y})}{|\boldsymbol{x}-\boldsymbol{y}|^{n-\beta}}\,d\boldsymbol{y}\right|=0.$$
Equivalently,
$$\lim_{\boldsymbol{a}\to\boldsymbol{0}} \int_{\mathbb{R}^n}\frac{f(\boldsymbol{y})}{|\boldsymbol{x}-\boldsymbol{y}|^{n-\beta}}{e}_{\boldsymbol a,\boldsymbol b}(\boldsymbol{y})\,d\boldsymbol{y}=\int_{\mathbb{R}^n}\frac{f(\boldsymbol{y})}{|\boldsymbol{x}-\boldsymbol{y}|^{n-\beta}}\,d\boldsymbol{y}.$$
This and the continuity of exponential functions further imply that  
$$\lim_{\boldsymbol{a}\to\boldsymbol{0}}I^{\boldsymbol{a},\boldsymbol{b}}_\beta f(\boldsymbol x)=I_\beta f(\boldsymbol x).$$
This finishes the proof of (i).

To show (ii), according to Theorem \ref{thm-frp}, for any $f \in \mathscr{S}(\mathbb{R}^n)$, the LCT $\mathscr{L}_{\mathbf{A}}(I_{\beta}^{\boldsymbol{a},\boldsymbol{b}} f)$ of $I_{\beta}^{\boldsymbol{a},\boldsymbol{b}} f$ satisfies that, for any $\boldsymbol{u} \in \mathbb{R}^n$, 
\begin{equation}\label{eqc1}
\mathscr{L}_{\mathbf{A}}\left(I_{\beta}^{\boldsymbol{a},\boldsymbol{b}} f\right)(\boldsymbol{u}) = (2\pi)^{-\beta} |\widetilde{\boldsymbol{u}}|^{-\beta} \mathscr{L}_{\boldsymbol{A}}(f)(\boldsymbol{u}),
\end{equation}
where $\widetilde{\boldsymbol{u}} = (\frac{u_1}{b_1}, \ldots, \frac{u_n}{b_n})$. 
Since both $(2\pi)^{-\beta}$ and $|\widetilde{\boldsymbol{u}}|^{-\beta}$ are continuous on $\beta$ and equal 1 when $\beta = 0$,  we infer that
$\lim_{\beta\to 0^+} (2\pi)^{-\beta} |\widetilde{\boldsymbol{u}}|^{-\beta} = 1 $
for all $\boldsymbol{u} \in \mathbb{R}^n \setminus \{\boldsymbol{0}\}$.
To recover the spatial domain behavior, we apply the inverse LCT to both sides of \eqref{eqc1} to obtain
$$I_{\beta}^{\boldsymbol{a},\boldsymbol{b}} f = \mathscr{L}_{\mathbf{A}}^{-1} \left[ (2\pi)^{-\beta} |\widetilde{\boldsymbol{u}}|^{-\beta} \mathscr{L}_{\mathbf{A}}(f) \right].$$
Fix $\beta$ in a bounded neighborhood of $0$, say $\beta \in [0, \beta_0]$ with $\beta_0 \in (0,n)$. The multiplier $|\widetilde{\boldsymbol{u}}|^{-\beta}$ is bounded by
$\max\{1,|\widetilde{\boldsymbol{u}}|^{-\beta_0}\}$. 
Note $\mathscr{L}_{\mathbf{A}}(f) \in \mathscr{S}(\mathbb{R}^n)$, which, together with $\beta_0 < n$ and  the Lebesgue dominated convergence theorem, implies that 
$$\lim_{\beta\to 0^+} I_{\beta}^{\boldsymbol{a},\boldsymbol{b}} f = \mathscr{L}_{\mathbf{A}}^{-1} \left[ \lim_{\beta\to 0^+} (2\pi)^{-\beta} |\widetilde{\boldsymbol{u}}|^{-\beta} \mathscr{L}_{\mathbf{A}}(f) \right] = \mathscr{L}_{\mathbf{A}}^{-1} \left[ \mathscr{L}_{\mathbf{A}}(f) \right] = f.$$
This  finishes the  proof of (ii).

To prove (iii), let $f \in \mathscr{S}(\mathbb{R}^n)$ and $g := e_{\boldsymbol{a},\boldsymbol{b}} f \in \mathscr{S}(\mathbb{R}^n)$. 
The assumption $\int_{\mathbb{R}^n} e_{\boldsymbol{a},\boldsymbol{b}}(\boldsymbol{y})f(\boldsymbol{y}) \, d\boldsymbol{y}=0$ implies that 
$ \int_{\mathbb{R}^n} g(\boldsymbol{y}) \, d\boldsymbol{y} = 0. $
By this and \cite[Lemma 1.1(ii)]{lx25}, we conclude that, for any $\boldsymbol{x}\in {\mathbb{R}^n}$,
\begin{equation*} 
\lim_{\beta \to n} I_\beta g(\boldsymbol{x}) = I_n g(\boldsymbol{x}),
\end{equation*}
where $I_n g$ is the normalized logarithmic potential defined by
$$
I_n g(\boldsymbol{x}) := \frac{1}{\pi^{\frac{n}{2}} 2^{n-1} \Gamma\left(\frac{n}{2}\right)} \int_{\mathbb{R}^n} \ln \frac{1}{|\boldsymbol{x} - \boldsymbol{y}|} \, g(\boldsymbol{y}) \, d\boldsymbol{y}.
$$
From this and Definition \ref{def-lcrp}, we infer that 
\begin{align*}
\lim_{\beta \to n} I_\beta^{\boldsymbol{a},\boldsymbol{b}} f(\boldsymbol{x}) 
&=\lim_{\beta \to n} e_{-\boldsymbol{a},\boldsymbol{b}}(\boldsymbol{x}) I_\beta g(\boldsymbol{x})\\
&=e_{-\boldsymbol{a},\boldsymbol{b}}(\boldsymbol{x}) \frac{1}{\pi^{\frac{n}{2}} 2^{n-1} \Gamma\left(\frac{n}{2}\right)} \int_{\mathbb{R}^n} \ln \frac{1}{|\boldsymbol{x} - \boldsymbol{y}|} e_{\boldsymbol{a},\boldsymbol{b}}(\boldsymbol{y}) f(\boldsymbol{y}) \, d\boldsymbol{y}= I_n^{\boldsymbol{a},\boldsymbol{b}} f(\boldsymbol{x}),\end{align*}
which completes the proof of  (iii) and hence Theorem  \ref{p-sharp}.
\end{proof}

Natural images typically consist of backgrounds along with discontinuous structures such as edges and contours, and therefore lack the infinite smoothness and the rapid decay properties that are characteristic of Schwartz functions. Based on these fundamental differences, we reveal two intrinsic analytic mechanisms through Theorems \ref{rem-imp-1}, and \ref{thm-polygon}. These mechanisms directly explain the core advantages of the LCRP in image encryption. 
On the one hand, a global image background is analytically characterized by its infinite Lebesgue measure and non-decaying behavior at spatial infinity. The grating function serves as the canonical mathematical model for backgrounds. By evaluating this worst-case non-decaying scenario, this mechanism elucidates why the LCRP can successfully perform the encryption function where classical operators fail.
 On the other hand, concerning the edge structures of images (exemplified by the characteristic functions of the open convex polygon), it explains why the LCRP achieves exceptional encryption performance and high key sensitivity at structural singularities.

An important tool to obtain these results is the following incomplete Fresnel integral on $\mathbb{R}$ from \cite{j2015}.

\begin{definition} 
Let $\kappa \in \mathbb{R}\setminus\{0\}$. The \emph{incomplete Fresnel integral} $F_\kappa$ is defined by setting, for any $s \in \mathbb{R}$,
\begin{equation*}
	F_\kappa(s) := \int_0^s e^{\pi i \kappa t^2} \, dt.
\end{equation*}
When $s=\infty$,  $F_\kappa(s)$ is called the \textit{standard Fresnel integral}.
\end{definition}

To state the main results, we also need the concept of grating functions (see \cite[p. 3]{mzlzf96}). The \textit{grating function} $f$ on $\mathbb{R}^2$ is defined by setting, for any $\boldsymbol{x} := (x_1, x_2) \in \mathbb{R}^2$,
\begin{align}\label{gf}
f(\boldsymbol{x}) := \sum_{n=1}^{m} c_n\mathbf{1}_{[2n, 2n+1]}(x_1),\end{align}
where $m \in \mathbb{N}$ and $c_n \in \mathbb{R}$, and we let $\|f\|_{\infty}:=\max\{|c_n|,n\in\{1,\ldots,m\}\}$. We next present the divergence/convergence at critical indices of $I_1 f$ and $I_1^{\boldsymbol{a},\boldsymbol{b}}f$ when $f$ is a grating function on $\mathbb{R}^2$.

\begin{theorem}\label{rem-imp-1}
Let $\boldsymbol{a}:=(a_1, a_2), \boldsymbol{b}:=(b_1, b_2)\in\mathbb R^2$ with each $b_j\neq0$. Let  $k_j:= \frac{a_j}{b_j}$ for any $j\in\{1,2\}$. Then, for any grating function $f$  on $\mathbb{R}^2$ as in \eqref{gf} and for any $\boldsymbol{x} := (x_1, x_2) \in \mathbb{R}^2$, the following statements hold.
\begin{enumerate}
\item[\rm (a)] The classical Riesz potential diverges, namely
$I_1 f(\boldsymbol{x}) = \infty$.
\item[\rm (b)] If $k_1\neq0\neq k_2$, then
$$\left|\lim_{k_1,k_2 \to \kappa} I_1^{\boldsymbol{a},\boldsymbol{b}} f(\boldsymbol{x})\right| \leq C\|f\|_{\infty}\frac{1}{\sqrt{\min\{|k_1|, |k_2|\}}},$$
where $C$ is a positive constant independent of $f$,  $k_1$, and  $k_2$.
\end{enumerate}
\end{theorem}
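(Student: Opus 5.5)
For (a) I will compute the classical potential directly. Here $n=2$ and $\beta=1$, so the kernel is $|\boldsymbol{x}-\boldsymbol{y}|^{-1}$ and
$$I_1 f(\boldsymbol{x})=\frac{1}{\gamma(1)}\sum_{n=1}^m c_n\int_{2n}^{2n+1}\int_{\mathbb{R}}\frac{dy_2\,dy_1}{\sqrt{(x_1-y_1)^2+(x_2-y_2)^2}}.$$
For fixed $y_1$ the inner integral in $y_2$ behaves like $\int^{\infty} dt/t$ at infinity, so it diverges logarithmically. Consequently the integral of $|f(\boldsymbol{y})|/|\boldsymbol{x}-\boldsymbol{y}|$ is infinite as soon as some $c_n\neq0$. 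When the $c_n$ share a sign (the intended reading, for instance a nonnegative grating), the truncated integrals over $|y_2|\le R$ tend to $\pm\infty$. The lower bound I will use is
$$\int_{|y_2|\le R}\frac{dy_2}{\sqrt{s^2+(x_2-y_2)^2}}\ge 2\ln R-C(s,x_2),$$
with $s$ bounded on the strip. This gives divergence. I will state (a) as: the defining integral is not absolutely convergent, and for $c_n\ge0$ not all zero its value is $+\infty$.

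For (b), the chirp factor splits as $e_{\boldsymbol{a},\boldsymbol{b}}(\boldsymbol{y})=e^{\pi i k_1y_1^2}e^{\pi i k_2y_2^2}$. Since $f$ depends only on $y_1$ and the prefactor $e_{-\boldsymbol a,\boldsymbol b}(\boldsymbol{x})$ is unimodular, I need to bound
$$J:=\sum_n c_n\int_{2n}^{2n+1}e^{\pi i k_1y_1^2}\Big(\int_{\mathbb{R}}\frac{e^{\pi i k_2y_2^2}}{\sqrt{s^2+(y_2-x_2)^2}}\,dy_2\Big)dy_1,\qquad s=x_1-y_1.$$
The inner integral is interpreted as an improper (oscillatory) integral $\lim_{R\to\infty}\int_{-R}^R$. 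I will treat it as follows.
\begin{enumerate}
\item[(1)] Substitute $t=y_2-x_2$ and complete the square in the phase, $k_2(t+x_2)^2$. This rewrites the integrand as $g_s(t)\,\partial_t F_{k_2}(t+x_2)$, where $g_s(t):=(s^2+t^2)^{-1/2}$ and $F_{k_2}$ is the incomplete Fresnel integral.
\item[(2)] Integrate by parts on $\{|t|\ge1\}$. The key input is the uniform bound $\sup_{u}|F_{k_2}(u)|\le C|k_2|^{-1/2}$, obtained by scaling $t\mapsto t/\sqrt{|k_2|}$ in the standard Fresnel integral. Since $g_s$ is monotone on each half-line with $g_s\to0$, the boundary terms plus the total variation of $g_s$ are $\lesssim |k_2|^{-1/2}$ times a constant. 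This shows that the improper integral converges and that the tail is $O(|k_2|^{-1/2})$ uniformly in $s$.
\item[(3)] Handle the piece $\{|t|\le1\}$, where $g_s$ may be large when $s$ is small. Here I do not integrate by parts in $t$. Instead I keep the absolute value and use that $\int_{|t|\le1}g_s(t)\,dt\lesssim 1+|\ln|s||$, which is integrable in $y_1$ over each unit interval.
\end{enumerate}

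Step (3) is the main obstacle: it produces an $O(1)$ contribution with no decay in $k$. To get the factor $\min\{|k_1|,|k_2|\}^{-1/2}$ there, I will integrate by parts in $y_1$ using $F_{k_1}$ instead. I will write the inner $t$-integral over $|t|\le1$ as $h(y_1)$, a function of $s$ that is of bounded variation on $[2n,2n+1]$ apart from a logarithmic singularity at $s=0$. I will then split $[2n,2n+1]$ at $y_1=x_1$ and apply the same monotone/second-mean-value argument with $\sup|F_{k_1}|\le C|k_1|^{-1/2}$. The logarithmic singularity is handled by noting that $h$ is monotone in $|s|$. Summing over $n$, the number of strips is fixed, so each strip contributes $\lesssim\|f\|_\infty\min\{|k_1|,|k_2|\}^{-1/2}$, and I take the constant $C$ to absorb $m$ (or, more carefully, to use the decay of $g_s$ in $s$ to sum the strips).

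Finally, the limit $k_1,k_2\to\kappa$ with $\kappa\neq0$ will be passed inside by dominated convergence applied to the integrated-by-parts expressions, which are absolutely convergent. This yields the stated bound with $\min\{|k_1|,|k_2|\}$ evaluated in the limit.
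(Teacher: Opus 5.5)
Part (a) is fine; it is the paper's lower-bound argument, and you are more careful than the paper about mixed signs of the $c_n$. Steps (1)--(2) of part (b) are also correct: integrating by parts against $F_{k_2}$ on $\{|t|\ge1\}$ does give a tail bounded by $C|k_2|^{-1/2}$ uniformly in $s$.

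The gap is Step (3), which you flag as the main obstacle but do not resolve.

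First, $h(s)=\int_{-1}^{1}e^{\pi i k_2(t+x_2)^2}(s^2+t^2)^{-1/2}\,dt$ is complex-valued and is not monotone in $|s|$. Its derivative is $\partial_{|s|}h=-|s|\int_{-1}^{1}e^{\pi i k_2(t+x_2)^2}(s^2+t^2)^{-3/2}\,dt$, and the real and imaginary parts of this integral change sign in general. Near $s=0$ the integral is $\approx 2s^{-2}e^{\pi i k_2x_2^2}$. For $|s|\sim1$, large $k_2$ and $|x_2|<1$, it is instead dominated by the stationary point $t=-x_2$, which carries a different phase.

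Second, and decisively, when $x_1$ lies in the stripe, $h$ is unbounded: $h(s)=2e^{\pi i k_2x_2^2}\ln(1/|s|)+O(1)$. So the second mean value theorem only gives $2\sup|h|\cdot\sup|F_{k_1}|=\infty$, whether or not $h$ is monotone. You could instead integrate by parts against $G(y_1):=\int_{x_1}^{y_1}e^{\pi i k_1u^2}\,du$. This removes the boundary term at the singularity and satisfies $|G(y_1)|\le\min\{|y_1-x_1|,C|k_1|^{-1/2}\}$. Combined with $|h'(s)|\le 2/|s|$, it yields $\int_0^1\min\{\sigma,C|k_1|^{-1/2}\}\,\frac{2\,d\sigma}{\sigma}$, which is of size $|k_1|^{-1/2}\ln|k_1|$ for large $|k_1|$. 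So with only the Fresnel bound $\sup|F_{k_1}|\lesssim|k_1|^{-1/2}$ you lose a logarithm and do not reach $\min\{|k_1|,|k_2|\}^{-1/2}$.

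There are two ways to close the gap.

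Within your Cartesian scheme, use a fact you have not exploited: the stripes lie in $\{y_1\ge2\}$. There $\pi k_1y_1^2$ has derivative at least $4\pi|k_1|$ and no stationary point. Take $G$ anchored at the point $x_1^\ast$ of $[2n,2n+1]$ nearest $x_1$. Van der Corput then gives $|G(y_1)|\le\min\{|y_1-x_1^\ast|,C|k_1|^{-1}\}$. The same integration by parts now gives $\lesssim|k_1|^{-1}\ln(2+|k_1|)\lesssim|k_1|^{-1/2}$ for $|k_1|\ge1$. For $|k_1|<1$ the trivial $O(1)$ bound on the near piece already suffices, since then $\min\{|k_1|,|k_2|\}^{-1/2}>1$.

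The paper avoids the issue altogether by using polar coordinates centred at $\boldsymbol{x}$. There $d\boldsymbol{z}/|\boldsymbol{z}|=dr\,d\theta$ absorbs the singularity, so no near/far splitting is needed. On each ray the phase is $\pi[K(\theta)r^2+L(\theta)r]$ with $K(\theta)=k_1\cos^2\theta+k_2\sin^2\theta$, and the admissible $r$ form a bounded interval. Completing the square bounds the $r$-integral by $2\sup|F_{K(\theta)}|\lesssim|K(\theta)|^{-1/2}\le\min\{|k_1|,|k_2|\}^{-1/2}$ once $k_1,k_2$ share a sign, which is where $k_1,k_2\to\kappa\neq0$ is used. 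Your near piece $\{|t|\le1\}$ is absolutely integrable, so you could also apply this polar argument just there and keep your Step (2) for the tail.

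Separately, your parenthetical about summing stripes via the decay of $g_s$ would only give growth like $\ln m$, since $g_s(1)\sim1/n$ on the $n$-th stripe. The paper's constant also tacitly depends on the number of stripes.
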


\begin{remark}\label{zyr-1}
Theorem \ref{rem-imp-1} demonstrates that the classical Riesz potential may diverge for non-smooth functions (such as discontinuous characteristic functions), while the LCRP may converge even for such functions  because chirp functions have oscillatory factors. This further highlights the big difference in image encryption from the ``\emph{incapability}" of the classical Riesz potential to the ``\emph{capability}" of the LCRP.
\end{remark}

\begin{proof}[Proof of Theorem \ref{rem-imp-1}]
Note that $f \notin \mathscr{S}(\mathbb{R}^2)$ due to its non-decaying behavior along the $x_2$-direction and its discontinuities. By the linearity of the operators $I_1$ and $I_1^{\boldsymbol{a},\boldsymbol{b}}$, without loss of generality, we may assume $f(\boldsymbol{x}) := \sum_{n=1}^{2} \mathbf{1}_{[2n, 2n+1]}(x_1)$. 

To show (a), we observe that $f$ is non-negative and consists of two vertical stripes. We can establish a lower bound by considering only the first stripe. Let $f_1(\boldsymbol{y}) := \mathbf{1}_{[2, 3]}(y_1)$.
Clearly, $f(\boldsymbol{y}) \geq f_1(\boldsymbol{y})$ for all $\boldsymbol{y} \in \mathbb{R}^2$. 
Recall that, for any  $\boldsymbol{x} \in \mathbb{R}^2$, the classical Riesz potential satisfies
\begin{equation*}
I_1 f(\boldsymbol{x}) \geq I_1 f_1(\boldsymbol{x}) = \frac{1}{2\pi} \int_{-\infty}^{\infty} \int_{2}^{3} \frac{1}{|\boldsymbol{x} - \boldsymbol{y}|} \, dy_1 \, dy_2.
\end{equation*}
For any $\boldsymbol{y} = (y_1, y_2)$ in the stripe $[2, 3] \times \mathbb{R}$, we clearly have $|\boldsymbol{y}| \leq |y_1| + |y_2| \leq 3 + |y_2|$. The triangle inequality 
implies $|\boldsymbol{x} - \boldsymbol{y}| \leq |\boldsymbol{x}| + 3 + |y_2|$. 
Letting $C_{\boldsymbol{x}} := |\boldsymbol{x}| + 3$, which is a constant depending only on $\boldsymbol{x}$, we obtain
$$
I_1 f(\boldsymbol{x}) \geq \frac{1}{2\pi} \int_{-\infty}^{\infty} \int_{2}^{3} \frac{1}{C_{\boldsymbol{x}} + |y_2|} \, dy_1 \, dy_2 = \frac{1}{2\pi} \int_{-\infty}^{\infty} \frac{1}{C_{\boldsymbol{x}} + |y_2|} \, dy_2.
$$
It is easy to verify that the last integral diverges to $\infty$. Hence, $I_1 f(\boldsymbol{x}) = \infty$ for all $\boldsymbol{x} \in \mathbb{R}^2$, which completes the proof of (a).

To prove (b), we now investigate the limit of $I_1^{\boldsymbol{a},\boldsymbol{b}} f(\boldsymbol{x})$ as $k_1, k_2 \to \kappa \in [-\infty,0)\cup(0,\infty]$ for any $\boldsymbol{x} \in \mathbb{R}^2$.
Let $f_n(\boldsymbol{x}) := \mathbf{1}_{[2n, 2n+1]}(x_1)$ denote the $n$-th single stripe for $n \in \{1, 2\}$. By linearity, $I_1^{\boldsymbol{a},\boldsymbol{b}} f(\boldsymbol{x}) = \sum_{n=1}^{2} I_1^{\boldsymbol{a},\boldsymbol{b}} f_n(\boldsymbol{x})$. It suffices to verify the convergence for any single infinite stripe $f_n$.
For this purpose, we make the change of variables $\boldsymbol{z} := \boldsymbol{y} - \boldsymbol{x}$ and hence the condition  $2n \leq y_1 \leq 2n+1$ transforms to $\rho_{n,1} \leq z_1 \leq \rho_{n,2}$, where $\rho_{n,1} := 2n - x_1$ and $\rho_{n,2} := 2n+1 - x_1$, which, together with Definition  \ref{def-lcrp}, yields 
\begin{equation}\label{eq:shifted-n}
I_1^{\boldsymbol{a},\boldsymbol{b}} f_n(\boldsymbol{x}) = \frac{1}{2\pi} \int_{\rho_{n,1} \leq z_1 \leq \rho_{n,2}} \frac{1}{|\boldsymbol{z}|} e^{\pi i \Phi(\boldsymbol{z})} \, d\boldsymbol{z},
\end{equation}
where $\Phi(\boldsymbol{z}) := k_1 z_1^2 + k_2 z_2^2 + 2k_1 x_1 z_1 + 2k_2 x_2 z_2$.
Introducing polar coordinates $\boldsymbol{z} = (r\cos\theta, r\sin\theta)$, equation \eqref{eq:shifted-n} becomes
\begin{equation}\label{eq:polar-form-n}
I_1^{\boldsymbol{a},\boldsymbol{b}} f_n(\boldsymbol{x}) = \frac{1}{2\pi} \int_{-\pi}^{\pi} \left\{\int_{R_n(\theta)} e^{\pi i [ K(\theta) r^2 + L(\theta) r ]} \, dr \right\}\, d\theta,
\end{equation}
where $K(\theta) := k_1\cos^2\theta + k_2\sin^2\theta$, $L(\theta) := 2 ( k_1 x_1 \cos\theta + k_2 x_2 \sin\theta)$, and $$R_n(\theta) := \{ r \geq 0 : \rho_{n,1} \leq r\cos\theta \leq \rho_{n,2} \}.$$
Without loss of generality, we may assume $\cos\theta\neq 0$. Denote the inner $r$-integral of \eqref{eq:polar-form-n} by $J_n(\theta; k_1,k_2)$. 
Noting $$\pi i \left[K(\theta) r^2 + L(\theta) r\right ] = \pi i K(\theta)\left[r + \frac{L(\theta)}{2K(\theta)}\right]^2 - \pi i \frac{L(\theta)^2}{4K(\theta)},$$ by the change of variables $t = r + \frac{L(\theta)}{2K(\theta)}$, we obtain
\begin{equation*} 
J_n(\theta; k_1,k_2) = e^{-\pi i \frac{L(\theta)^2}{4K(\theta)}} \int_{t_1}^{t_2} e^{\pi i K(\theta) t^2} \, dt = e^{-\pi i \frac{L(\theta)^2}{4K(\theta)}} [ F_{K(\theta)}(t_2) - F_{K(\theta)}(t_1) ],
\end{equation*}
where $t_1$ and $t_2$ correspond to the transformed boundary points of $R_n(\theta)$, namely $t_j = r_j + \frac{L(\theta)}{2K(\theta)}$ for $j \in \{1, 2\}$ with $r_1 := \max\{0, \min(\frac{\rho_{n,1}}{\cos\theta}, \frac{\rho_{n,2}}{\cos\theta})\}$ and $r_2 := \max\{0, \max(\frac{\rho_{n,1}}{\cos\theta}, \frac{\rho_{n,2}}{\cos\theta})\}$.

Using a change of variables again, we obtain, for any $s \in \mathbb{R}$, 
$$F_{K(\theta)}(s) = \frac{\sqrt{2}}{\sqrt{|K(\theta)|}} \int_0^{\frac{s\sqrt{|K(\theta)|}}{\sqrt{2}}} e^{2\pi i u^2 \sgn[K(\theta)]}\, du.$$
If ${\sgn}[K(\theta)]=1$, the convergence of the standard Fresnel integral (see \cite[Theorem 1]{j2015}) indicates that the function $P(T):=\int_{0}^{T}e^{2\pi i u^{2}}du$ has a finite limit as $T\to\infty$, which, together with the continuity of $P(T)$ on $[0,\infty)$, further implies that it is bounded. If ${\sgn}[K(\theta)]=-1$, the integral is precisely the complex conjugate of $P(T)$ and hence it is also bounded. Consequently, 
$$|F_{K(\theta)}(s)| \lesssim \frac{1}{\sqrt{|K(\theta)|}}.$$
Because $\kappa \neq 0$, for $k_1, k_2$ sufficiently close to $\kappa$, they must share the same sign. This guarantees $$|K(\theta)| = |k_1\cos^2\theta + k_2\sin^2\theta| \geq \min\{|k_1|, |k_2|\},$$ from which we deduce that
$$
|J_n(\theta; k_1, k_2)| \lesssim \frac{1}{\sqrt{|K(\theta)|}} \lesssim \frac{1}{\sqrt{\min\{|k_1|, |k_2|\}}}.
$$
By this and \eqref{eq:polar-form-n}, we conclude that 
$$
\left|I_1^{\boldsymbol{a},\boldsymbol{b}} f_n(\boldsymbol{x})\right| \lesssim \frac{1}{\sqrt{\min\{|k_1|, |k_2|\}}}.
$$
Combining  this  and the triangle inequality yields  
$$ \left|I_1^{\boldsymbol{a},\boldsymbol{b}} f(\boldsymbol{x})\right| \le \sum_{n=1}^{2} \left|I_1^{\boldsymbol{a},\boldsymbol{b}} f_n(\boldsymbol{x})\right| \lesssim \frac{1}{\sqrt{\min\{|k_1|, |k_2|\}}}. $$
This finishes the proof of (b) and hence Theorem \ref{rem-imp-1}.
\end{proof}

Having established the stark contrast between the incapability of the classical Riesz potential and the capability of the LCRP in handling global image backgrounds, we now turn our attention to its cryptographic performance at local boundaries. In natural images, structural features such as edges and contours mathematically manifest  as bounded functions with jump discontinuities,
for example,  $f := \mathbf{1}_{P}$ is the characteristic function of an arbitrary open convex polygon $P \subset \mathbb{R}^2$. On such characteristic functions, we have the following results.

\begin{theorem} \label{thm-polygon}
Let $\beta \in (0,2)$ and ${P}$ be an arbitrary open convex polygon in $ \mathbb{R}^2$.   Then the following assertions hold.
\begin{enumerate}
\item[\rm (a)] \textbf{Interior and exterior points:} For any $\boldsymbol{x} \in  $ or $\boldsymbol{x} \in \mathbb{R}^2 \setminus \overline{P}$, 
$$\lim_{\beta\to 0^+}I_\beta \mathbf{1}_{P}(\boldsymbol{x})= \mathbf{1}_{P}(\boldsymbol{x}) \text{.}$$ 
\item[\rm (b)] \textbf{Boundary edge points:} For any point $\boldsymbol{x} \in \partial P$ that lies on a straight edge (excluding the vertices), 
$$\lim_{\beta\to 0^+}I_\beta \mathbf{1}_{P}(\boldsymbol{x})=\frac{1}{2}\neq \mathbf{1}_{P}(\boldsymbol{x}) \text{.}$$ 
\item[\rm (c)] \textbf{Boundary corner points:} For any vertex $\boldsymbol{x} \in \partial P$ with an internal angle $\alpha \in (0, \pi)$ (measured in radians), 
$$\lim_{\beta\to 0^+}I_\beta \mathbf{1}_{P}(\boldsymbol{x})=\frac{\alpha}{2\pi}\neq \mathbf{1}_{P}(\boldsymbol{x}) \text{.}$$ 
\end{enumerate}
\end{theorem}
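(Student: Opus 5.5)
Since $\beta\mapsto I_\beta$ is the classical potential ($\boldsymbol a=\boldsymbol 0$), we write, with $n=2$,
$$I_\beta \mathbf 1_P(\boldsymbol x)=\frac{1}{\gamma(\beta)}\int_P\frac{d\boldsymbol y}{|\boldsymbol x-\boldsymbol y|^{2-\beta}},\qquad \gamma(\beta)=\pi\,2^\beta\frac{\Gamma(\beta/2)}{\Gamma(1-\beta/2)}.$$
The first step is the asymptotics of the normalization. From $\Gamma(\beta/2)=\frac{2}{\beta}\Gamma(1+\beta/2)$ we get $\gamma(\beta)=\frac{2\pi}{\beta}(1+o(1))$ as $\beta\to0^+$. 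Hence it suffices to compute $\lim_{\beta\to0^+}\frac{\beta}{2\pi}\int_P|\boldsymbol x-\boldsymbol y|^{\beta-2}\,d\boldsymbol y$.

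Next we pass to polar coordinates centered at $\boldsymbol x$. Because $P$ is bounded, there is $R>0$ with $P\subset B(\boldsymbol x,R)$. For each $\boldsymbol x$ we choose $\delta>0$ so that $P\cap B(\boldsymbol x,\delta)$ is exactly a sector $\{\boldsymbol x+r\omega:0<r<\delta,\ \omega\in\Sigma\}$. The set $\Sigma\subset\mathbb S^1$ is an arc of length $\theta_{\boldsymbol x}$, where:
\begin{itemize}
\item $\theta_{\boldsymbol x}=2\pi$ if $\boldsymbol x\in P$;
\item $\theta_{\boldsymbol x}=0$ if $\boldsymbol x\notin\overline P$;
\item $\theta_{\boldsymbol x}=\pi$ for a non-vertex edge point;
\item $\theta_{\boldsymbol x}=\alpha$ for a vertex of internal angle $\alpha$.
\end{itemize}
This localization uses openness and convexity of $P$. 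For an edge point, a small disc meets only that edge, and $P$ lies on one side of the edge line. For a vertex, a small disc meets only the two adjacent edges, and $P$ locally coincides with the cone they span. This is the only geometric step needing care, and we expect it to be the main (though mild) obstacle.

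We then split the integral into $B(\boldsymbol x,\delta)$ and its complement. The near part is explicit:
$$\frac{\beta}{2\pi}\,\theta_{\boldsymbol x}\int_0^\delta r^{\beta-1}\,dr=\frac{\theta_{\boldsymbol x}}{2\pi}\,\delta^\beta\to\frac{\theta_{\boldsymbol x}}{2\pi}.$$
The far part is bounded by $\frac{\beta}{2\pi}\cdot 2\pi\int_\delta^R r^{\beta-1}\,dr=R^\beta-\delta^\beta\to0$. Equivalently, one can squeeze the whole integral between the sector contribution and that contribution plus an annulus term.

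Combining with $\beta/(2\pi\gamma(\beta))\cdot 2\pi\to1$, i.e.\ $\gamma(\beta)\sim 2\pi/\beta$, the limit equals $\theta_{\boldsymbol x}/(2\pi)$. This gives $1$ or $0$ in (a), $\frac12$ in (b), and $\frac{\alpha}{2\pi}$ in (c). The inequalities with $\mathbf 1_P(\boldsymbol x)=0$ on $\partial P$ (since $P$ is open) then follow because $\frac12\neq0$ and $\alpha\in(0,\pi)$ gives $\frac{\alpha}{2\pi}\neq0$.
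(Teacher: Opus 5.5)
Your proof is correct. Its skeleton matches the paper's: the leading-order asymptotic $\gamma(\beta)\sim 2\pi/\beta$, plus the fact that a sector of angle $\theta$ at $\boldsymbol{x}$ contributes $\theta/(2\pi)$ in the limit. For interior points the paper does exactly your split into $B(\boldsymbol{x},\delta)$ and its complement. For exterior points it uses the cruder bound $|\boldsymbol{x}-\boldsymbol{y}|\ge d$.

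The real difference is at boundary points. The paper works this out for the square and then sketches it for general $P$. It squeezes $I_\beta\mathbf{1}_P(\boldsymbol{x})$ between an inscribed sector and a \emph{circumscribed} sector of the same angle, both evaluated by its sector lemma. This means no remainder term ever has to be estimated. The price is that the upper bound needs $P$ to lie globally inside the cone at $\boldsymbol{x}$, which is where convexity enters.

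You instead take the exact local sector and control everything else with the annulus bound $R^\beta-\delta^\beta\to 0$. That uses only the boundedness of $P$ and its local sector structure at $\boldsymbol{x}$. As a result, one computation covers all four cases and treats a general convex polygon directly rather than by analogy with the square. It would also apply verbatim to non-convex polygons, including reflex vertices.

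One notational slip: as written, $\beta/(2\pi\gamma(\beta))\cdot 2\pi=\beta/\gamma(\beta)$ tends to $0$, not $1$. You mean $\beta\gamma(\beta)/(2\pi)\to 1$, which is what your ``i.e.'' clause correctly states.
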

 
\begin{remark}\label{zyr-2}
Theorem \ref{thm-polygon} indicates that, even for non-Schwartz functions, the limiting behavior of the classical Riesz potential when parameters approach critical values may be discontinuous. This characteristic further indicates that, in image encryption, if there is a slight deviation in the parameter $\beta$ of $I_\beta$, the original information at the image edges and corners becomes extremely difficult to recover, thereby significantly enhancing the security and the key sensitivity of the ciphertext. 
\end{remark}

To prove Theorem \ref{thm-polygon}, we need the following two lemmas.
In what follows, for any $\beta \in (0,2)$, $o(\beta^k)$ denotes the \textit{higher-order infinitesimal}, which means $\lim_{\beta\to 0^+}\frac {o(\beta^k)} {\beta^k}=0.$

\begin{lemma}\label{lem-gamma-limit}
Let $\beta \in (0,2)$ and $\frac{1}{\gamma(\beta)}:= \frac{1}{\pi} 2^{-\beta} \frac{\Gamma(1 - \frac{\beta}{2})}{\Gamma(\frac{\beta}{2})}$. Then
\begin{equation}\label{eqbz1}
\frac{1}{ \gamma(\beta)} = \frac{1}{\pi  2^\beta} \left[ \frac{\beta}{2} + \frac{\gamma \beta^2}{2} + o(\beta^3) \right],
\end{equation}
where $\gamma \approx 0.577$ is the Euler-Mascheroni constant.  
\end{lemma}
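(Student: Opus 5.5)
The plan is to rewrite the ratio of Gamma functions so that the factor $\frac{\beta}{2}$ appears explicitly, and then to expand the remaining ratio around $\beta=0$ using the Taylor series of $\ln\Gamma$ near $1$. We keep the prefactor $\frac{1}{\pi 2^{\beta}}$ untouched, since it already appears on the right-hand side of \eqref{eqbz1}.

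The first step uses the functional equation $\Gamma(1+z)=z\Gamma(z)$ with $z=\frac{\beta}{2}$, which gives $\Gamma(\frac{\beta}{2})=\frac{2}{\beta}\Gamma(1+\frac{\beta}{2})$. Hence
$$\frac{1}{\gamma(\beta)}=\frac{1}{\pi 2^{\beta}}\cdot\frac{\beta}{2}\cdot\frac{\Gamma(1-\frac{\beta}{2})}{\Gamma(1+\frac{\beta}{2})}.$$
This isolates the singular behaviour of $\Gamma$ at the origin: the remaining quotient is analytic near $\beta=0$ and equals $1$ there.

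The second step expands that quotient. I would use the classical series, valid for $|z|<1$,
$$\ln\Gamma(1+z)=-\gamma z+\sum_{k\geq2}\frac{(-1)^k\zeta(k)}{k}z^k .$$
Then $\ln\Gamma(1-z)-\ln\Gamma(1+z)$ keeps only the odd powers, namely $2\gamma z+\frac{2}{3}\zeta(3)z^3+\cdots$. As an alternative that avoids zeta values, one can use $\Gamma'(1)=-\gamma$ together with the fact that this difference is an odd analytic function of $z$. Setting $z=\frac{\beta}{2}$ gives
$$\frac{\Gamma(1-\frac{\beta}{2})}{\Gamma(1+\frac{\beta}{2})}=\exp\!\left(\gamma\beta+O(\beta^3)\right)=1+\gamma\beta+\frac{\gamma^2\beta^2}{2}+O(\beta^3).$$
The oddness is what makes the first correction in the exponent of order $\beta^3$ rather than $\beta^2$.

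The third step multiplies this expansion by $\frac{\beta}{2}$. This yields $\frac{\beta}{2}+\frac{\gamma\beta^2}{2}+\frac{\gamma^2\beta^3}{4}+O(\beta^4)$, which gives exactly the first two terms of \eqref{eqbz1}.

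The main point needing care is the size of the remainder. The computation shows that the error after the two displayed terms is $\frac{\gamma^2}{4}\beta^3+O(\beta^4)$. This is $O(\beta^3)$, and indeed $o(\beta^2)$, but it is not literally $o(\beta^3)$. I would therefore state and prove the remainder as $O(\beta^3)$, or equivalently add the explicit term $\frac{\gamma^2\beta^3}{4}$ before an $o(\beta^3)$ remainder. Either version suffices for the intended use in Theorem \ref{thm-polygon}, where only the leading behaviour $\frac{1}{\gamma(\beta)}\sim\frac{\beta}{2\pi}$ as $\beta\to0^+$ matters.
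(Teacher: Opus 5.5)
Your proof is correct, and it handles the pole of $\Gamma(\frac{\beta}{2})$ differently from the paper. The paper expands $\Gamma(1-\frac{\beta}{2})$ by its Taylor series and $\Gamma(\frac{\beta}{2})$ by its Laurent series, taking the second-order coefficients (including the $\frac{\pi^2}{48}$ and $\frac{\pi^2}{24}$ terms) from tables. It then divides the two truncated series using a geometric series. You instead remove the pole exactly via $\Gamma(\frac{\beta}{2})=\frac{2}{\beta}\Gamma(1+\frac{\beta}{2})$, and expand the analytic quotient $Q(\beta):=\Gamma(1-\frac{\beta}{2})/\Gamma(1+\frac{\beta}{2})$ through its logarithm.

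Your route needs only $\Gamma'(1)=-\gamma$ and the oddness of $\ln\Gamma(1-z)-\ln\Gamma(1+z)$, equivalently $Q(\beta)Q(-\beta)=1$. It also explains a cancellation that happens without comment in the paper. After factoring out $\beta$ and $2$, the paper's numerator and denominator carry the same second-order coefficient $\frac{\gamma^2}{8}+\frac{\pi^2}{48}$, so every $\pi^2$ contribution drops out. The paper's approach is more hands-on, but it forces the two truncated series to be divided by hand, and that is where it slips.

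Your remark on the remainder is right, and it corrects the statement rather than weakening your argument. Carried out to third order, the paper's own division gives $\frac{\beta}{2}\bigl(1+\gamma\beta+\frac{\gamma^2}{2}\beta^2\bigr)+O(\beta^4)$. The term $\frac{\gamma^2}{4}\beta^3$ is simply dropped there, and the paper's intermediate remainders $o(\beta^3)$ and $o(\beta^2)$ should likewise be $O(\cdot)$. Numerically, at $\beta=0.1$ the gap between $\Gamma(1-\frac{\beta}{2})/\Gamma(\frac{\beta}{2})$ and $\frac{\beta}{2}+\frac{\gamma\beta^2}{2}$, divided by $\beta^3$, is about $0.09$. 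This is consistent with the limit $\frac{\gamma^2}{4}\approx 0.083$.

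So \eqref{eqbz1} holds with $O(\beta^3)$, or with the explicit term $\frac{\gamma^2}{4}\beta^3$ before an $o(\beta^3)$, but not as written. As you note, Lemma \ref{lem-sector-limit} and Theorem \ref{thm-polygon} only use $\frac{1}{\gamma(\beta)}=\frac{\beta}{2\pi}\bigl(1+O(\beta)\bigr)$, so nothing downstream is affected.
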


\begin{proof}
The function $\Gamma(1 - \frac{\beta}{2})$ is analytic near $\beta=0$ and has the following Taylor expansion 
\begin{equation}\label{o1}
\Gamma\left(1 - \frac{\beta}{2}\right) = 1 + \frac{\gamma\beta}{2} + \left(\frac{\gamma^2}{8} + \frac{\pi^2}{48}\right)\beta^2 + o(\beta^3)
\end{equation}
 (see \cite[Section 8.32]{gr14}). The function $\Gamma(\frac{\beta}{2})$ has a simple pole at $\beta = 0$ and its Laurent expansion is  
\begin{equation}\label{o2}
\Gamma\left(\frac{\beta}{2}\right) = \frac{2}{\beta} - \gamma + \left(\frac{\gamma^2}{4} + \frac{\pi^2}{24}\right)\beta + o(\beta^2)
\end{equation}
(see \cite[Section 8.32]{gr14}). Combining \eqref{o1} and \eqref{o2} and applying the geometric series expansion that $\frac{1}{2-u} = \frac{1}{2}\sum_{k\in\mathbb{Z}_+}(\frac{u}{2})^k$, we directly obtain
\begin{align*}
\frac{\Gamma(1 -\frac{\beta}{2})}{\Gamma(\frac{\beta}{2})} =
\frac {\beta + \frac{\gamma\beta^2}{2} + (\frac{\gamma^2}{8} + \frac{\pi^2}{48})\beta^3 + o(\beta^4)}{2 - \gamma \beta+ (\frac{\gamma^2}{4} + \frac{\pi^2}{24})\beta^2 + o(\beta^3)} 
= \frac{\beta}{2} + \frac{\gamma\beta^2}{2} + o(\beta^3) \text{.}
\end{align*}
Substituting this into the definition of $\frac{1}{ \gamma(\beta)}$ yields
$$\frac{1}{ \gamma(\beta)} = \frac{1}{\pi  2^\beta} \left[ \frac{\beta}{2} + \frac{\gamma \beta^2}{2} + o(\beta^3) \right],$$
which completes the proof of Lemma \ref{lem-gamma-limit}.
\end{proof}

\begin{lemma}\label{lem-sector-limit}
Let $\beta \in (0,2)$ and ${\rm Sec}(\boldsymbol{x}, R, \theta)$ denote the {\rm sectorial region} in $\mathbb{R}^2$ with vertex $\boldsymbol{x}$, radius $R$, and central angle $\theta$. Then 
\begin{equation}\label{sa}
\lim_{\beta\to 0^+} I_\beta \mathbf{1}_{{\rm Sec}(\boldsymbol{x}, R, \theta)}(\boldsymbol{x}) = \frac{\theta}{2\pi}.
\end{equation}
\end{lemma}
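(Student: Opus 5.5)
The computation is explicit, so the plan is to evaluate the integral directly in polar coordinates centred at $\boldsymbol{x}$ and then let $\beta\to 0^+$ using Lemma \ref{lem-gamma-limit}. Here $n=2$, so the classical Riesz potential is
$$I_\beta \mathbf{1}_{{\rm Sec}(\boldsymbol{x}, R, \theta)}(\boldsymbol{x})=\frac{1}{\gamma(\beta)}\int_{{\rm Sec}(\boldsymbol{x}, R, \theta)}\frac{d\boldsymbol{y}}{|\boldsymbol{x}-\boldsymbol{y}|^{2-\beta}}.$$

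Write $\boldsymbol{y}=\boldsymbol{x}+r(\cos\phi,\sin\phi)$. The sector then becomes the product region $\{0<r<R\}\times\{\phi\in\phi_0+(0,\theta)\}$ for some initial angle $\phi_0$. The kernel depends only on $r$, and $d\boldsymbol{y}=r\,dr\,d\phi$. By Tonelli's theorem the integral factorizes:
$$\int_{{\rm Sec}(\boldsymbol{x}, R, \theta)}\frac{d\boldsymbol{y}}{|\boldsymbol{x}-\boldsymbol{y}|^{2-\beta}}=\theta\int_0^R r^{\beta-1}\,dr=\frac{\theta R^\beta}{\beta}.$$
This is finite because $\beta>0$. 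This radial symmetry is the only geometric input, and whether the sector is open or closed does not matter, since the bounding rays have measure zero.

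Multiplying by the normalization constant and applying \eqref{eqbz1} gives
$$I_\beta \mathbf{1}_{{\rm Sec}(\boldsymbol{x}, R, \theta)}(\boldsymbol{x})=\frac{\theta R^\beta}{\beta}\cdot\frac{1}{\pi 2^\beta}\left[\frac{\beta}{2}+\frac{\gamma\beta^2}{2}+o(\beta^3)\right]=\frac{\theta R^\beta}{2\pi\, 2^\beta}\bigl[1+\gamma\beta+o(\beta^2)\bigr].$$
As $\beta\to0^+$ we have $R^\beta\to1$ and $2^{-\beta}\to1$, and the bracket tends to $1$. The limit is therefore $\frac{\theta}{2\pi}$, which is \eqref{sa}.

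The only delicate point is that the $\beta^{-1}$ blow-up of the radial integral must cancel exactly against the first-order zero of $1/\gamma(\beta)$. To avoid relying on the higher-order terms in Lemma \ref{lem-gamma-limit}, I would check this cancellation directly: $\Gamma(\beta/2)\sim 2/\beta$ and $\Gamma(1-\beta/2)\to1$, so $\beta/\gamma(\beta)\to 1/(2\pi)$. This needs only the leading term and already yields the claim.
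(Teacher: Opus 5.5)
Your proposal is correct and follows the paper's proof step for step: polar coordinates centred at $\boldsymbol{x}$ reduce the integral to $\theta R^{\beta}/\beta$, and the expansion \eqref{eqbz1} of $1/\gamma(\beta)$ cancels the factor $\beta^{-1}$ to give $\frac{\theta}{2\pi}$ (the paper simply translates to $\boldsymbol{x}=\boldsymbol{0}$ first). There is one slip in your closing remark: the leading-order fact you need is $\frac{1}{\beta\gamma(\beta)}\to\frac{1}{2\pi}$, i.e.\ $\beta\gamma(\beta)\to 2\pi$, not $\beta/\gamma(\beta)\to\frac{1}{2\pi}$, since the latter tends to $0$; with that fixed, $\Gamma(\frac{\beta}{2})\sim\frac{2}{\beta}$ and $\Gamma(1-\frac{\beta}{2})\to 1$ are indeed all you need.
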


\begin{proof}
By the translation invariance of the Riesz potential, we may assume $\boldsymbol{x} = \boldsymbol{0}$. For simplicity of presentation, let $S := {\rm Sec}(\boldsymbol{0}, R, \theta)$. For $\beta \in (0,2)$, using polar coordinates we obtain
\begin{align*}
I_\beta \mathbf{1}_S(\boldsymbol{0})
= \frac{1}{\gamma(\beta)} \int_{S} \frac{1}{|\boldsymbol{y}|^{2-\beta}}\,d\boldsymbol{y}
= \frac{1}{\gamma(\beta)} \int_{0}^{\theta} \int_{0}^{R} \frac{1}{r^{2-\beta}}\, r\,dr\,d\phi = \frac{1}{\gamma(\beta)} \, \theta \, \frac{R^{\beta}}{\beta}.
\end{align*}
Combining this and \eqref{eqbz1} yields
\begin{align*}
\lim_{\beta\to 0^+} I_\beta \mathbf{1}_{{S}}(\boldsymbol{x})
=\lim_{\beta\to 0^+} \frac{\theta}{\pi}\, \frac{R^{\beta}}{2^{\beta}} \left[\frac{1}{2} + \frac{\gamma}{2}\beta + o(\beta^{2})\right]
= \frac{\theta}{2\pi}.
\end{align*}
This finishes the proof of Lemma \ref{lem-sector-limit}.
\end{proof}

\begin{proof}[Proof of Theorem \ref{thm-polygon}]
We first give a detailed proof for the \textit{open unit square $S: = (-1,1) \times (-1,1)$} and then indicate how this topological localization and sectorial bounding techniques utilized for $S$ can also be applied to any convex polygon by merely adjusting the sector angles.

\textbf{Proof of (a) for $\mathbf{1}_S$.} We divide the proof into interior points and exterior points of the square $S$.

\textbf{Case $\boldsymbol{x} \in S$ (Interior points).}  Choose $\delta\in(0,1)$ sufficiently small such that $B(\boldsymbol{x},\delta)\subset S$. We decompose the integral domain defining $I_\beta  \mathbf{1}_S(\boldsymbol{x})$ into local and far components at $\boldsymbol{x}$  as follows
\begin{align*}
I_\beta  \mathbf{1}_S(\boldsymbol{x})&=\frac1{\gamma(\beta)}\int_{S}
\frac1{|\boldsymbol{x}-\boldsymbol{y}|^{2-\beta}}\,d\boldsymbol{y}\\
&=\frac1{\gamma(\beta)}\left[\int_{B(\boldsymbol{x},\delta)}\frac1{|\boldsymbol{x}-\boldsymbol{y}|^{2-\beta}}\,d\boldsymbol{y}
+\int_{S\setminus B(\boldsymbol{x},\delta)}\cdots\,d\boldsymbol{y}\right]
=:\frac1{\gamma(\beta)}({\rm J}_1+{\rm J}_2).
\end{align*}
For ${\rm J}_1$, we apply the change of variables $\boldsymbol{z}:=\boldsymbol{y}-\boldsymbol{x}$ followed by a polar coordinate transformation to obtain
\begin{equation}\label{e2.6}
{\rm J}_1=\int_{B(\boldsymbol{0},\delta)}
\frac1{|\boldsymbol{z}|^{2-\beta}}d\boldsymbol{z}
=2\pi\int_0^\delta r^{\beta-1}dr=\frac{2\pi}{\beta}\delta^\beta.
\end{equation}
For ${\rm J}_2$, noting that $|\boldsymbol{x}-\boldsymbol{y}|\ge\delta$ for all $\boldsymbol{y} \in S\setminus B(\boldsymbol{x},\delta)$ and the domain of integration is bounded, having measure at most 4, we conclude that  
\begin{equation*}
0\le {\rm J}_2\le\int_{S\setminus B(\boldsymbol{x},\delta)}\frac1{\delta^{2-\beta}}d\boldsymbol{y}
\le\frac{4}{\delta^{2-\beta}}\le\frac {4}{\delta^2}.
\end{equation*}
From this and \eqref{eqbz1}, it follows that
$$0\le\lim_{\beta\to 0^+}\frac 1{\gamma(\beta)}{\rm J}_2\le\lim_{\beta\to 0^+}\frac 4{\delta^22^{\beta}}\left[\frac{\beta}{2} + \frac{\gamma\beta^2}{2} + o(\beta^3)\right]=0.$$
By \eqref{e2.6} and \eqref{eqbz1}, we find that 
$$\lim_{\beta\to 0^+}\frac 1{\gamma(\beta)}{\rm J}_1=\lim_{\beta\to 0^+}\frac{2}{\beta}\frac{\delta^\beta}{2^\beta}\left[\frac{\beta}{2} + \frac{\gamma\beta^2}{2} + o(\beta^3)\right]=1.$$
From the above both limits, we infer that, for any $\boldsymbol{x}\in S$,
\begin{equation}\label{eq321}
\lim_{\beta\to 0^+} I_\beta  \mathbf{1}_S(\boldsymbol{x}) = 1= \mathbf{1}_S(\boldsymbol{x}),
\end{equation} 
which completes the proof of \textbf{Case $\boldsymbol{x} \in S$ (Interior points).} 

\textbf{Case $\boldsymbol{x} \notin \overline{S}$ (Exterior points).} Since $\overline{S}$ is closed, it follows that $d := \inf_{\boldsymbol{y} \in S} |\boldsymbol{x} - \boldsymbol{y}| > 0$. For any $\boldsymbol{y} \in S$, we obvious have $|\boldsymbol{x} - \boldsymbol{y}| \ge d$. Thus, the integrand is bounded above by $d^{\beta-2}$, and hence
\begin{equation*}
0 \le I_\beta  \mathbf{1}_S(\boldsymbol{x}) =\frac1{\gamma(\beta)}\int_{S}\frac1{|\boldsymbol{x}-\boldsymbol{y}|^{2-\beta}}\,d\boldsymbol{y}
\le \frac{1}{\gamma(\beta)} \int_{S} \frac{1}{d^{2-\beta}} \, d\boldsymbol{y} = \frac{4}{\gamma(\beta)} d^{\beta-2},
\end{equation*}
which, combined with \eqref{eqbz1}, further implies that
$$0 \le \lim_{\beta\to 0^+}I_\beta  \mathbf{1}_S(\boldsymbol{x}) \le \lim_{\beta\to 0^+}\frac{4}{\pi 2^\beta} \left[ \frac{\beta}{2} + \frac{\gamma\beta^2}{2} + o(\beta^3) \right] d^{\beta-2}=0.$$
Consequently, for any $\boldsymbol{x} \notin \overline{S}$,
\begin{equation}\label{eq421}
\lim_{\beta\to 0^+} I_\beta  \mathbf{1}_S(\boldsymbol{x}) = 0 =  \mathbf{1}_S(\boldsymbol{x}),
\end{equation}
which completes the proof of \textbf{Case $\boldsymbol{x} \notin \overline{S}$ (Exterior points)}. 

Combining \eqref{eq321} and \eqref{eq421} thus completes the proof of (a).

\textbf{Proof of (b) for $\mathbf{1}_S$.} Without loss of generality, we may only examine a point on the right edge of $S$, $\boldsymbol{x} = (1, x_2)$, where $x_2 \in (-1, 1)$.
The internal angle of $S$ at $\boldsymbol{x}$ is $\pi$.
First, we establish the lower bound.
Choose the radius $R := \min\{1+x_2, 1-x_2, 2\}$ to construct a semi-circle $H_{\rm in}$ centered at $\boldsymbol{x}$ that is completely inscribed within $S$.
Since $H_{\rm in} \subset S$, we have $\mathbf{1}_{H_{\rm in}} \le \mathbf{1}_{S}$. By this, the positivity of $I_\beta$, and \eqref{sa}, we obtain 
\begin{equation}\label{eqq1}
\liminf_{\beta\to 0^+} I_\beta \mathbf{1}_S(\boldsymbol{x}) \ge \lim_{\beta\to 0^+} I_\beta \mathbf{1}_{H_{\rm in}}(\boldsymbol{x}) = \frac{\pi}{2\pi} = \frac{1}{2}.
\end{equation}
Next, we establish the upper bound. To this end, construct a semi-circle $H_{\rm out}$ centered at $\boldsymbol{x}$ facing the left half-plane with the radius $R = 2\sqrt{2}$. The square $S$ is entirely encompassed by this semi-circle, namely $S \subset H_{\rm out}$. Then, from this, the positivity of $I_\beta$, and \eqref{sa}, we deduce that 
\begin{equation}\label{eqq2}
\limsup_{\beta\to 0^+} I_\beta \mathbf{1}_S(\boldsymbol{x}) \le \lim_{\beta\to 0^+} I_\beta \mathbf{1}_{H_{\rm out}}(\boldsymbol{x}) = \frac{\pi}{2\pi} = \frac{1}{2}.
\end{equation}
Combining  \eqref{eqq1} and \eqref{eqq2} yields $\lim_{\beta\to 0^+} I_\beta \mathbf{1}_S(\boldsymbol{x})=\frac{1}{2} \neq \mathbf{1}_S(\boldsymbol{x})$,
which completes the proof of (b).

\textbf{Proof of (c) for $\mathbf{1}_S$.} Without loss of generality, we may only evaluate the top-right corner point of $S$, $\boldsymbol{x} = (1, 1)$.
The internal angle at this corner is $\frac\pi 2$.
For the lower bound, we construct a quarter-circle $Q_{\rm in}$ centered at $(1,1)$ with radius $R= 2$.
Clearly, $Q_{\rm in} \subset S$, which, combined with the positivity of $I_\beta$ and \eqref{sa}, implies that
\begin{equation}\label{eqq3}
\liminf_{\beta\to 0^+} I_\beta \mathbf{1}_S(\boldsymbol{x}) \ge \lim_{\beta\to 0^+} I_\beta \mathbf{1}_{Q_{\rm in}}(\boldsymbol{x}) = \frac{\pi/2}{2\pi} = \frac{1}{4}.
\end{equation}
For the upper bound, we construct an enlarged quarter-circle $Q_{\rm out}$ centered at $(1,1)$ with the radius $R = 2\sqrt{2}$, yielding $S \subset Q_{\rm out}$. Consequently, from this, the positivity of $I_\beta$, and \eqref{sa}, it follows that
\begin{equation}\label{eqq4}
\limsup_{\beta\to 0^+} I_\beta \mathbf{1}_S(\boldsymbol{x}) \le \lim_{\beta\to 0^+} I_\beta \mathbf{1}_{Q_{\rm out}}(\boldsymbol{x}) = \frac{\pi/2}{2\pi} = \frac{1}{4}.
\end{equation}
Combining  \eqref{eqq3} and \eqref{eqq4}, we conclude that $\lim_{\beta\to 0^+} I_\beta \mathbf{1}_S(\boldsymbol{x})=\frac{1}{4} \neq \mathbf{1}_S(\boldsymbol{x})$, which completes the proof of (c). 

\textbf{\textit{Generalization to Arbitrary Convex Polygons:}} For an arbitrary convex polygon $P$, the argument above applies perfectly. Because $P$ is strictly convex, it is entirely contained within the finite angular sector defined by the internal angle $\alpha$ extended from any vertex.
By replacing the $\frac\pi 2$ quarter-circles with bounded sectors of angle $\alpha$ (with radius sufficiently large to encompass $P$), the squeeze theorem yields $\frac{\alpha}{2\pi}$.
This universally confirms part (c) and finishes the proof for the general case and hence Theorem \ref{thm-polygon}.
\end{proof} 

\begin{remark} \label{the-spaces}
The phenomena observed in Theorems \ref{rem-imp-1} and \ref{thm-polygon} are not isolated cases but reflect the intrinsic analytic properties of the operators acting on specific function spaces. We generalize these observations to two fundamental spaces in image processing.

\begin{enumerate}[\rm (i)]
\item \textbf{Bounded Non-Decaying Function Space (Image Backgrounds).} 		This space represents the global background of all images, which consists of all functions having bounded pixel values that do not decay to zero at the image boundaries.

\item \textbf{Piecewise Smooth Function Space (Image Edges).}
This space models the edges and the contours within an image, which consists of all functions characterized by smooth regions separated by jump discontinuities, such as straight boundaries and sharp corners.

\end{enumerate}

To intuitively illustrate the distinct behaviors discussed above and their critical roles in our cryptosystem, we summarize the convergence analysis for image backgrounds and the sensitivity mechanism at image edges in Tables \ref{tab:convergence-comparison} and \ref{tab:sensitivity-mechanism}, respectively.

\begin{table}[H]
\caption{Comparison of Convergence Behaviors on Bounded Non-Decaying Backgrounds.}
\label{tab:convergence-comparison}
\centering
\setlength{\tabcolsep}{0.16cm}
\begin{tabular}{p{2.6cm} p{5.5cm} p{5.5cm}}
\toprule[1.5pt]
\textbf{Aspect} & \textbf{Classical Riesz Potential $I_\beta$} & \textbf{LCRP $I^{\boldsymbol{a},\boldsymbol{b}}_\beta$} \\
\midrule[1pt]
\textbf{Integral State} & \textbf{Divergence}  & \textbf{Convergence}  \\ [0.2cm]
\textbf{Math Reason} & Lack of decay factors causes infinite energy accumulation at infinity. & Oscillatory factors induce cancellation, regularizing the integral. \\ [0.2cm]
\textbf{Processing\newline Outcome} & Cannot compute potentials for images with non-zero backgrounds. & Effectively processes full-frame images with global backgrounds. \\
\bottomrule[1.5pt]
\end{tabular}
\end{table}

\begin{table}[H]
\caption{Mechanism of Key Sensitivity at Edges Based on Limit Discontinuity.}
\label{tab:sensitivity-mechanism}
\centering
\setlength{\tabcolsep}{0.183cm}
\begin{tabular}{p{2.708cm} p{5.45cm} p{5.45cm}}
\toprule[1.5pt]
\textbf{Aspect} & \textbf{Mathematical Phenomenon ($\beta \to 0$)} & \textbf{Cryptographic Consequence} \\
\midrule[1pt]
\textbf{Limit Outcome} & \textbf{Value Deviation}: Converges to geometric ratios (for example, $\frac{1}{2}$ for straight edges or $\frac{\alpha}{2\pi}$ for corners) instead of the true signal value $1$. & \textbf{Edge Destruction}: Sharp corners and sharp edges having no precise phase alignment blur into noise. \\ [0.7cm]
\textbf{Stability} & \textbf{Analytic Instability}: Reconstruction at singularities is discontinuous and highly sensitive to perturbations. & \textbf{High Key Sensitivity}: Guarantees theoretical robustness against key estimation attacks. \\
\bottomrule[1.5pt]
\end{tabular}
\end{table}	
\end{remark}
\subsection{Numerical Simulations\label{sec2.3}}
This subsection presents numerical simulations of the LCRP and the LCLO applied to images. Based on Theorem \ref{thm-frp}, Remark \ref{rm-LCRP}, Definition \ref{def-lclo}, and the fast LCT algorithm, together with the specially designed symbols for the LCRP and the LCLO (related numerical approaches can be found in the monograph \cite{ppst23}), we facilitate efficient computation of these operators. Through systematic simulation experiments using a controlled variable approach, we examines the effects of parameters $\boldsymbol{A}$ and $\beta$ in the LCRP $I^{\boldsymbol{A}}_\beta$ and the LCLO $\Delta_{\beta}^{\boldsymbol A}$ on images. The experimental results confirm that the LCRP and the LCLO operators achieve amplitude modulation in the LCT domain. Moreover, we identify quantitative patterns governing the influence of $\beta$ on imaging outcomes. This comparative analysis enhances the understanding of the LCRP and the LCLO operators, provides a basis for their applications in signal processing, and suggests new avenues for future research.

The numerical simulations in this section are performed using the test images displayed in Figure \ref{FIG3.1}(a) and (b). Subfigure (a) shows a $400 \times 400$ pixel grayscale image in two dimensions (for short, 2D), while Subfigure (b) presents the corresponding three-dimensional (for short, 3D) color representation. For continuous domain modeling, Subfigure (a) is represented by the \emph{Gaussian function} $G$ defined over the spatial domain $(x_1, x_2) \in [0, 400]^2$ as
$$G(x_1, x_2) := e^{ -\frac{(x_1 - 200)^2 + (x_2 - 200)^2}{2\sigma^2} },$$ 
where $\sigma = 50$ denotes the standard deviation  and $(x_1, x_2)$ denote spatial coordinates. The peak of the function $G$ is located at the center $(200, 200)$. The pixel intensities in Subfigure (a) are normalized to the range $[0, 1]$, a standard preprocessing step to ensure uniformity in visualization and analysis. Subfigures (c) and (d) in Figure \ref{FIG3.1} illustrate the amplitude distributions of the Gaussian image from Subfigure (a) in two different LCT domains. Specifically, the LCT parameter matrix used in Subfigure (c) is $\boldsymbol{A} := (A_1, A_2)$ with 
$A_1=\begin{bmatrix}{6}&{50}\\{0.7}&{6}
\end{bmatrix}$ and $A_2=\begin{bmatrix}
{3}&{400}\\{0.02}&{3}\end{bmatrix}$,
while for Subfigure (d) the parameter matrix is $\boldsymbol{B}:= (B_1, B_2)$ with
$B_1=\begin{bmatrix}{10}&{495}\\{0.2}&{10}
\end{bmatrix}$ and $B_2=\begin{bmatrix}
{1}&{20}\\{0}&{1}\end{bmatrix}$. 
A comparison of  Subfigures (c) and (d) clearly shows that the amplitude distribution of the same Gaussian image varies significantly under different LCT domains. This highlights the importance of selecting an appropriate LCT parameter matrix according to different image characteristics in practical applications.
 
\begin{figure}[H]
\centering
\subfigcapskip=-18pt
\subfigure[]{\includegraphics[width=0.37\linewidth]
{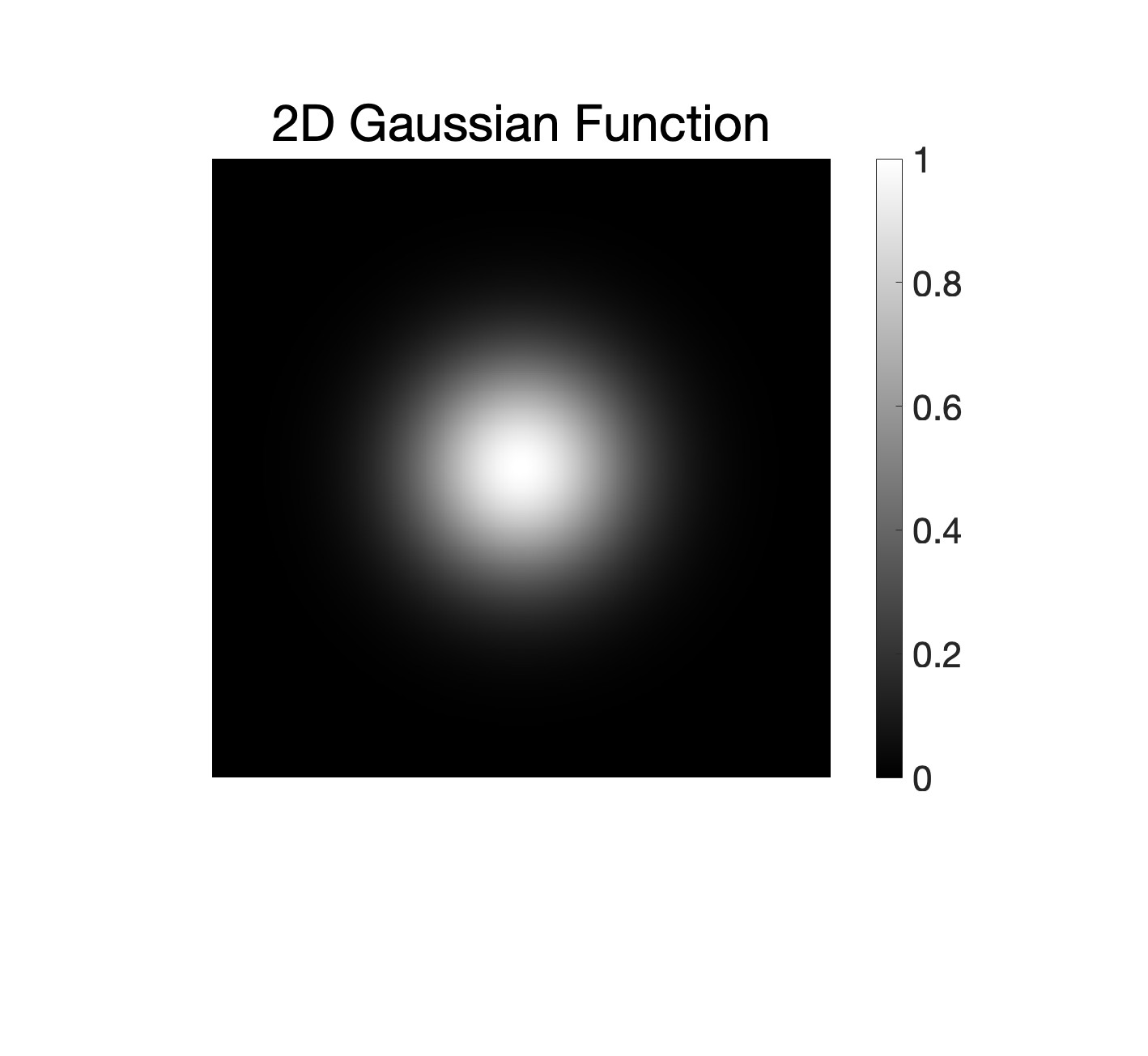}}\hspace{0.2cm}
\subfigure[]{\includegraphics[width=0.37\linewidth]
{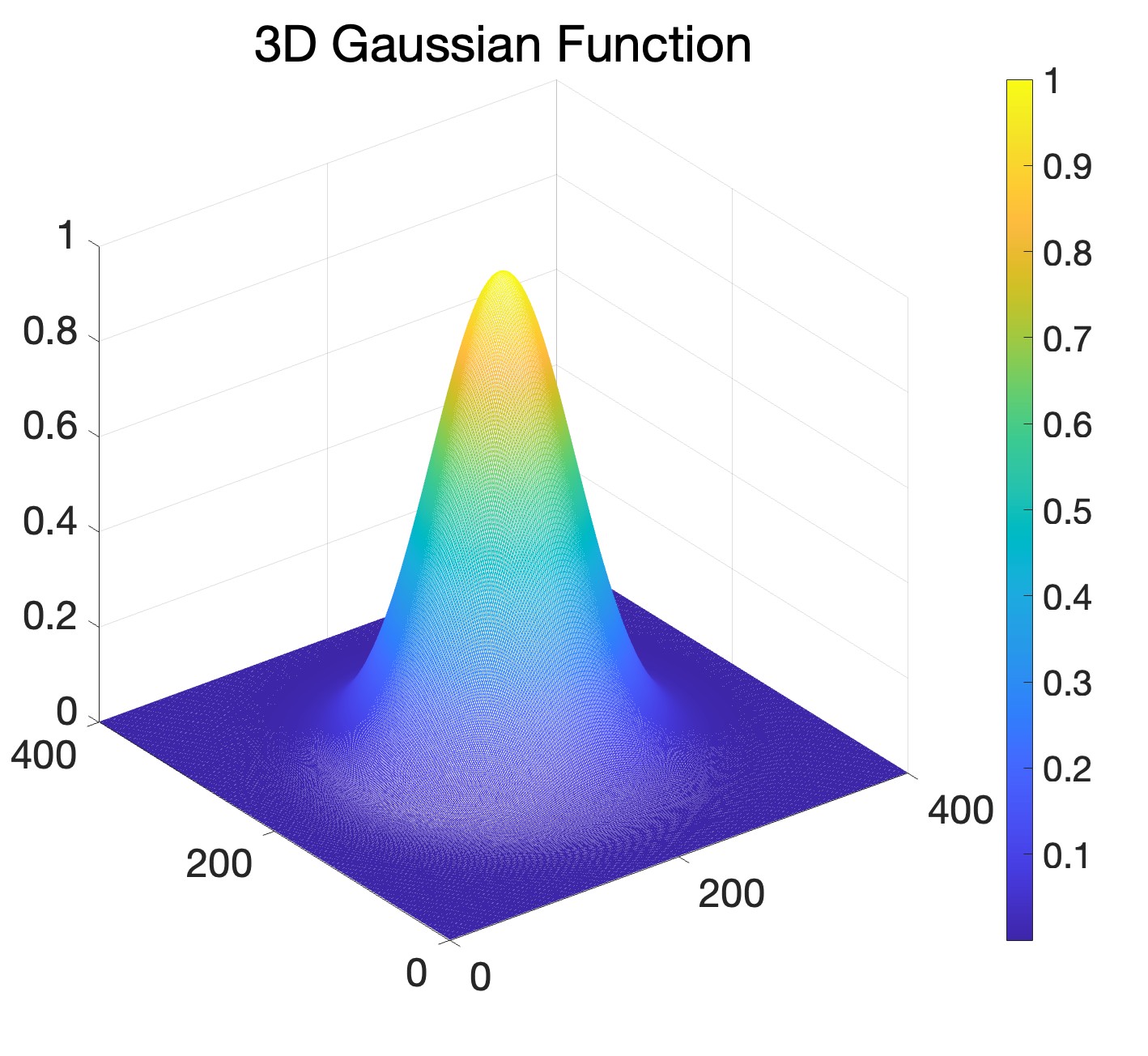}}\\ \vspace{0.5cm}
\subfigure[]{\includegraphics[width=0.37\linewidth]
{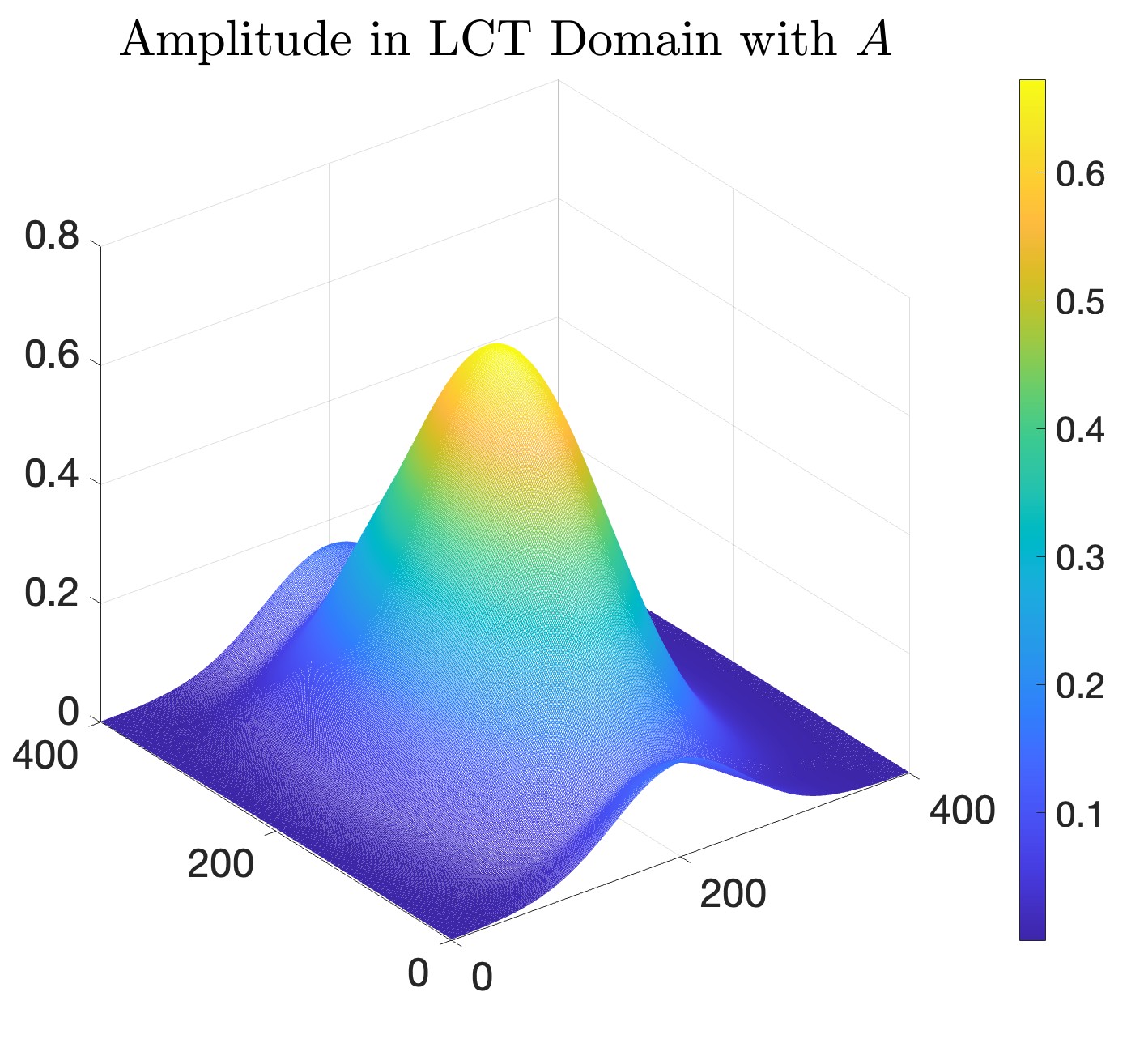}}\hspace{0.2cm}
\subfigure[]{\includegraphics[width=0.37\linewidth]
{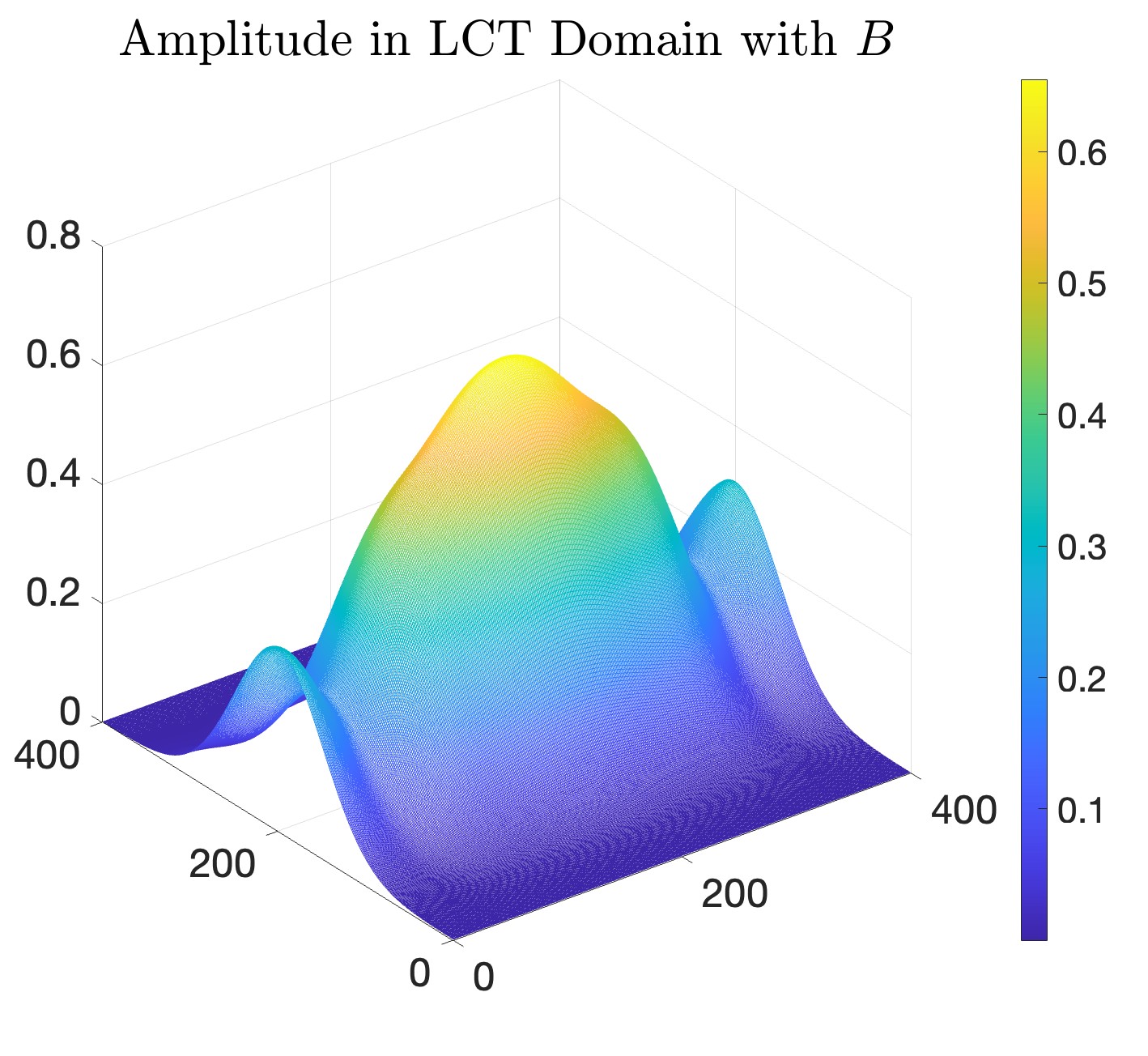}}
\vspace{-0.3cm}
\caption{Gaussian Function in the Spatial Domain 
and Different LCT Domains.}
\label{FIG3.1}
\end{figure}

 \begin{figure}[H]
\centering
\subfigcapskip=-10pt
\subfigure[]{\includegraphics[width=0.32\linewidth]
{ 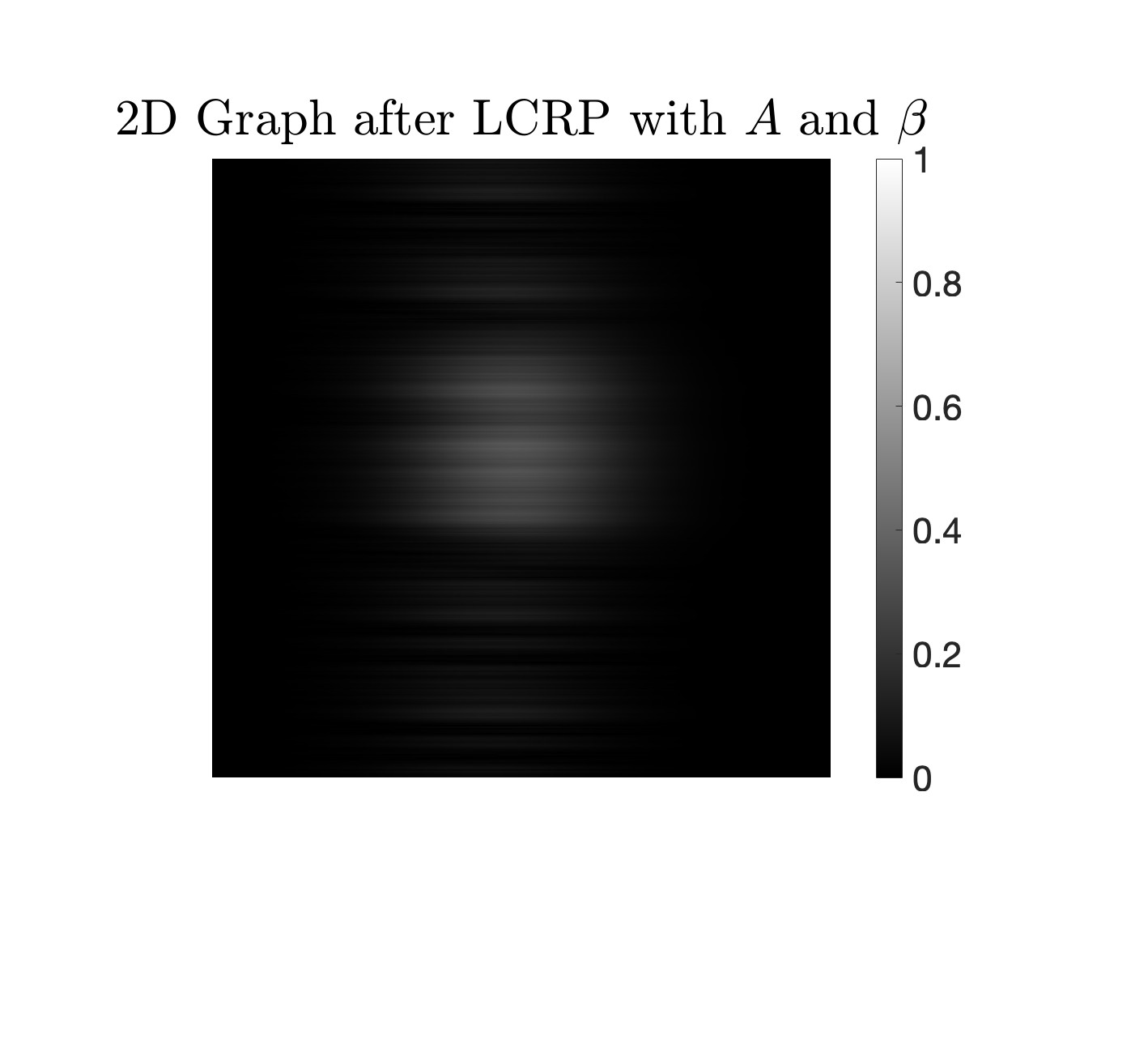}}\ \ \
\subfigure[]{\includegraphics[width=0.32\linewidth]
{ 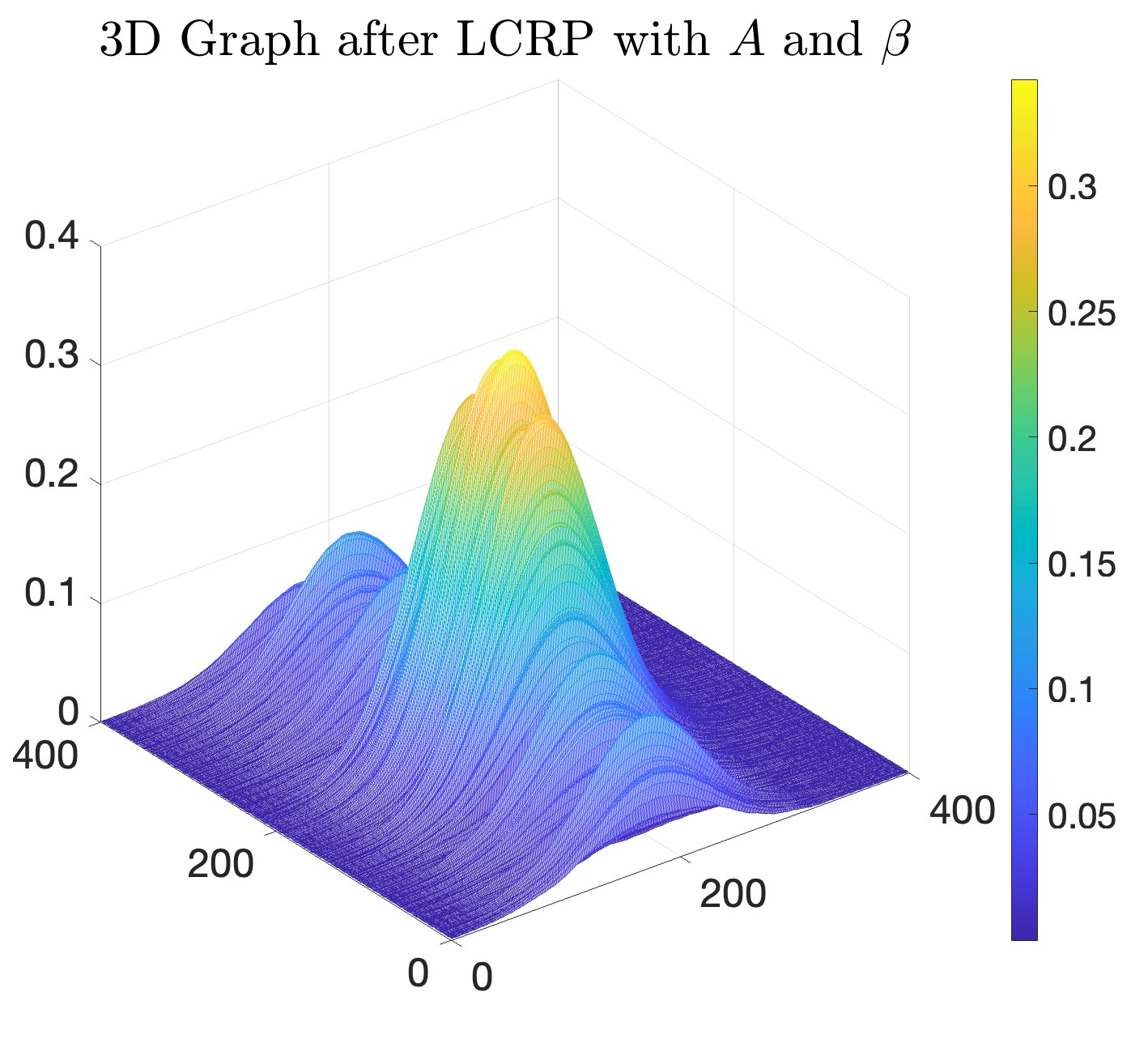}}\ \ \
\subfigure[]{\includegraphics[width=0.32\linewidth]
{ 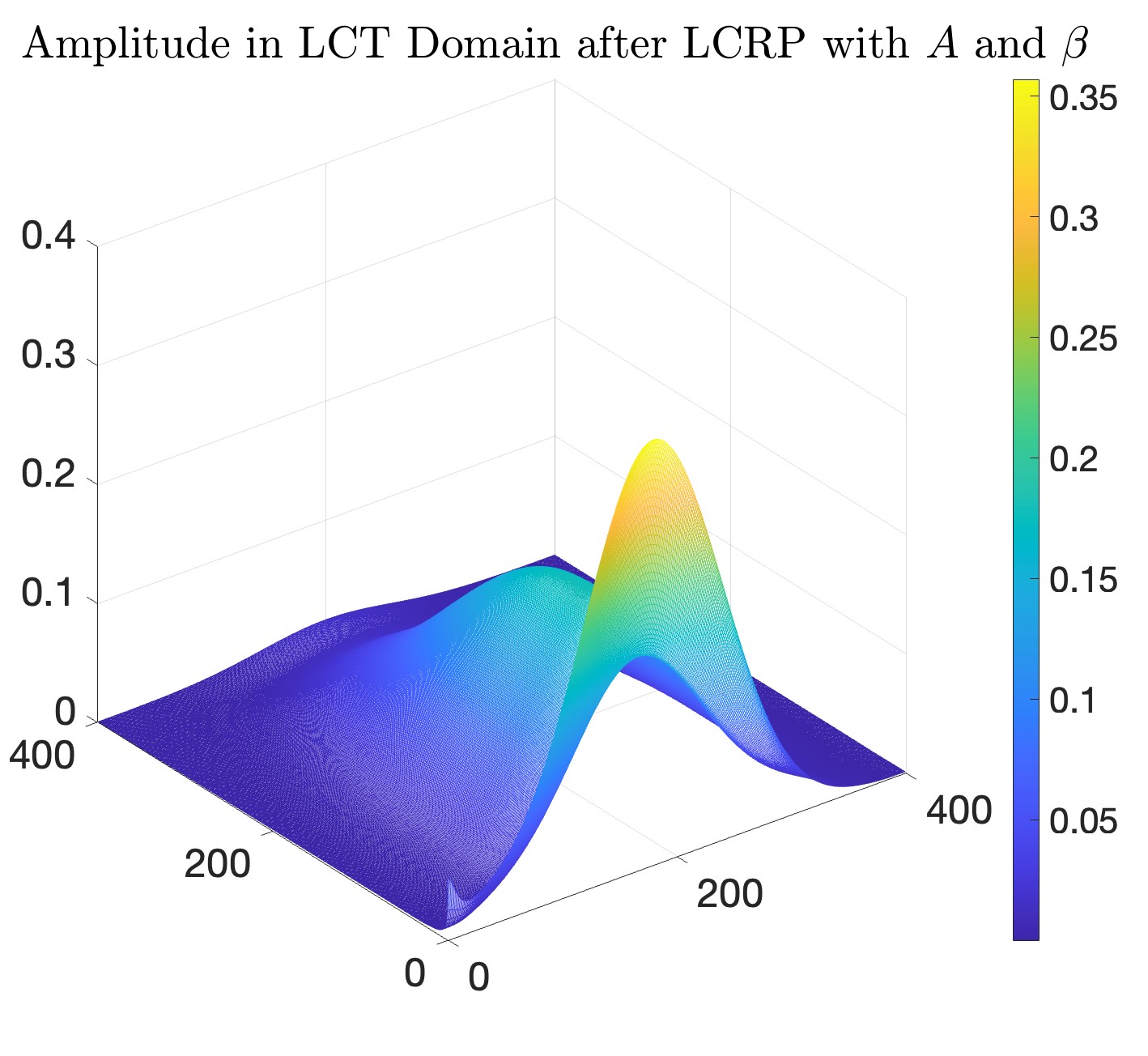}}\\\vspace{0.5cm}
\subfigure[]{\includegraphics[width=0.32\linewidth]
{ 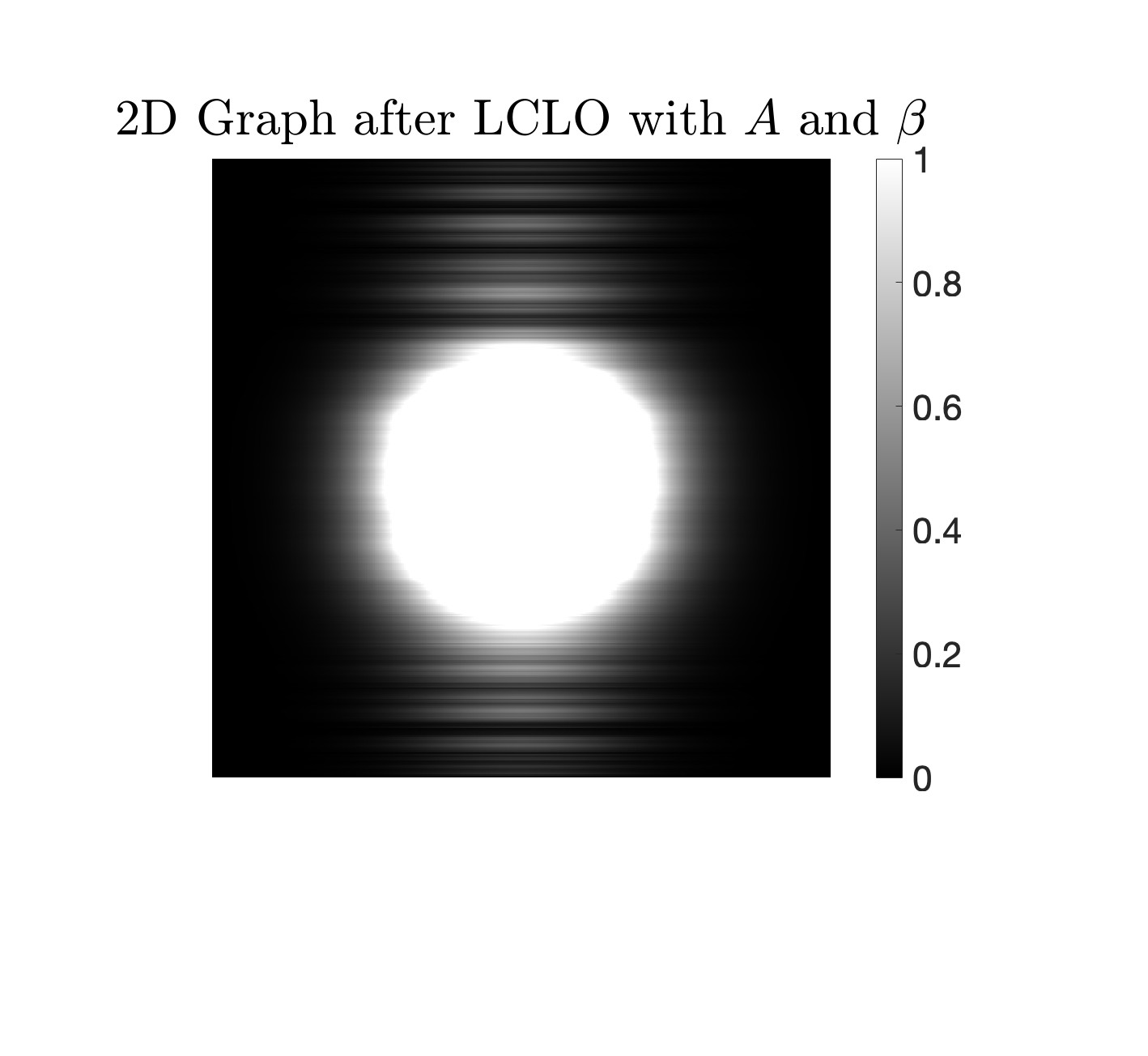}}\ \ \
\subfigure[]{\includegraphics[width=0.32\linewidth]
{ 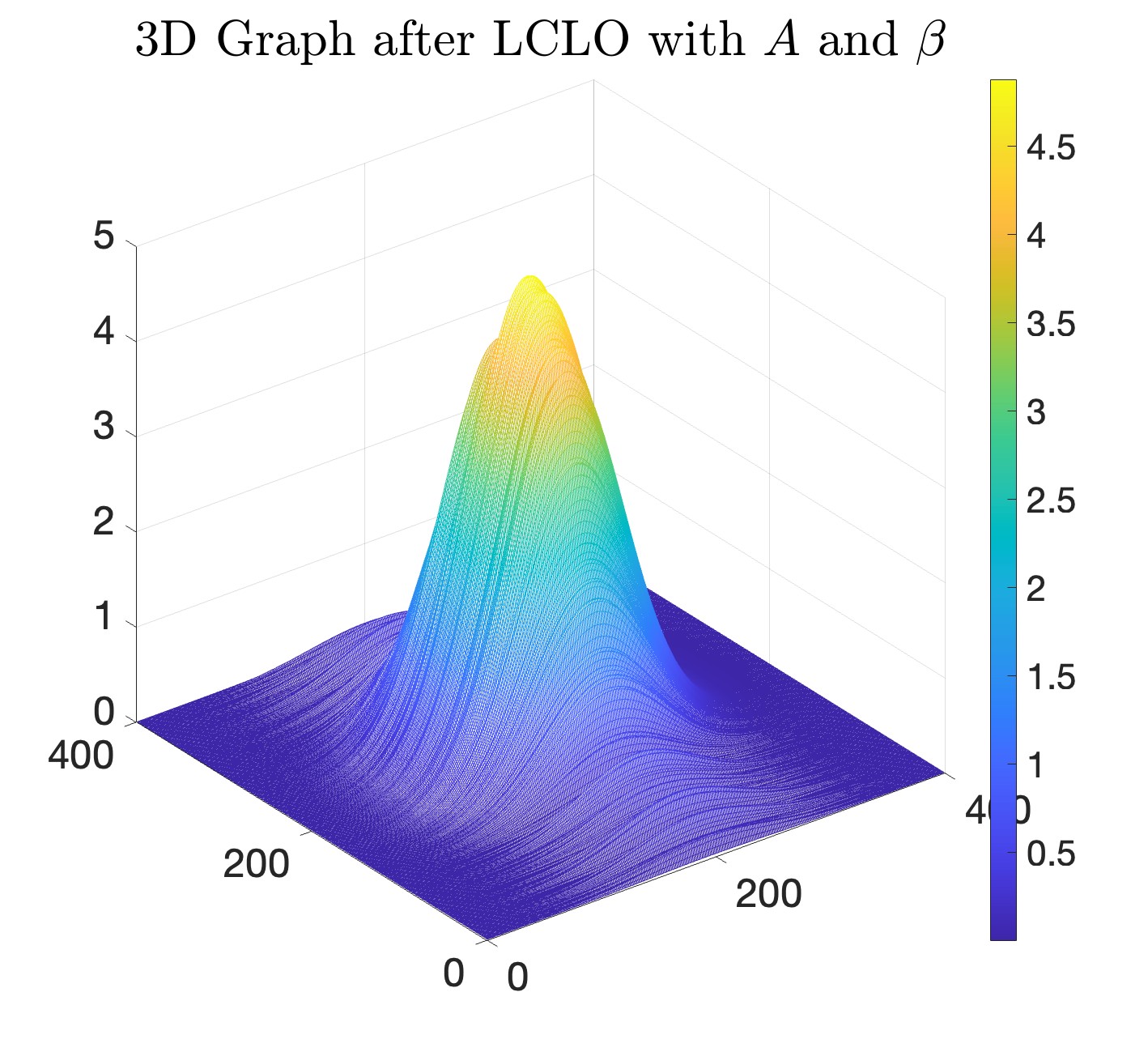}}\ \ \
\subfigure[]{\includegraphics[width=0.32\linewidth]
{ 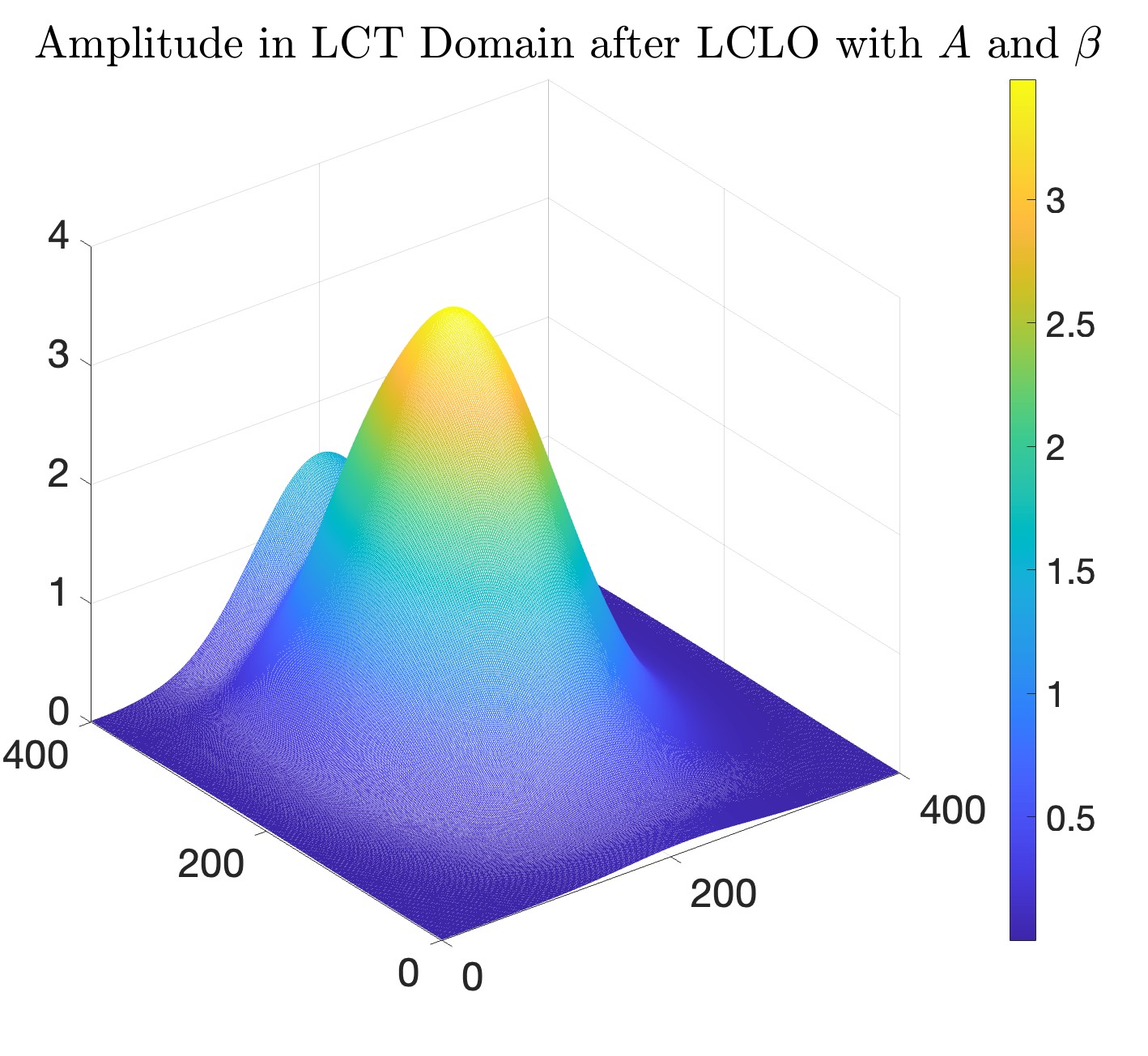}}
\vspace{-0.3cm}
\caption{LCRP and LCLO with $\boldsymbol A$ and $\beta=1.1$.}
\label{FIG3.2}
\end{figure}
 
\begin{figure}[H]
\centering
\subfigcapskip=-10pt
\subfigure[]{\includegraphics[width=0.321\linewidth]
{ 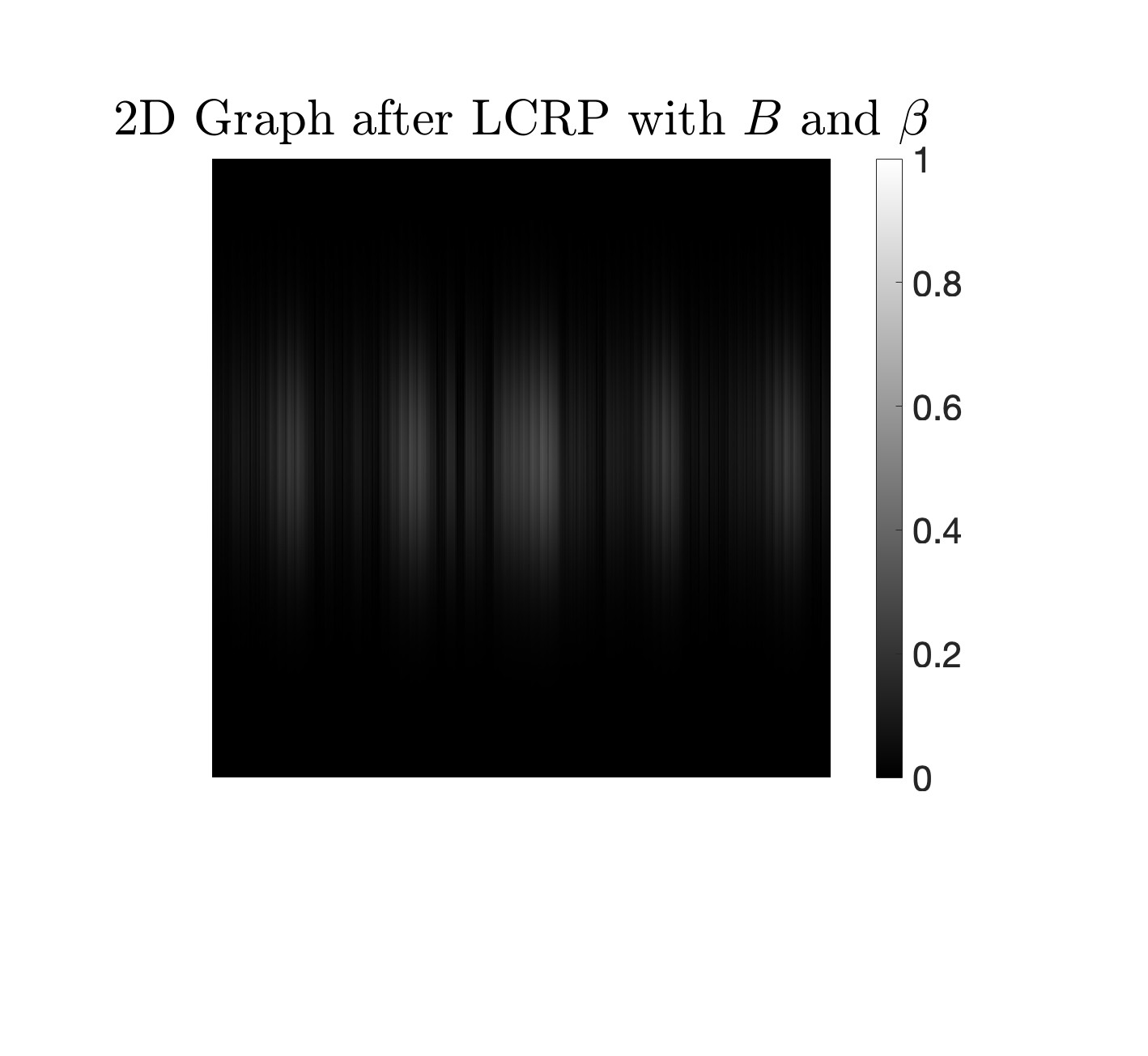}}\ \ \
\subfigure[]{\includegraphics[width=0.321\linewidth]
{ 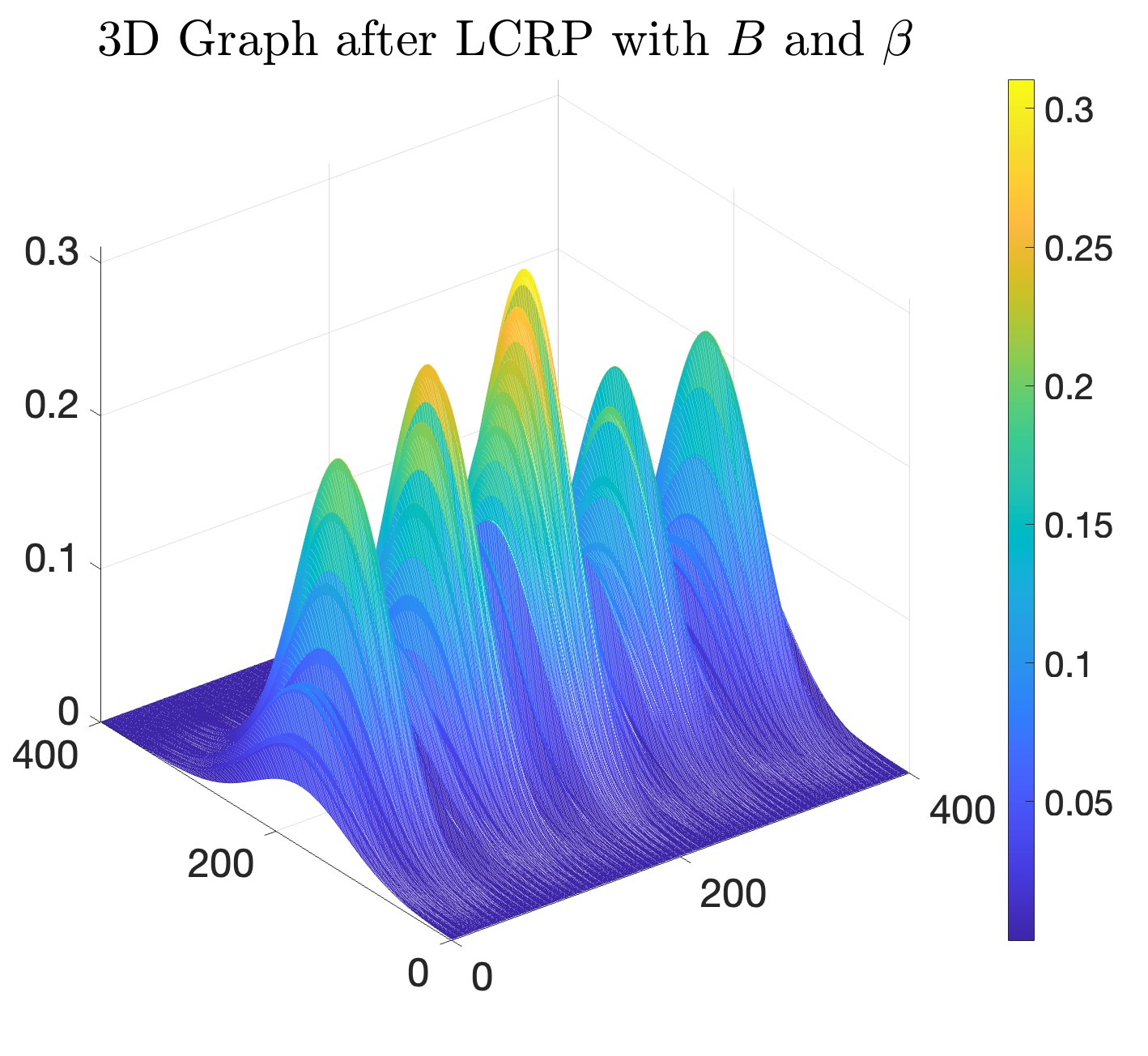}}\ \ \
\subfigure[]{\includegraphics[width=0.321\linewidth]
{ 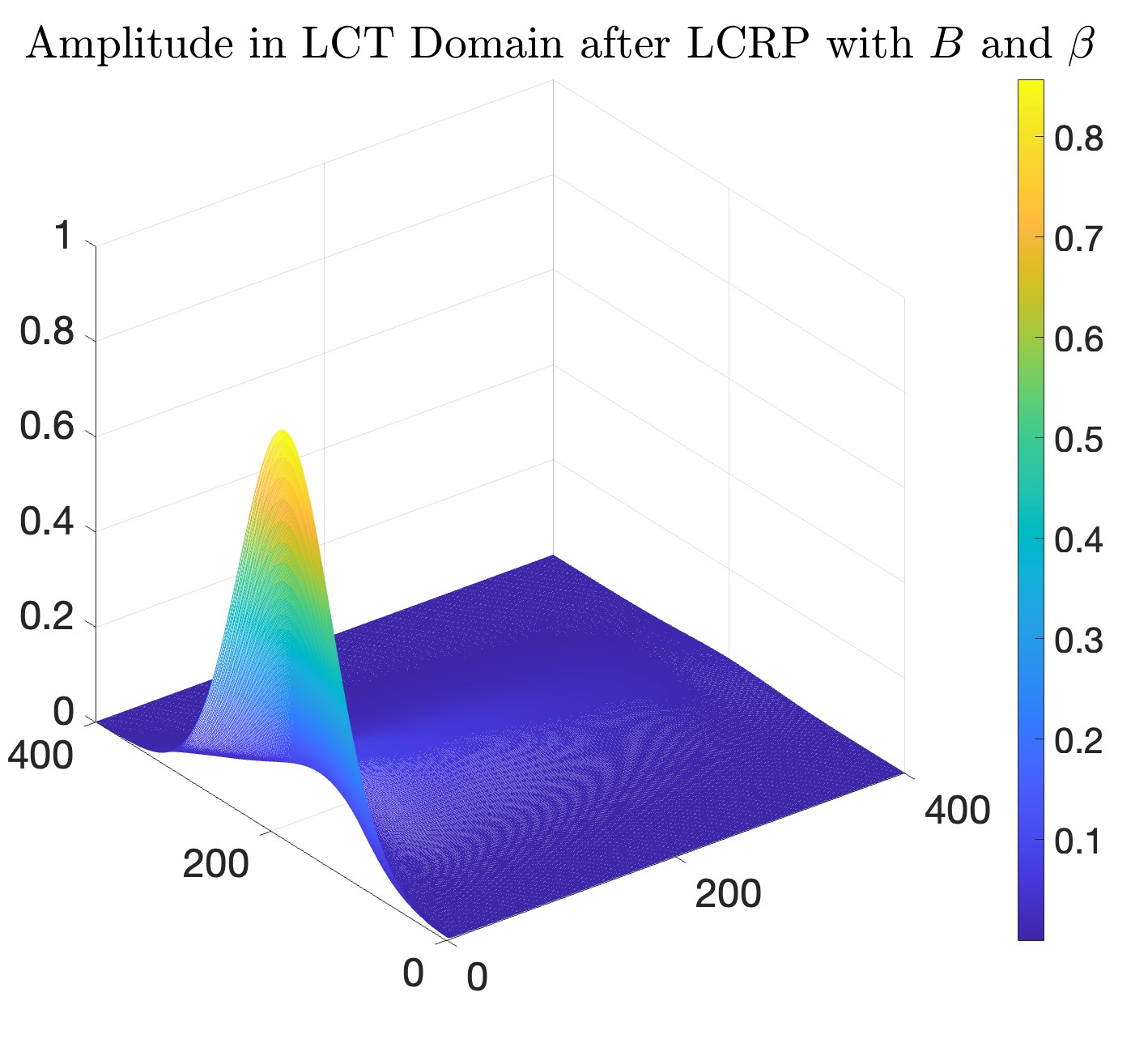}}\\
\vspace{0.5cm}
\subfigure[]{\includegraphics[width=0.321\linewidth]
{ 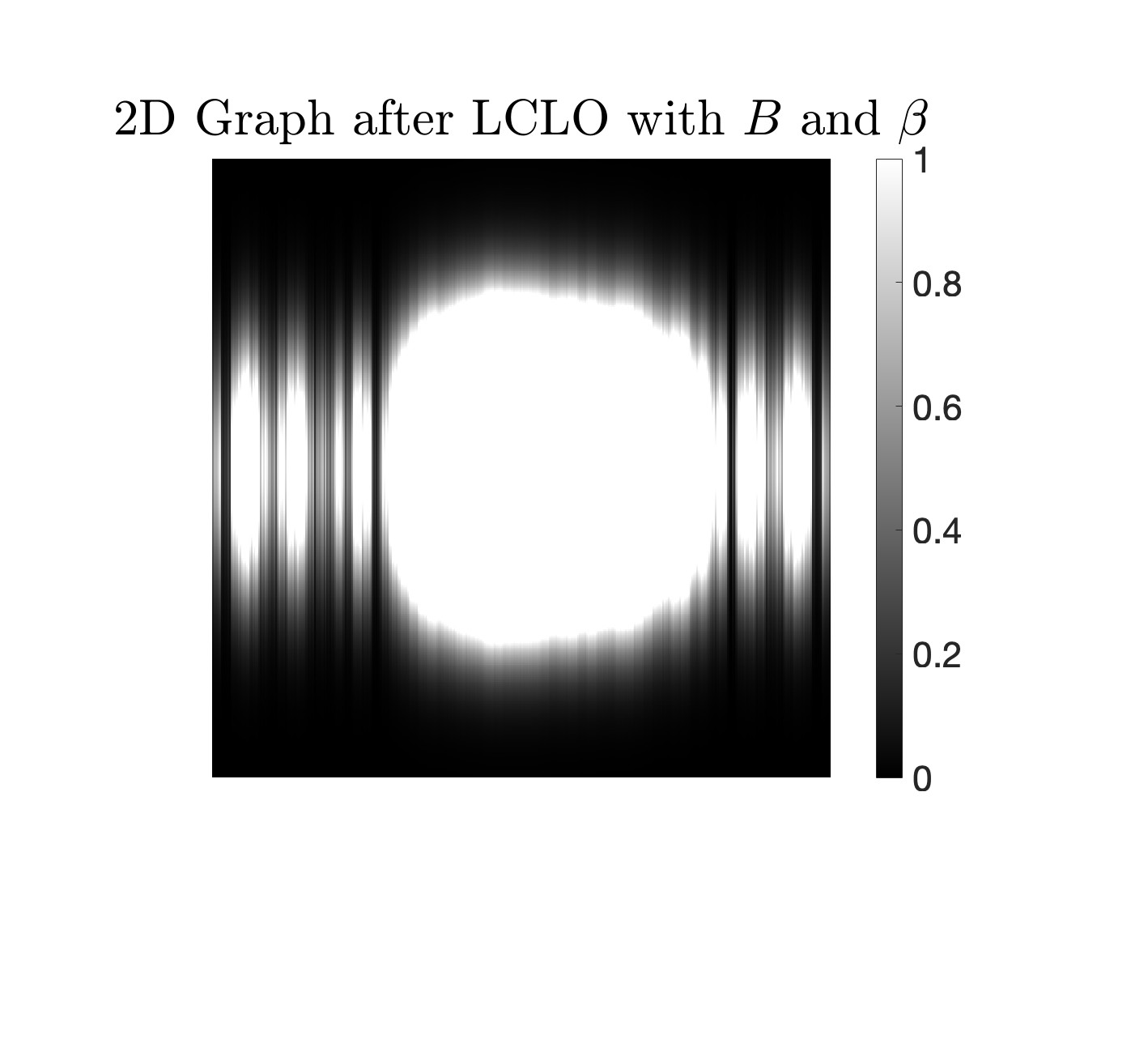}}\ \ \
\subfigure[]{\includegraphics[width=0.321\linewidth]
{ 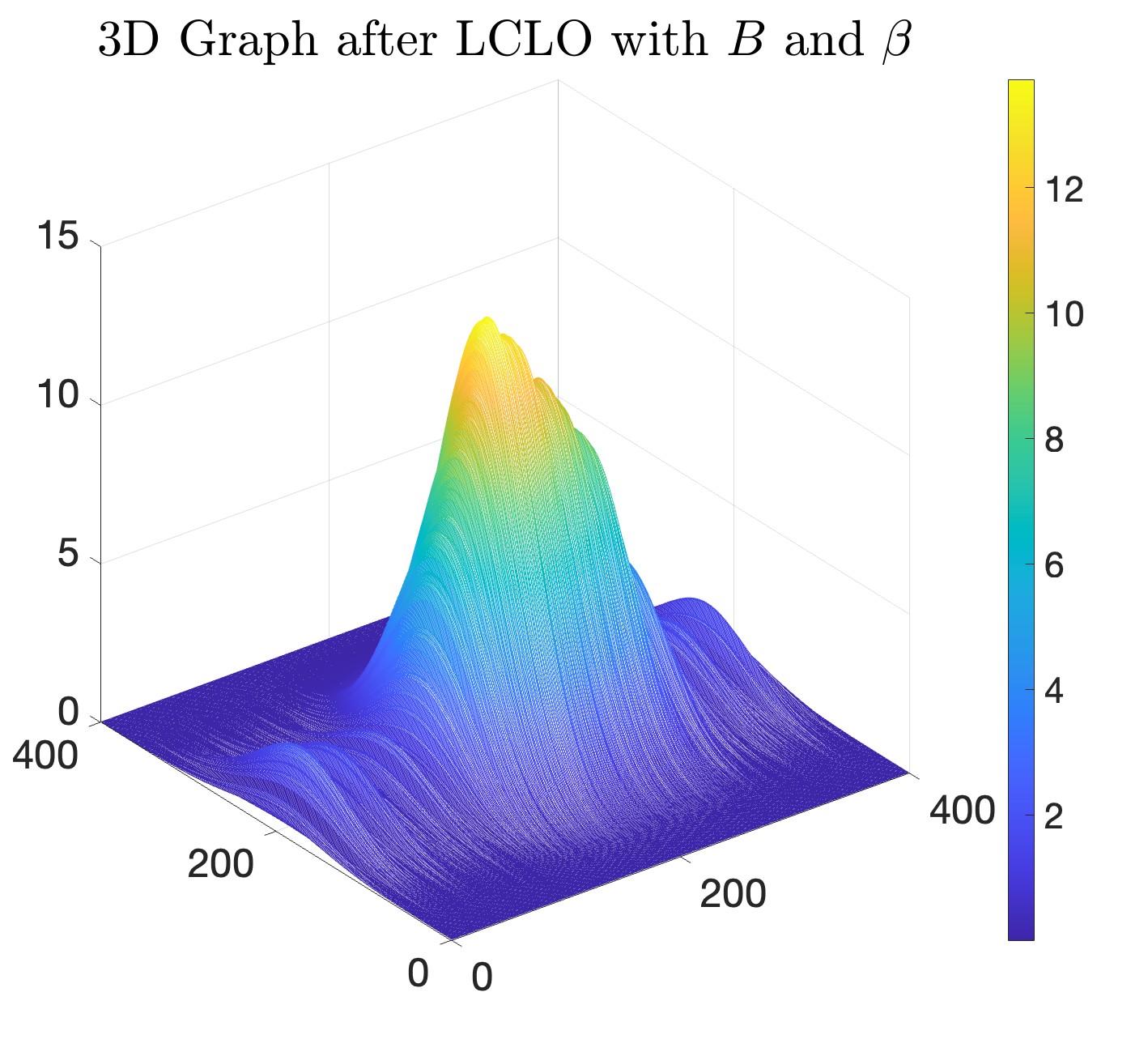}}\ \ \
\subfigure[]{\includegraphics[width=0.321\linewidth]
{ 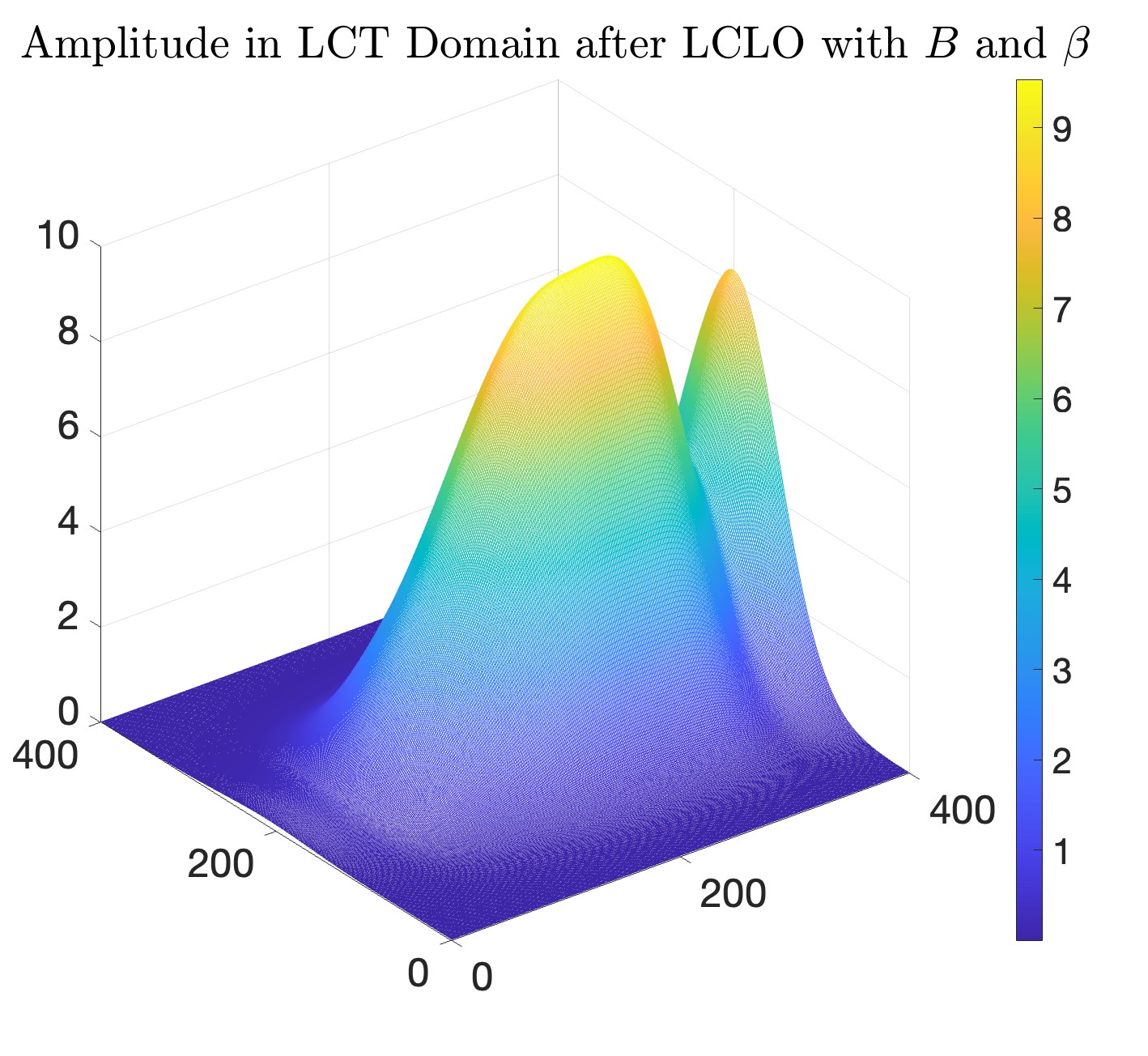}}
\vspace{-0.3cm}
\caption{LCRP and LCLO with $\boldsymbol B$ and $\beta=1.1$.}
\label{FIG3.3}
\end{figure}

Figure \ref{FIG3.2} shows the numerical simulation  of the Gaussian function processed by the LCRP $ I^{\boldsymbol{A}}_\beta$ and the LCLO $ \Delta_{\beta}^{\boldsymbol{A}}$, where  $\boldsymbol{A}$ is the same as in  Figure \ref{FIG3.1} and $\beta = 1.1$. Subfigure (a) displays the 2D grayscale image of the Gaussian function after LCRP processing, Subfigure (b) presents the corresponding 3D color representation of Subfigure (a), and Subfigure (c) illustrates the amplitude distribution of Subfigure (a) in the corresponding LCT domain. Subfigure (d) shows the 2D grayscale image of the Gaussian function after LCLO processing, Subfigure (e) provides the 3D color representation of Subfigure (d), and Subfigure (f) illustrates the amplitude distribution of Subfigure (d) in the corresponding LCT domain.
  
\begin{figure}[H]
\centering
\subfigcapskip=-10pt
\subfigure[]{\includegraphics[width=0.32\linewidth]
{ 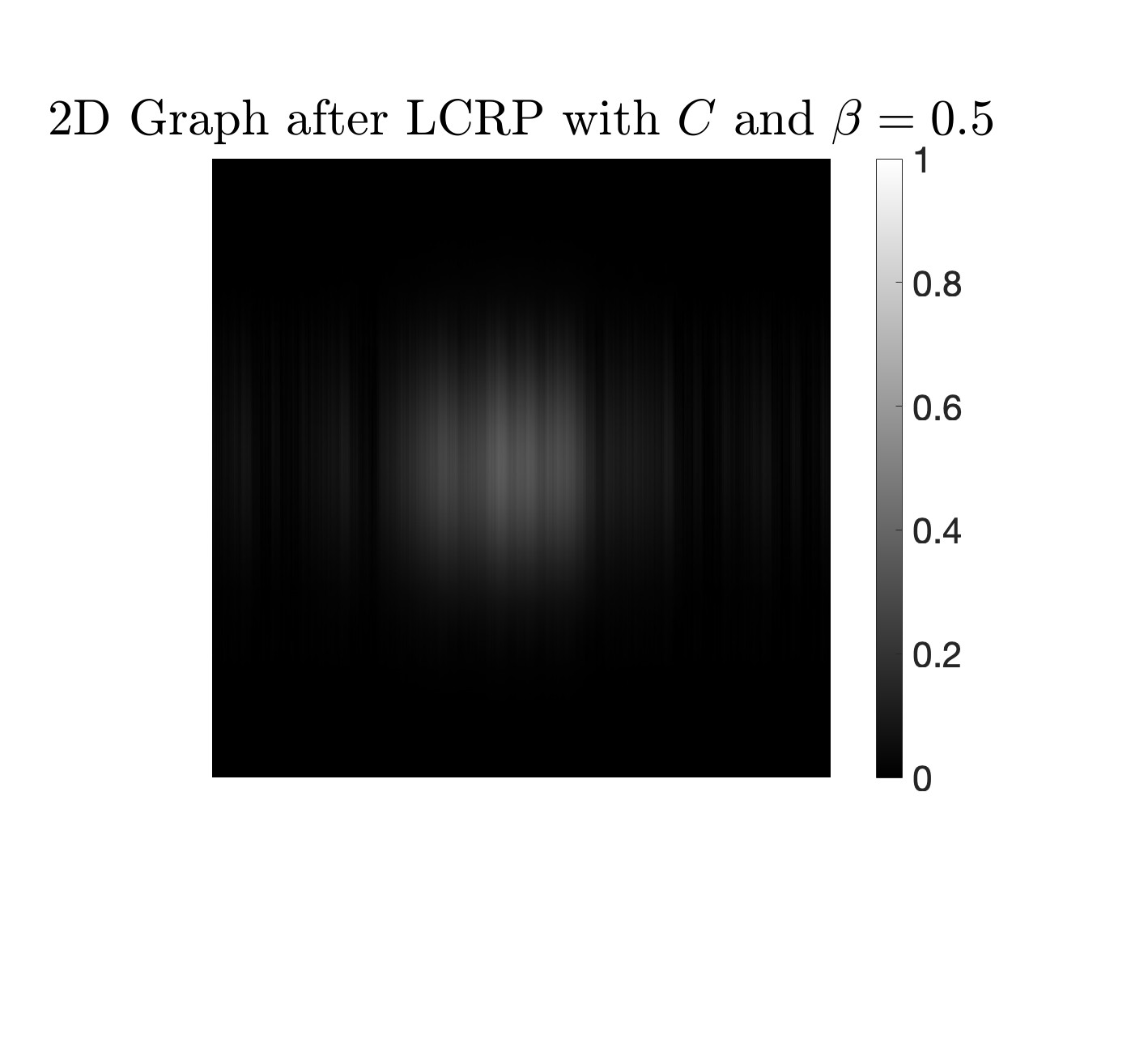}}\ \ \
\subfigure[]{\includegraphics[width=0.32\linewidth]
{ 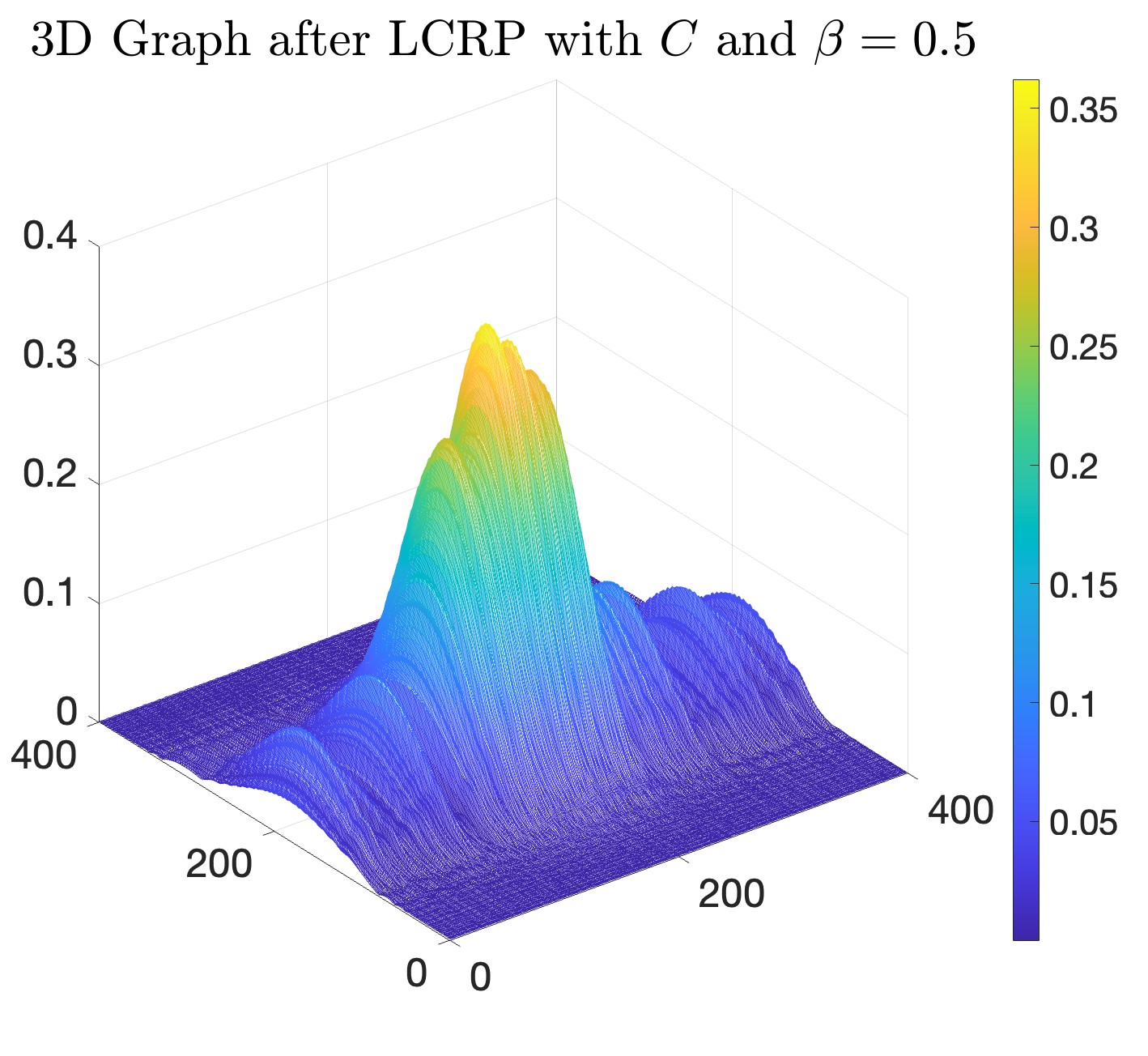}}\ \ \
\subfigure[]{\includegraphics[width=0.32\linewidth]
{ 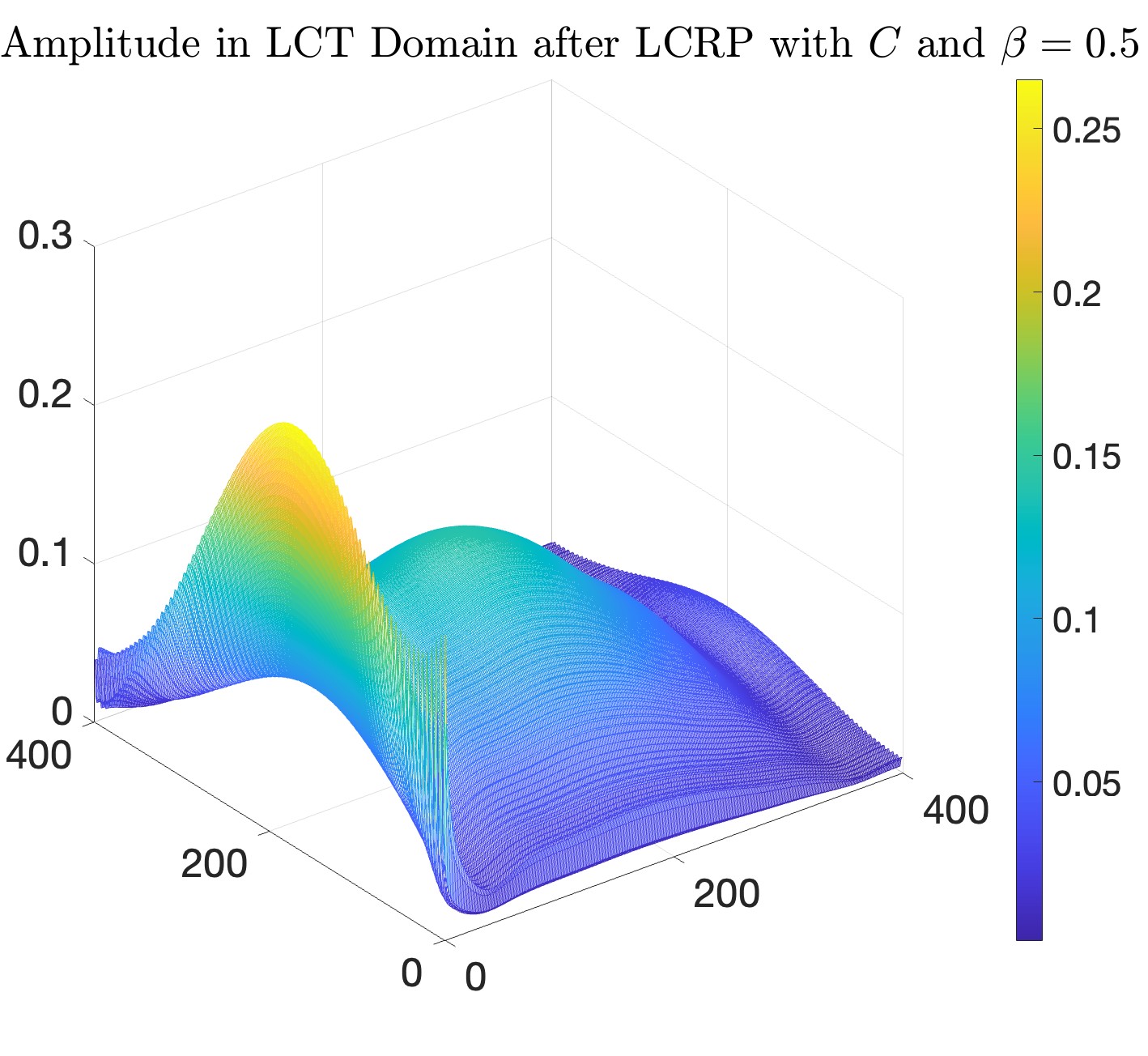}}\\
\vspace{0.5cm}
\subfigure[]{\includegraphics[width=0.32\linewidth]
{ 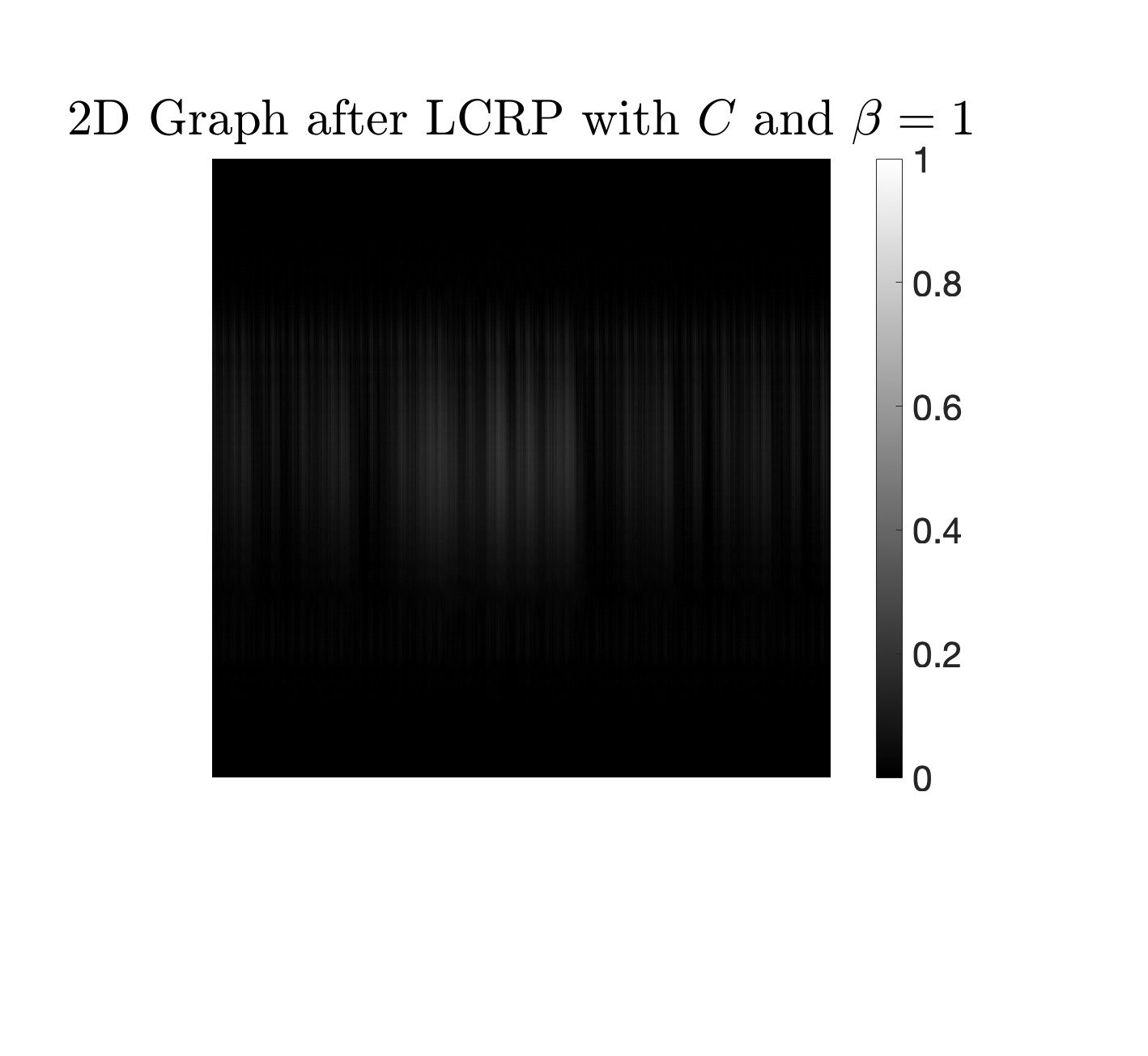}}\ \ \
\subfigure[]{\includegraphics[width=0.32\linewidth]
{ 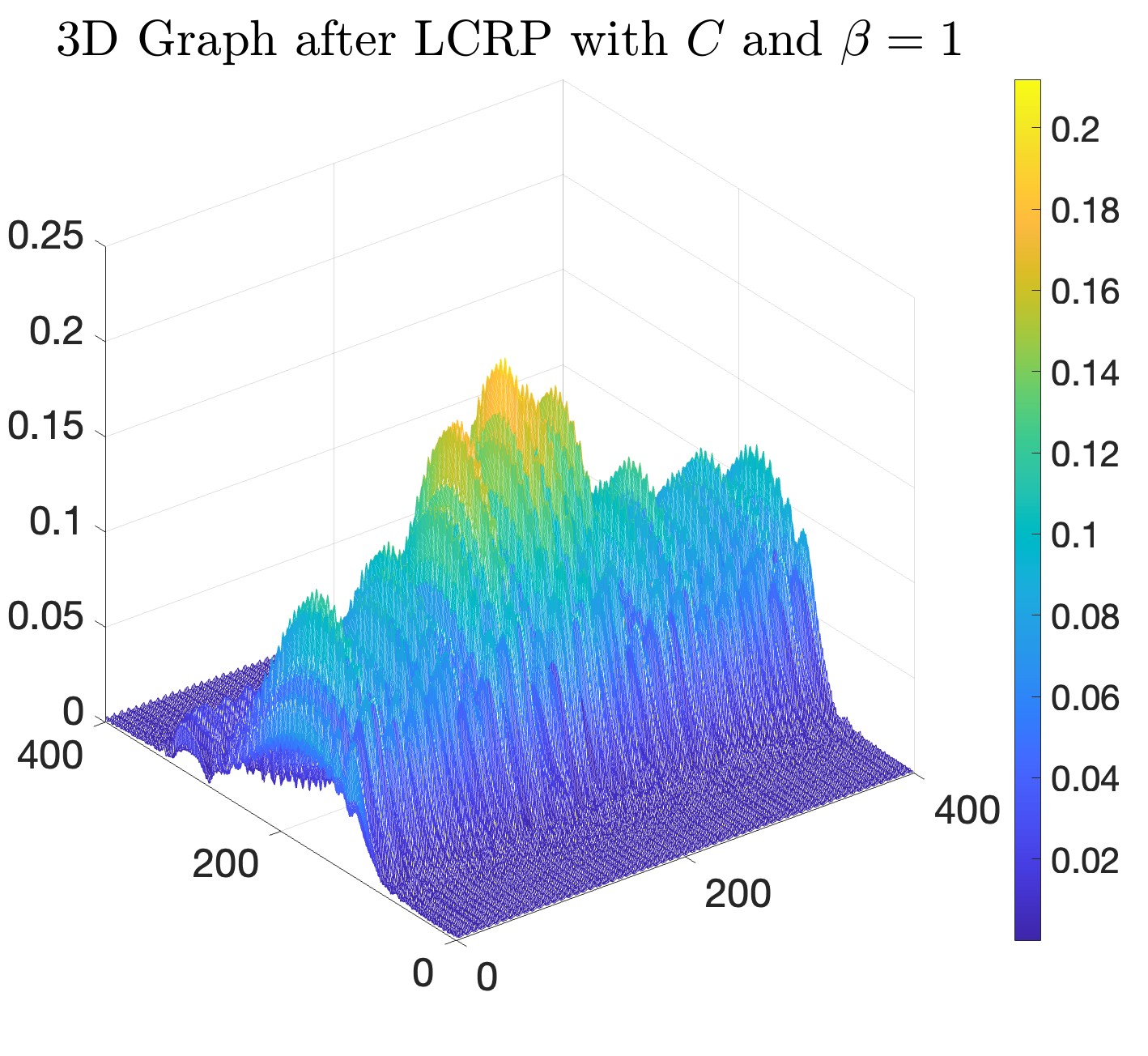}}\ \ \
\subfigure[]{\includegraphics[width=0.32\linewidth]
{ 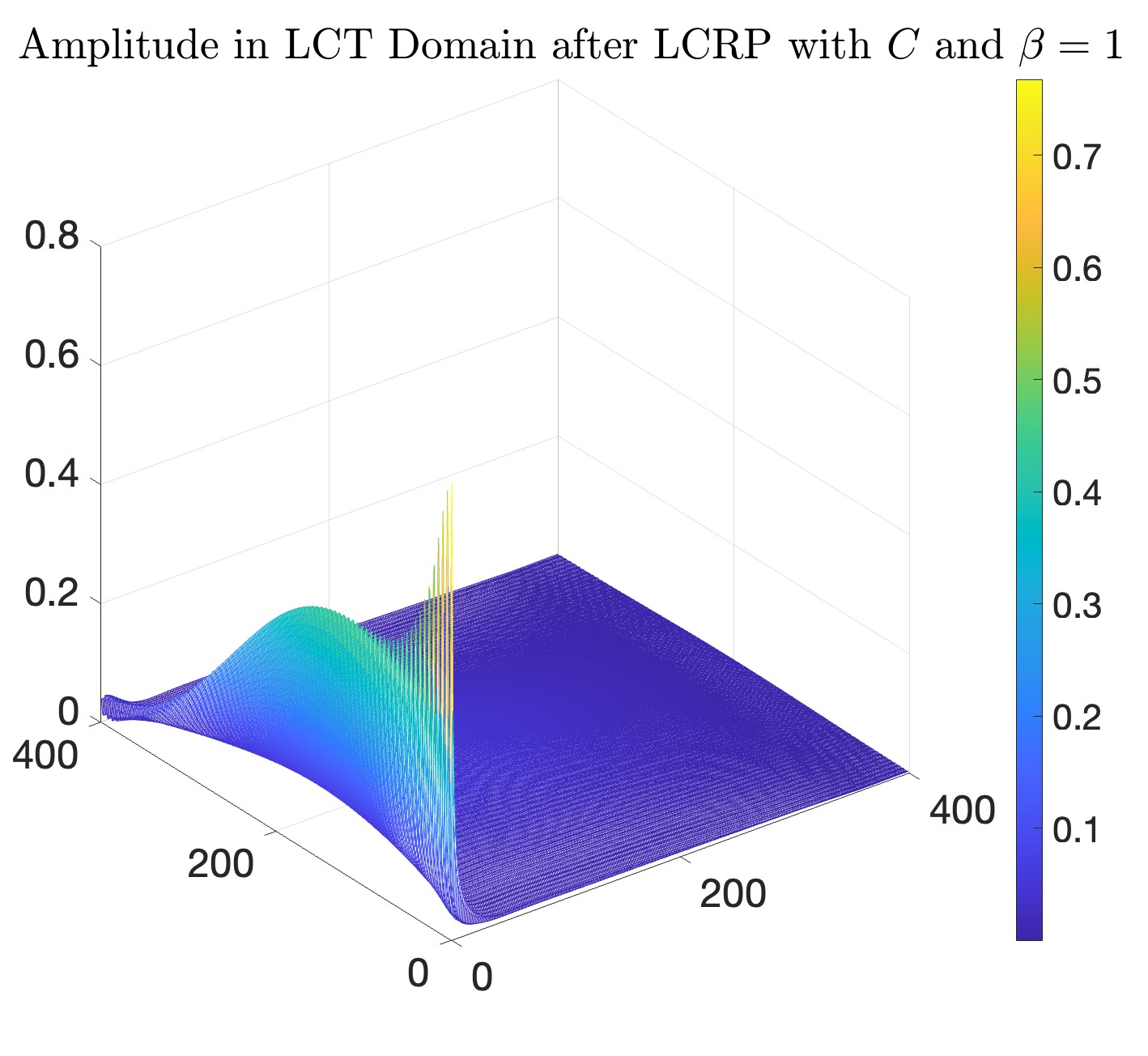}}\\
\vspace{0.5cm}
\subfigure[]{\includegraphics[width=0.32\linewidth]
{ 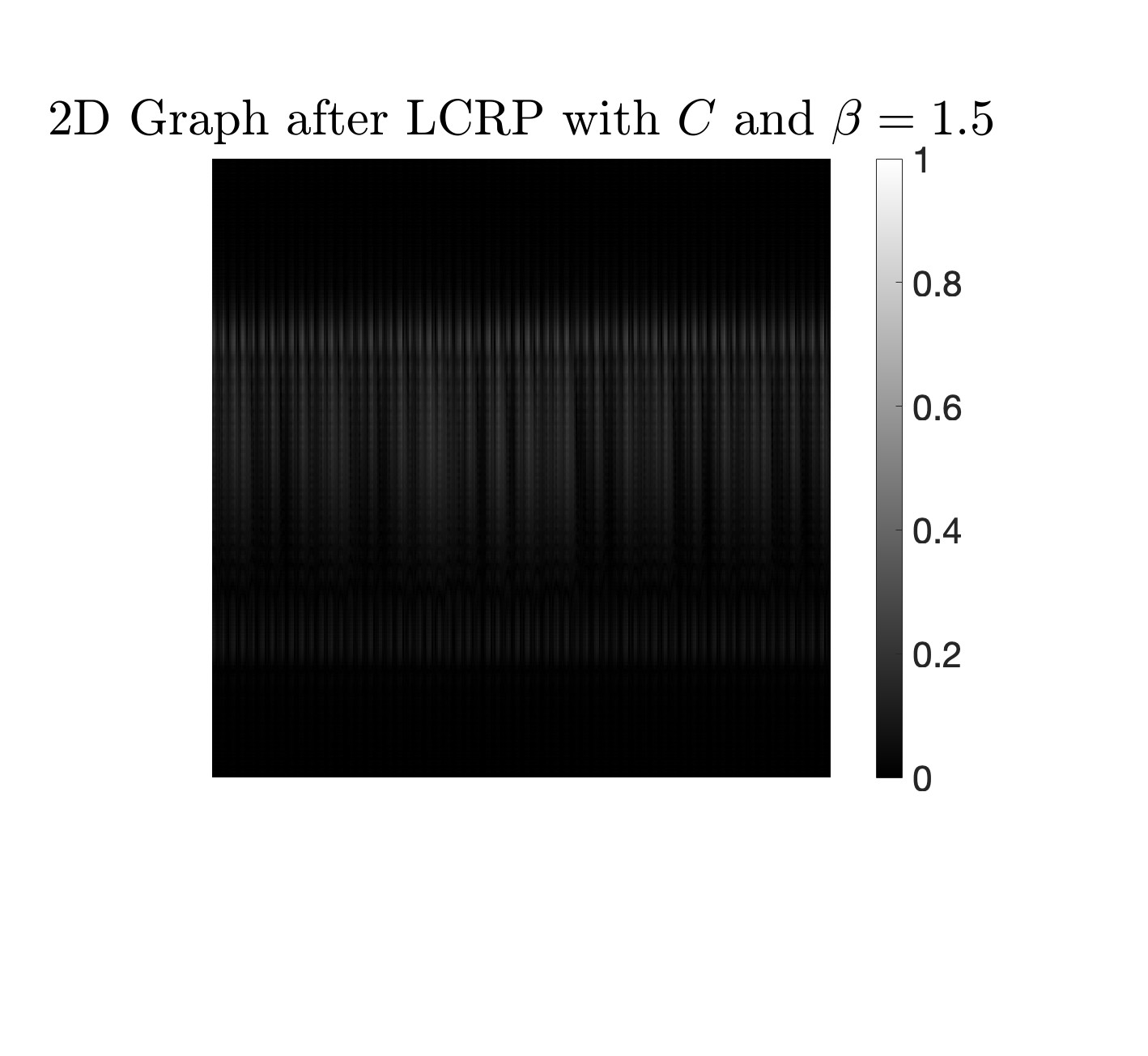}}\ \ \
\subfigure[]{\includegraphics[width=0.32\linewidth]
{ 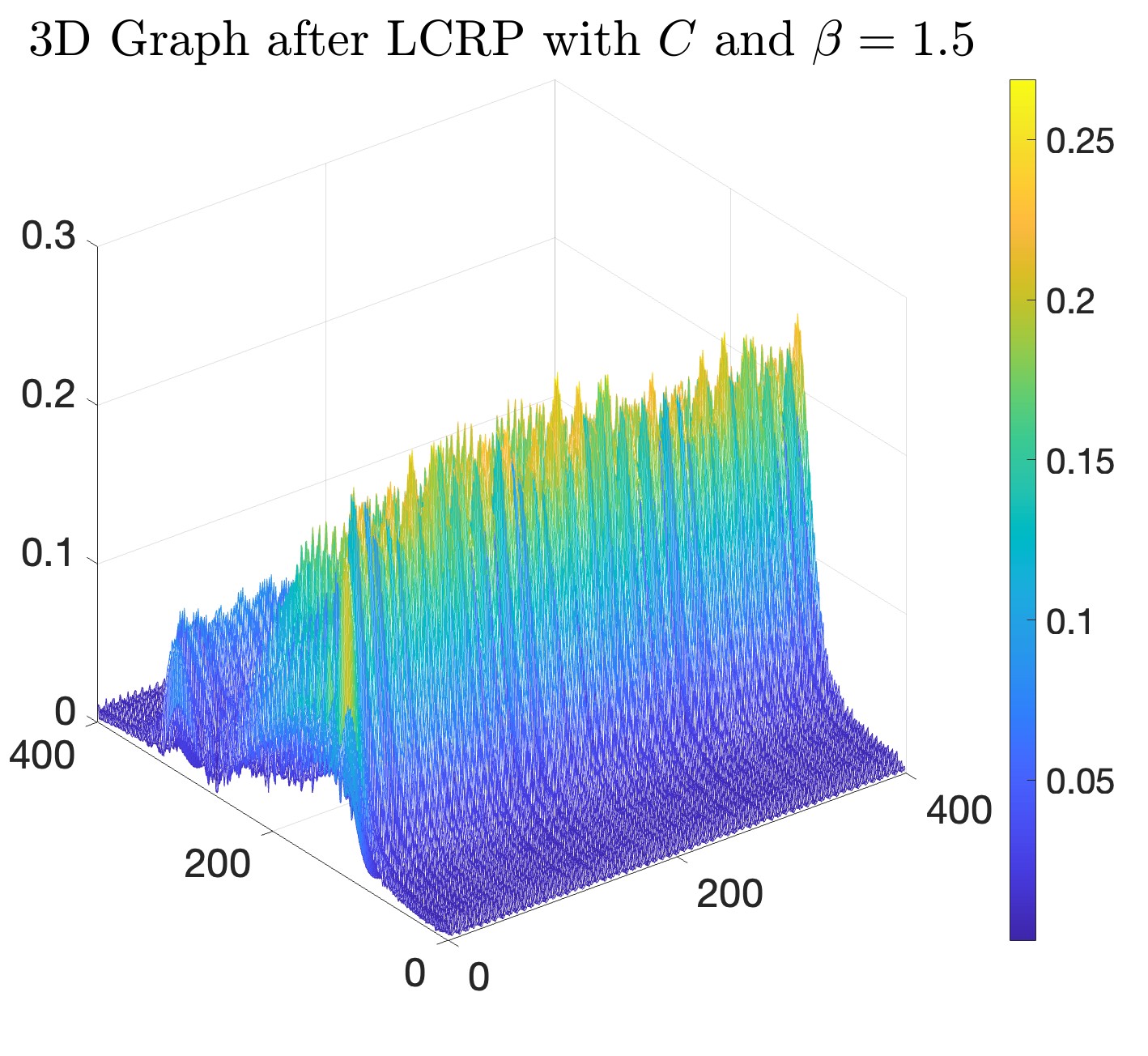}}\ \ \
\subfigure[]{\includegraphics[width=0.32\linewidth]
{ 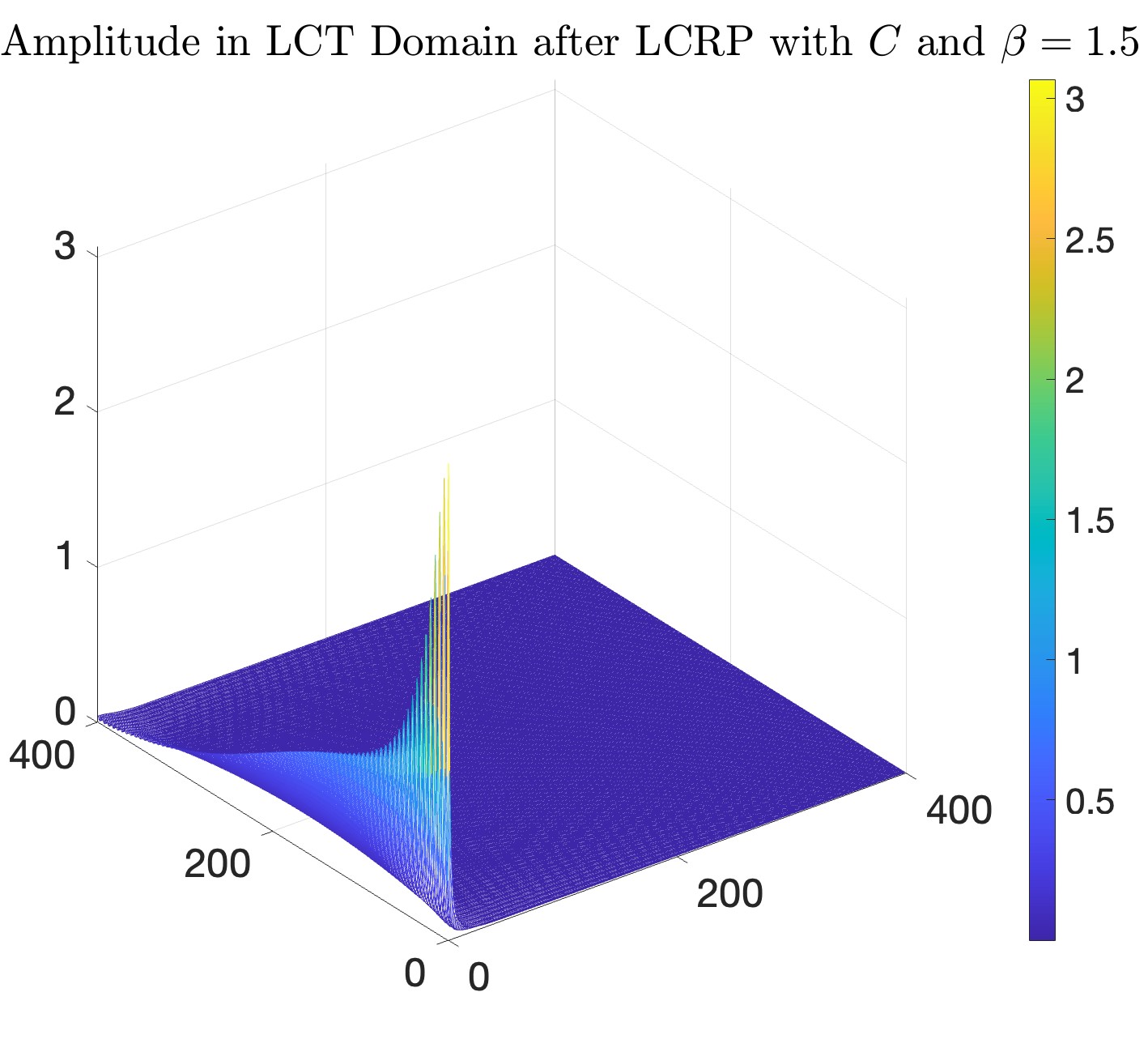}}
\vspace{-0.3cm}
\caption{LCRP with $\boldsymbol C$ and, respectively,  
$\beta=0.5$, $\beta=1$, and $\beta=1.5$.}
\label{FIG3.4}
\end{figure}
Figure \ref{FIG3.3} presents the numerical simulation of the Gaussian function processed by the LCRP $ I^{\boldsymbol{B}}_\beta$ and the LCLO $ \Delta_{\beta}^{\boldsymbol{B}}$, where  $\boldsymbol{B}$ is the same as in Figure \ref{FIG3.1} and $\beta=1.1$. Subfigure (a) displays the 2D grayscale image of the Gaussian function after LCRP processing, Subfigure (b) presents the corresponding 3D color representation of Subfigure (a), and Subfigure (c) illustrates the amplitude distribution of Subfigure (a) in the corresponding LCT domain. Subfigure (d) shows the 2D grayscale image of the Gaussian function after LCLO processing, Subfigure (e) provides the 3D color representation of Subfigure (d), and Subfigure (f) illustrates the amplitude distribution of Subfigure (d) in the corresponding LCT domain. 

Figures \ref{FIG3.2} and \ref{FIG3.3} demonstrate that, when the parameter $\beta$ either in the LCRP $I^{\boldsymbol{A}}_\beta$ or in the LCLO $\Delta_{\beta}^{\boldsymbol A}$ is fixed, while the parameter matrix $\boldsymbol A$ varies, significant differences emerge when applied  both operators to images. These distinctions are observed primarily in two aspects: the spatial domain grayscale representations and the corresponding amplitude distributions in the LCT domain. Furthermore, comparisons between Figure \ref{FIG3.2}(c), (f) and Figure \ref{FIG3.1}(c), as well as between Figure \ref{FIG3.3}(c), (f) and Figure \ref{FIG3.1}(d) reveal that the Gaussian functions processed by the LCRP and the LCLO exhibit complementary amplitude distributions in the LCT domain. This implies that, by choosing specific parameters, the LCRP and the LCLO may be regarded as mutual inverse operators. 
  
\begin{figure}[H]
\centering
\subfigcapskip=-8pt
\subfigure[]{\includegraphics[width=0.32\linewidth]
	{ 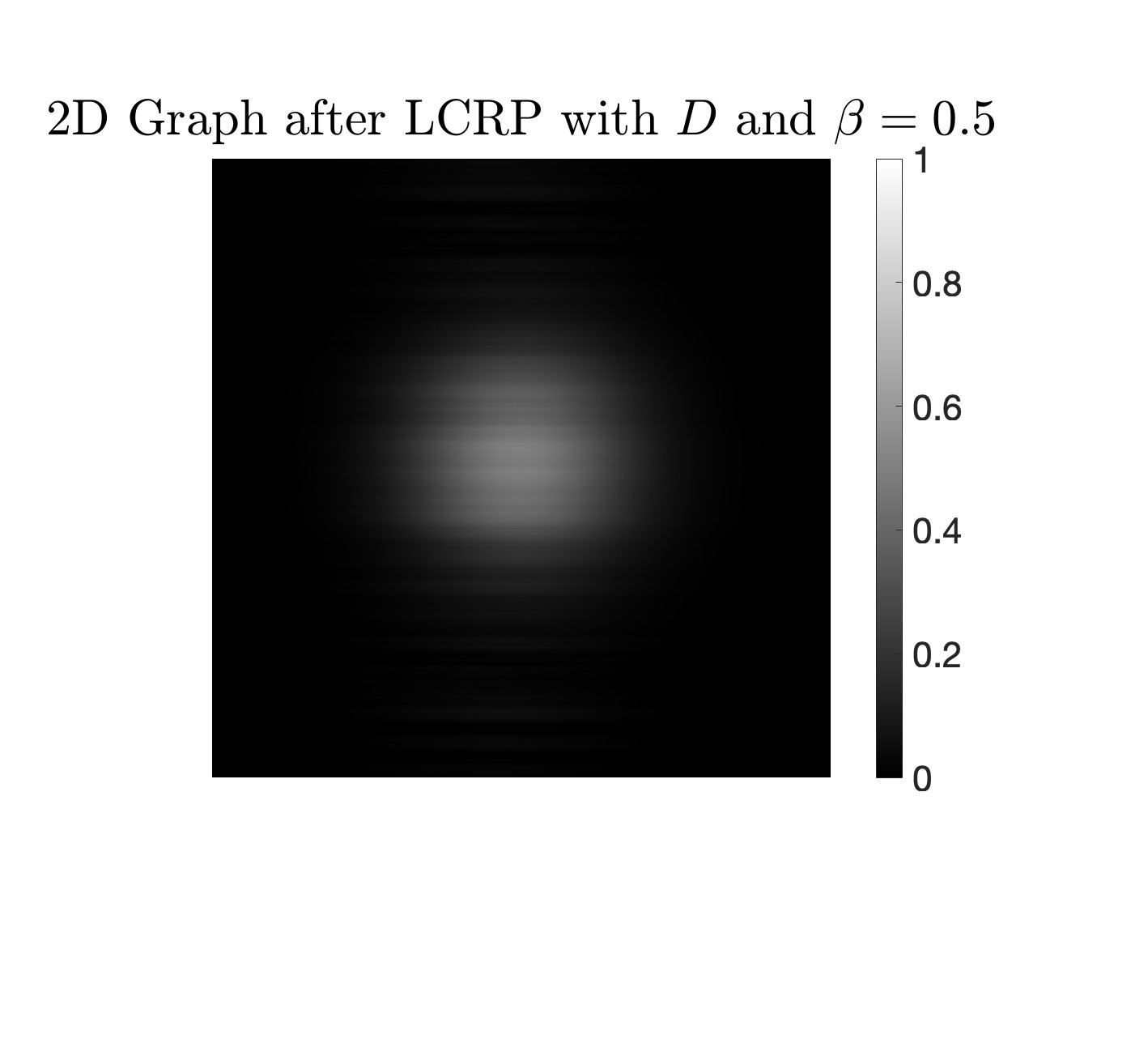}}\ \ \
\subfigure[]{\includegraphics[width=0.32\linewidth]
	{ 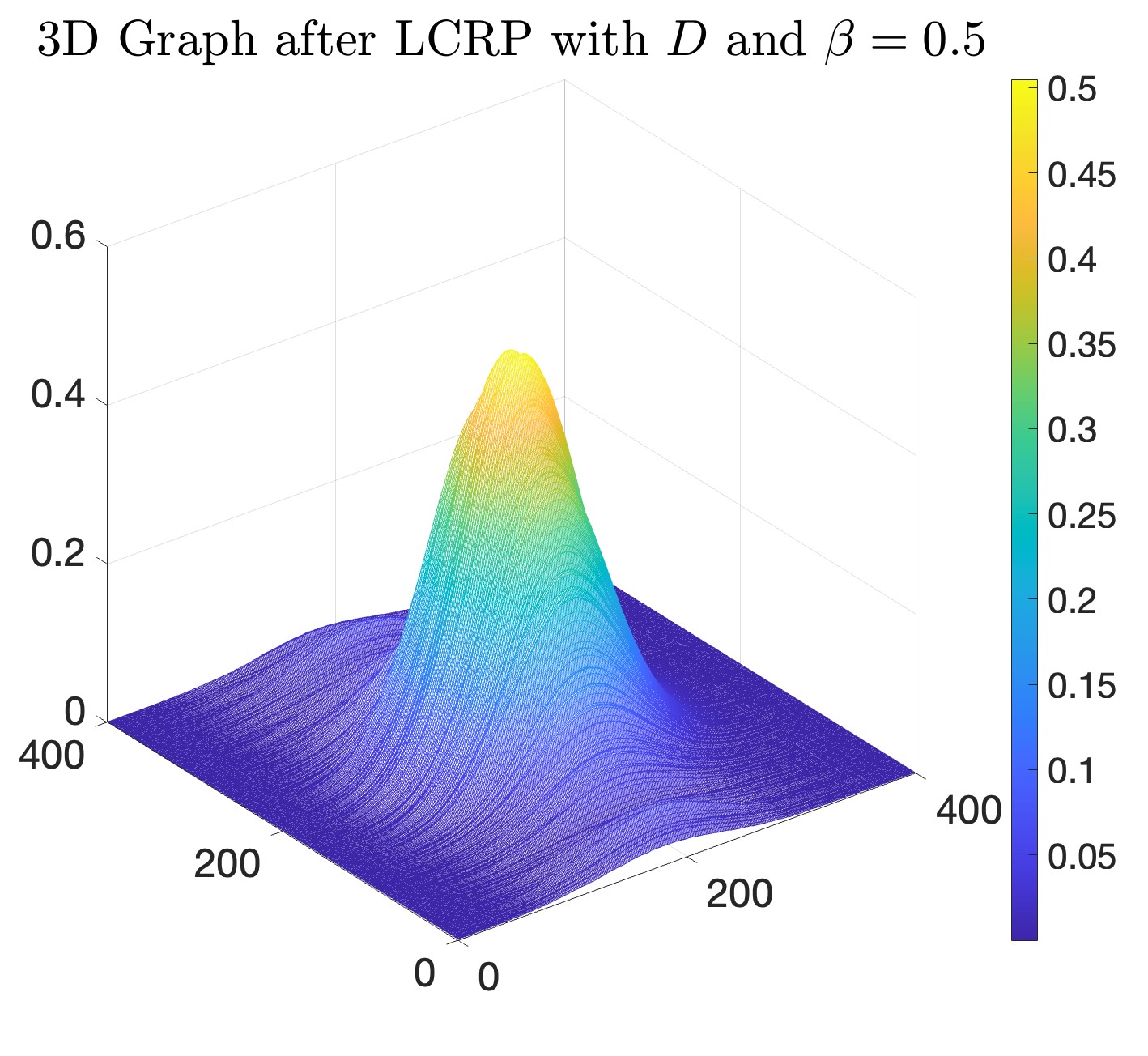}}\ \ \
\subfigure[]{\includegraphics[width=0.32\linewidth]
	{ 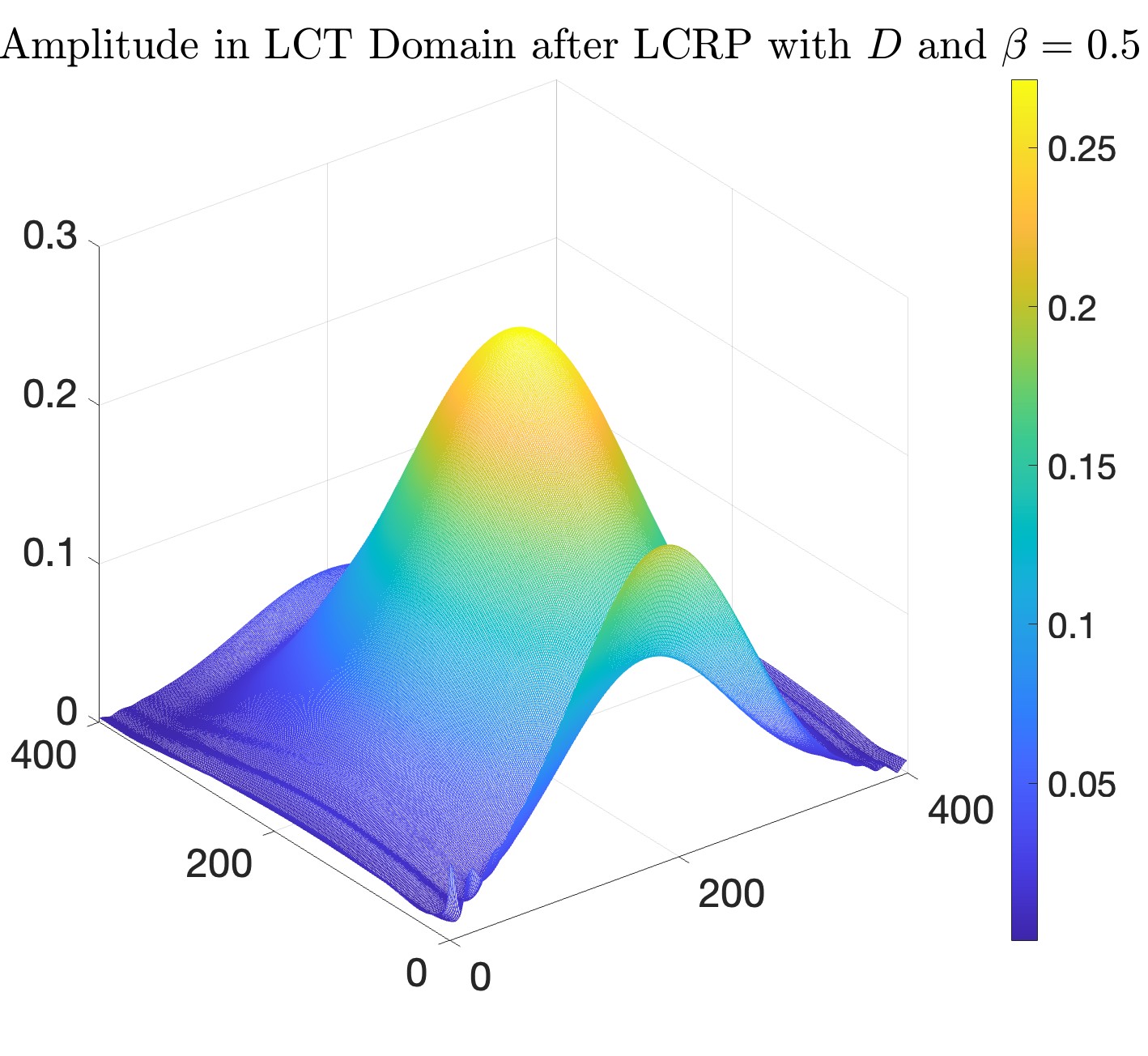}}\\
\vspace{0.5cm}
\subfigure[]{\includegraphics[width=0.32\linewidth]
	{ 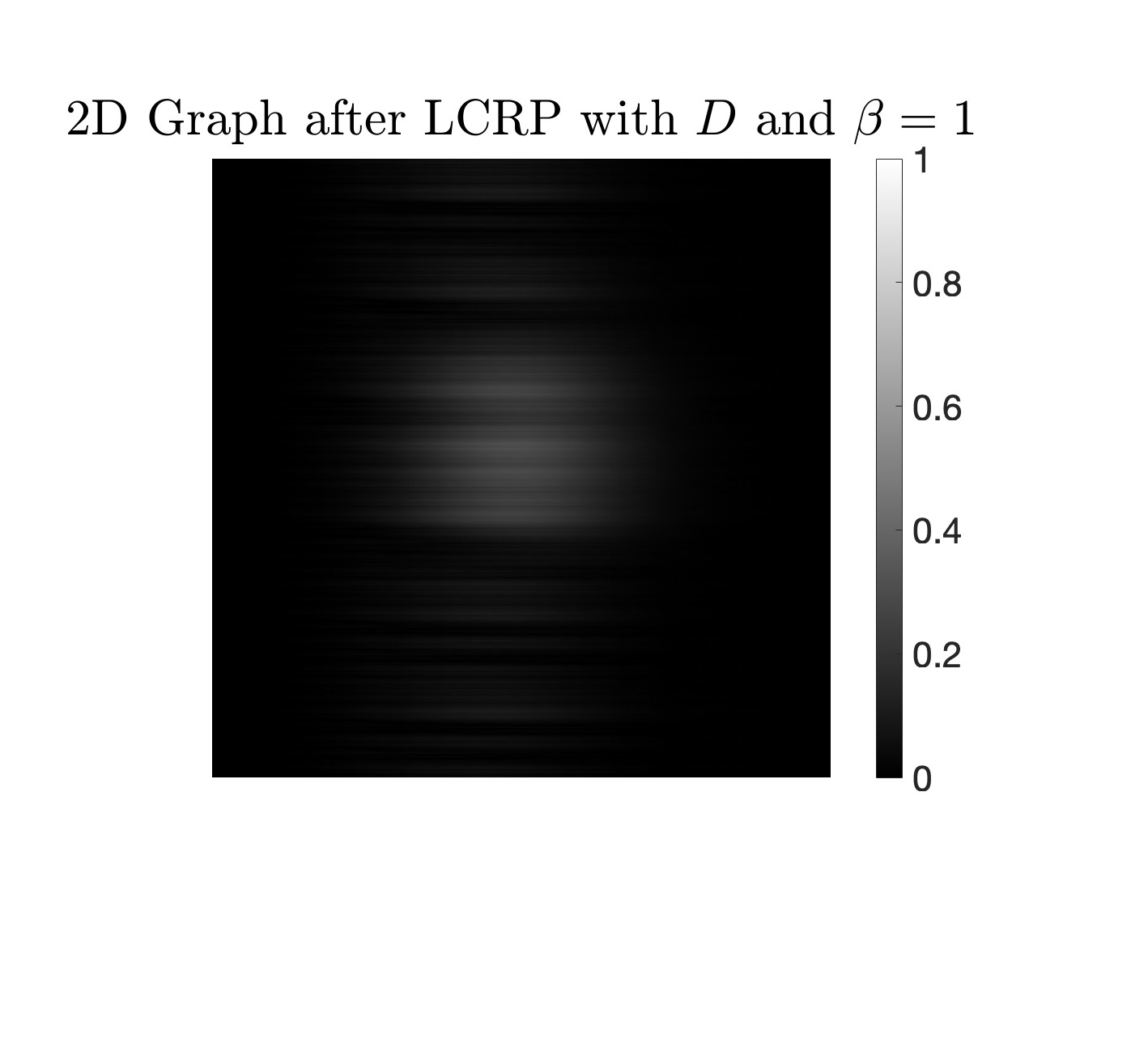}}\ \ \
\subfigure[]{\includegraphics[width=0.32\linewidth]
	{ 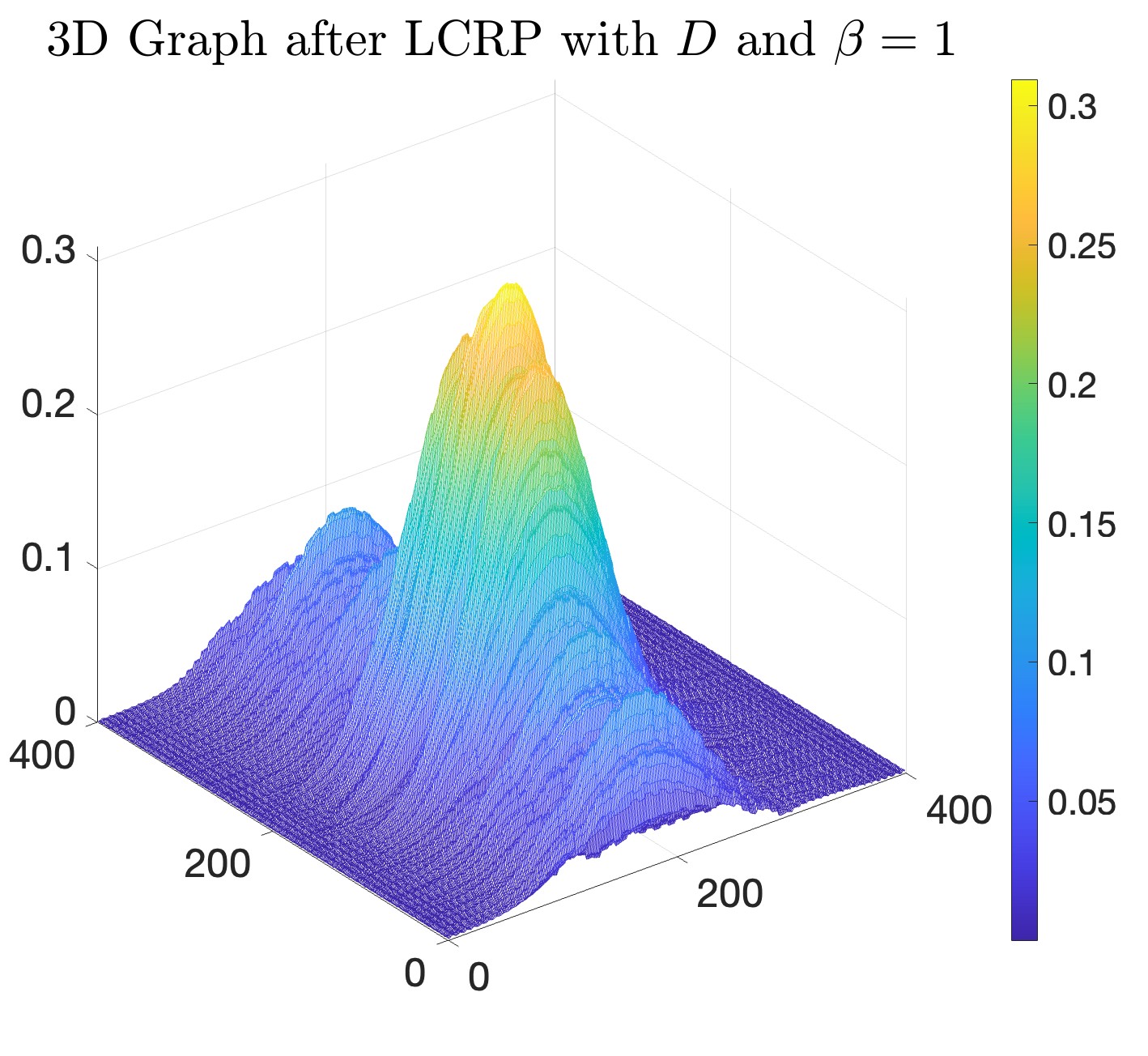}}\ \ \
\subfigure[]{\includegraphics[width=0.32\linewidth]
	{ 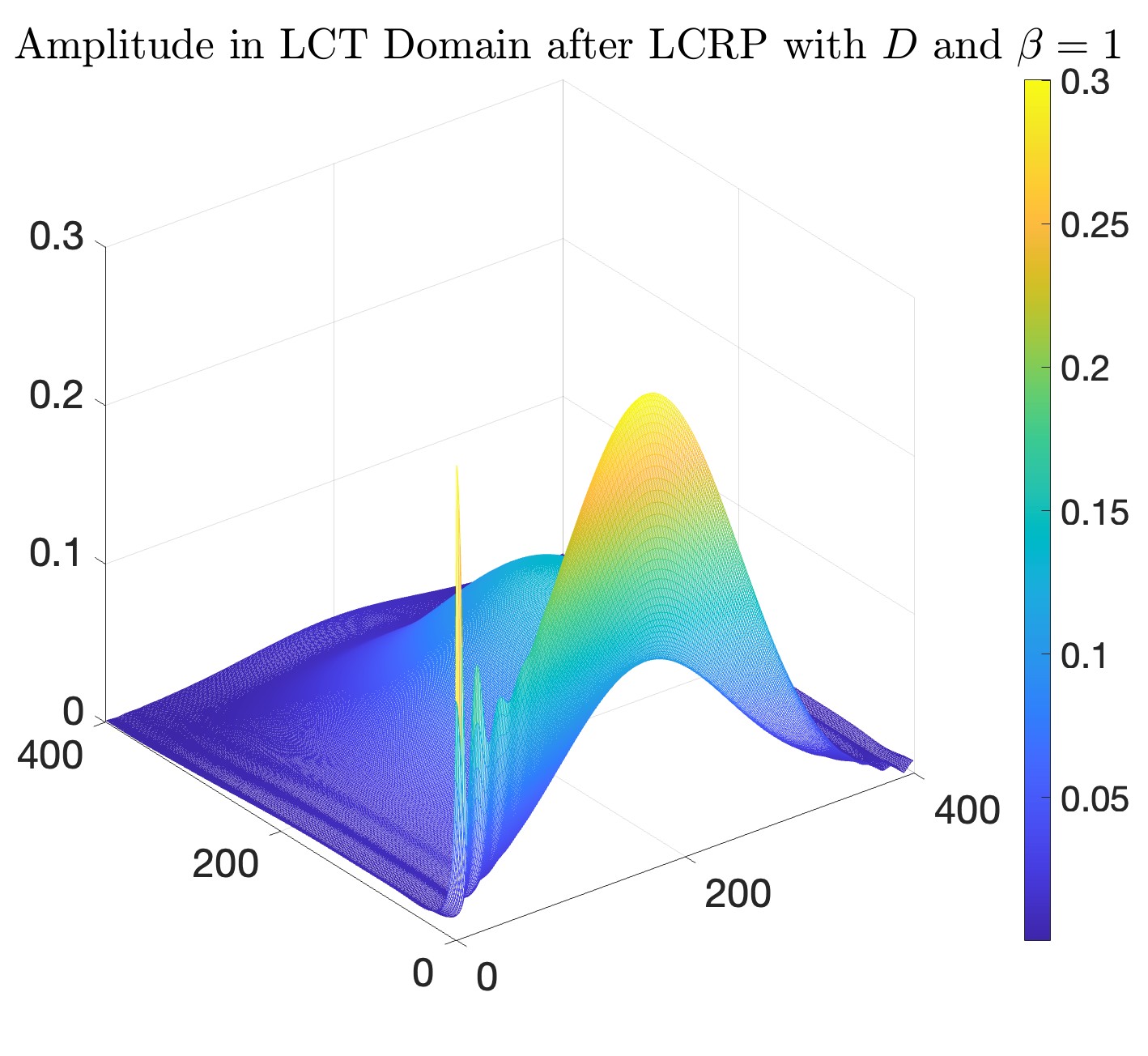}}\\
\vspace{0.5cm}
\subfigure[]{\includegraphics[width=0.32\linewidth]
	{ 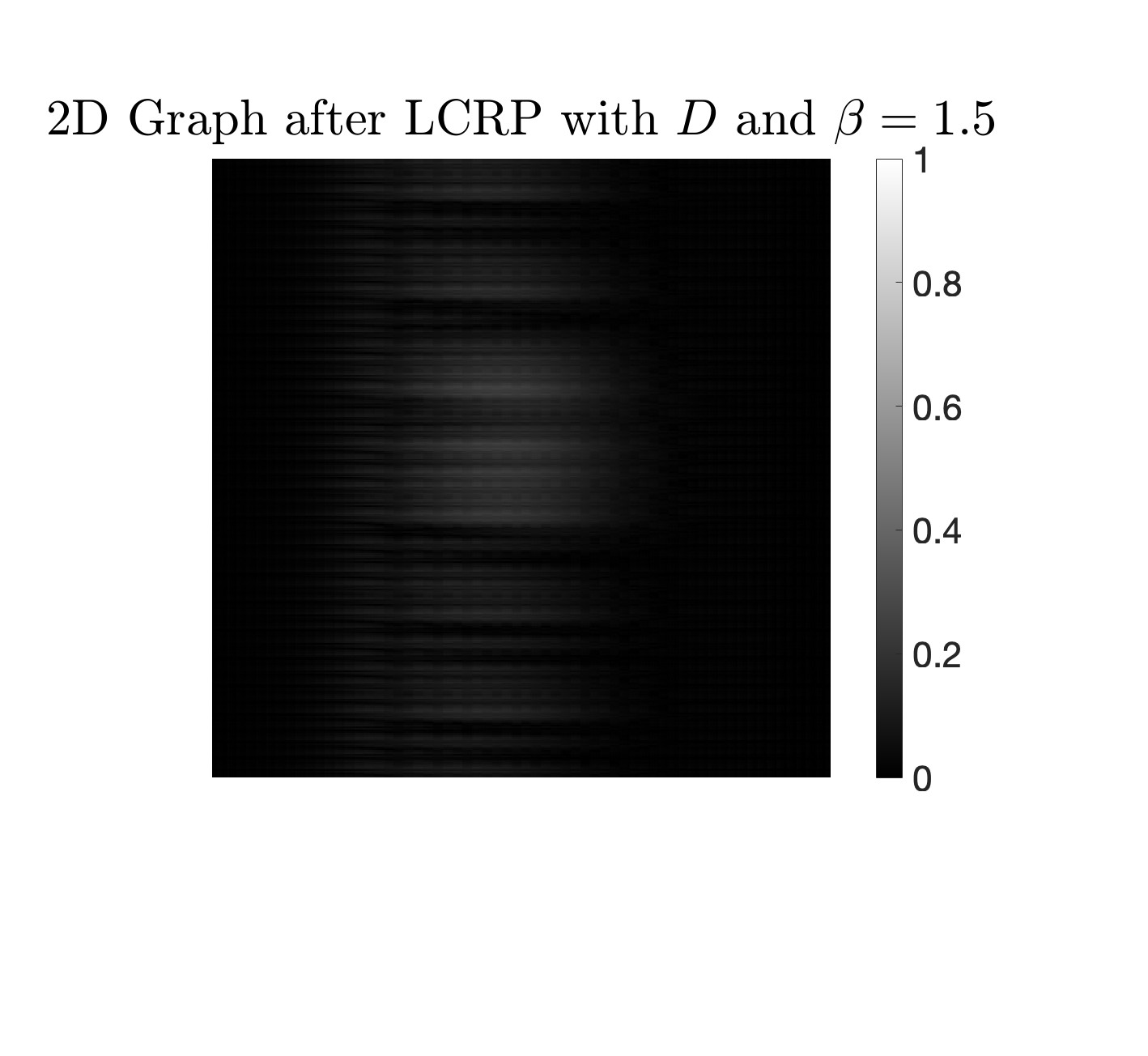}}\ \ \
\subfigure[]{\includegraphics[width=0.32\linewidth]
	{ 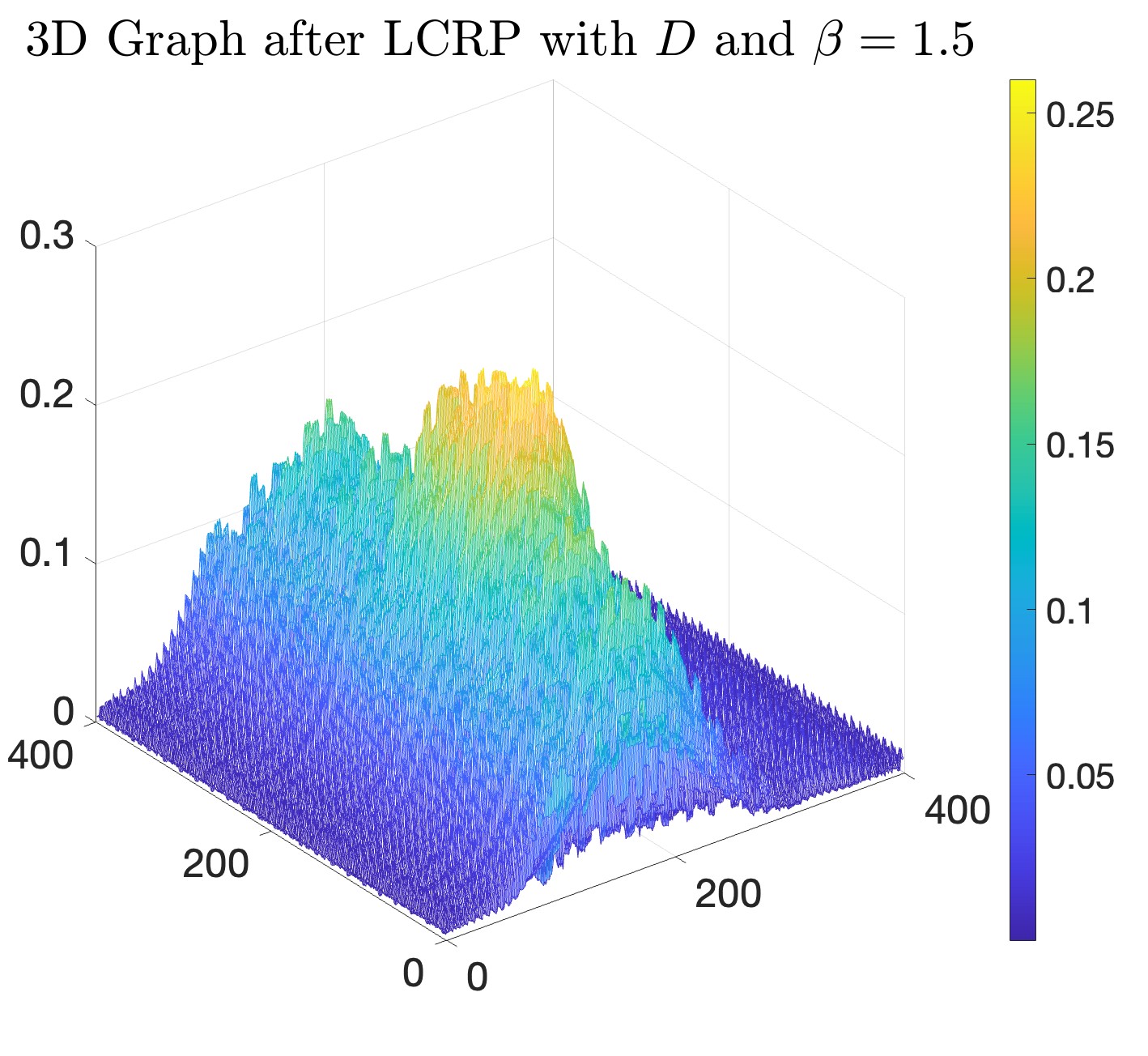}}\ \ \
\subfigure[]{\includegraphics[width=0.32\linewidth]
	{ 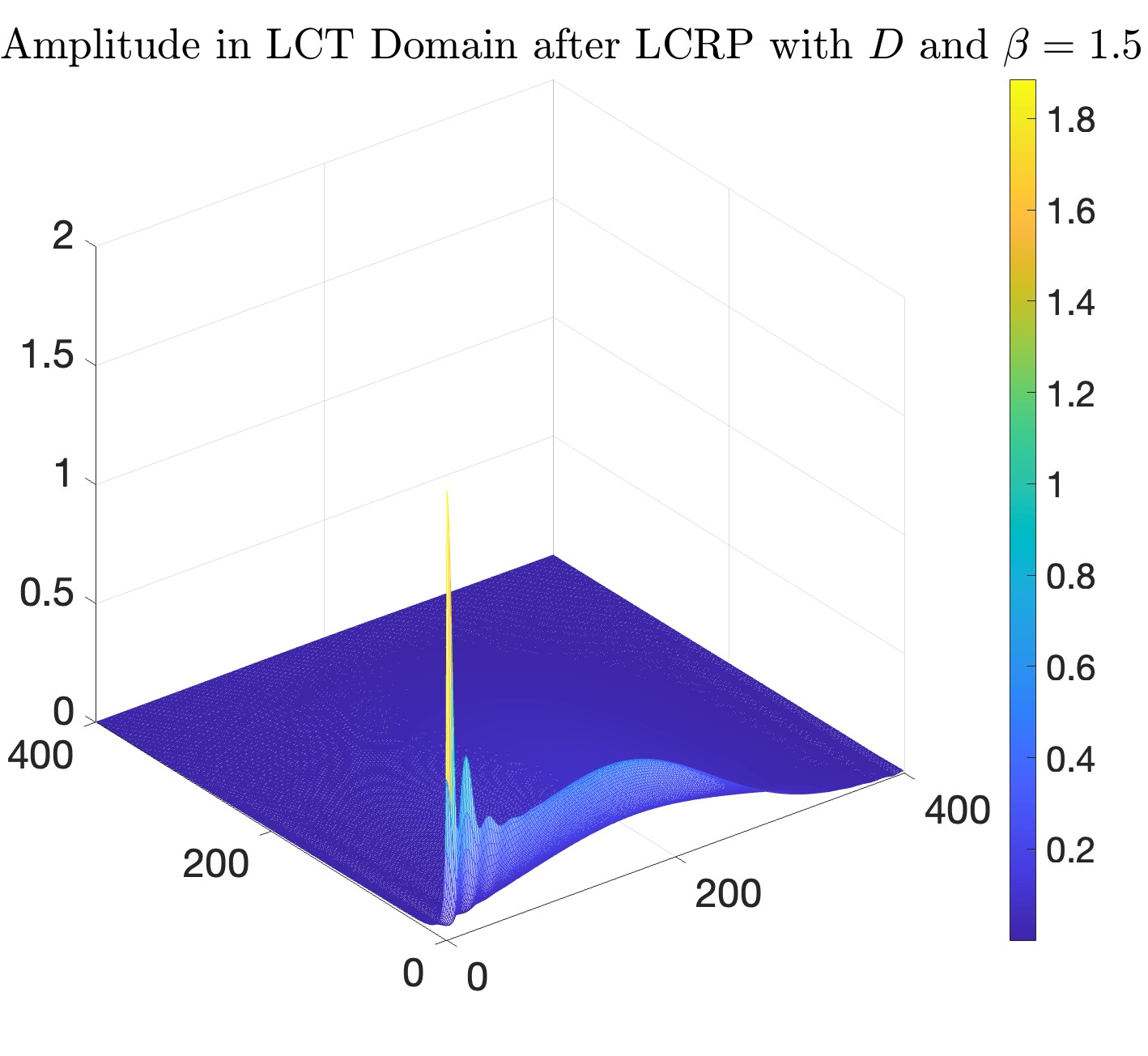}}
\caption{LCRP with $\boldsymbol D$ and, respectively,  
	$\beta=0.5$, $\beta=1$, and $\beta=1.5$.}
\label{FIG3.5}
\end{figure}

Figure \ref{FIG3.4} presents the numerical simulation of the Gaussian function processed by the LCRP  $I^{\boldsymbol{C}}_\beta$ with the parameter matrix $\boldsymbol{C}: = (C_1, C_2)$  and varying values of $\beta$, where  $C_1 = \begin{bmatrix} 20 & 399 \\ 1 & 20 \end{bmatrix}$ and $C_2 = \begin{bmatrix} 40 & 15.99 \\ 100 & 40 \end{bmatrix}$. Subfigures (a), (d), and (g) display the resulting 2D grayscale images, respectively,  for $\beta = 0.5$, $1$, and $1.5$. Subfigures (b), (e), and (h) show the corresponding 3D color representations of Subfigures (a), (d), and (g). The amplitude distributions in the corresponding LCT domains for these outputs are illustrated in Subfigures (c), (f), and (i). From Subfigures (c), (f), and (i), it follows that varying $\beta$ leads to notable changes in amplitude modulation within the LCT domain. Meanwhile, Subfigures (a), (d), and (g) reveal that, under a fixed matrix $\boldsymbol{C}$, different values of $\beta$ also alter the spatial characteristics of the image, making it appear either sparser or denser in the spatial domain. 

Figure \ref{FIG3.5} presents the numerical simulation of the Gaussian function processed by the LCRP  $I^{\boldsymbol{D}}_\beta$ with parameter matrix $\boldsymbol{D}:= (D_1, D_2)$ and varying values of $\beta$, where  
$D_1 = \begin{bmatrix} 20 & 39.9 \\ 10 & 20 \end{bmatrix}$ and  $D_2 = \begin{bmatrix} 3 & 400 \\ 0.02 & 3 \end{bmatrix}$. Subfigures (a), (d), and (g) display the resulting 2D grayscale images, respectively, for $\beta = 0.5$, $1$, and $1.5$. Subfigures (b), (e), and (h) show the corresponding 3D color representations of Subfigures (a), (d), and (g). The amplitude distributions in the corresponding LCT domains for these outputs are illustrated in Subfigures (c), (f), and (i). Analysis of Figure \ref{FIG3.5} leads to the same conclusion as drawn from Figure \ref{FIG3.4}, the parameter $\beta$ significantly influences both the spatial characteristics and the LCT domain amplitude distributions of the processed images.
 
Based on the numerical simulations presented above, we observe that both the parameter matrix $\boldsymbol{A}$ and the scaling factor $\beta$ in the LCRP $I^{\boldsymbol{A}}_\beta$  and the LCLO $\Delta_{\beta}^{\boldsymbol A}$ can significantly modulate the amplitude distribution in the corresponding LCT domains. Furthermore, we have identified the operational mechanism of $\beta$ in the spatial domain. Therefore, the appropriate choices of $\boldsymbol{A}$ and $\beta$ play a crucial role in effectively processing different types of images or signals.
 
\section{Applications to Multi-Image Encryption}\label{sec4}
Image processing has long been a central focus in information sciences. As a vital branch, image encryption serves as one of the essential processes for protecting image content and ensuring secure transmission (see, for example, \cite{bshyh2007,gcn21,gcn21-2,hcg24,rj1995,ss04}). However, most existing cryptosystems in this field are linear in nature, rendering them susceptible to various attacks, including chosen-plaintext attacks. Moreover, a substantial portion of research has been confined to single-image encryption, which limits practical applicability and efficiency. To address evolving encryption requirements, techniques for double-image and multi-image encryption have attracted considerable attention. For instance, Tao et al. \cite{tlw10} proposed embedding two images into the amplitude and the phase of a complex function. Z. Liu and S. Liu \cite{ll2007} introduced an iterative fractional Fourier transform-based method to merge two images into a single ciphertext. Sui et al. \cite{sds2014} developed a double-image encryption approach using the discrete fractional random transform combined with chaotic processes to disrupt pixel correlations effectively. In the context of multi-image encryption, Kong et al. \cite{kzxxg2013} designed a scheme based on the cascaded fractional Fourier transform. Additionally, encryption of color images, as a specific yet important case, has also advanced significantly \cite{cdly2015, lglwwp2015}. However, as highlighted by Peng et al. \cite{pzw2006, pzyb2006}, traditional linear cryptosystems remain vulnerable to specific types of attacks. To mitigate these limitations, recent efforts have shifted toward breaking linearity by introducing asymmetric mechanisms. For example, Qin and Peng \cite{qp2010} proposed an asymmetric cryptosystem using phase-truncated Fourier transforms with distinct encryption and decryption keys. This idea was later extended to double-image encryption by Wang and Zhao \cite{wz2012} and Sui et al. \cite{sdlh2014}. Subsequently, Li et al. \cite{lyzlt2015} devised an asymmetric multiple-image encryption framework based on the cascaded fractional Fourier transform, yielding notable improvements in both security and efficiency.

Although significant advances have been made in multi-image encryption, particularly in asymmetric encryption mechanisms, these methods still face challenges in meeting higher security demands, resisting stronger attack models, and enhancing the efficiency of multi-image fusion and encryption. 
To address these challenges, based on Theorems \ref{thm-frp}, \ref{p-sharp}, \ref{rem-imp-1},   and \ref{thm-polygon}, Proposition \ref{pro-lclo}, and Remarks \ref{zyr-1}, \ref{zyr-2}, and \ref{the-spaces}, combining with extending the idea of asymmetric encryption to multi-image scenarios,  we propose a new method called the  {\bf Multi-Image Encryption Asymmetric Cascaded {\rm \bf LCRP}} (for short, {\bf MIE-AC-LCRP}) \textbf{Method}. To be precise, Subsection \ref{4.1} elaborates on its encryption and decryption algorithms, Subsection \ref{4.2} verifies its feasibility and key sensitivity via numerical experiments, and finally Subsection \ref{4.3} conducts a comprehensive security evaluation on it from four aspects: sensitivity analysis, statistical analysis, noise attack analysis, and occlusion attack analysis.
 
\subsection{Encryption and Decryption Algorithms}\label{4.1}

This subsection details the core components of the aforedescribed MIE-AC-LCRP method.  The encryption process converts multiple plaintext images into a single ciphertext through a series of the phase manipulation and the LCRP operations, while the decryption process reverses these steps to reconstruct the original images. The concrete  algorithms are described as follows.

\textbf{Encryption Algorithm.} As illustrated in Figure \ref{fig.ep}, the encryption procedure consists of the following 5 stages.

\begin{figure}[H]
\centering
\includegraphics[width=1\linewidth]{ 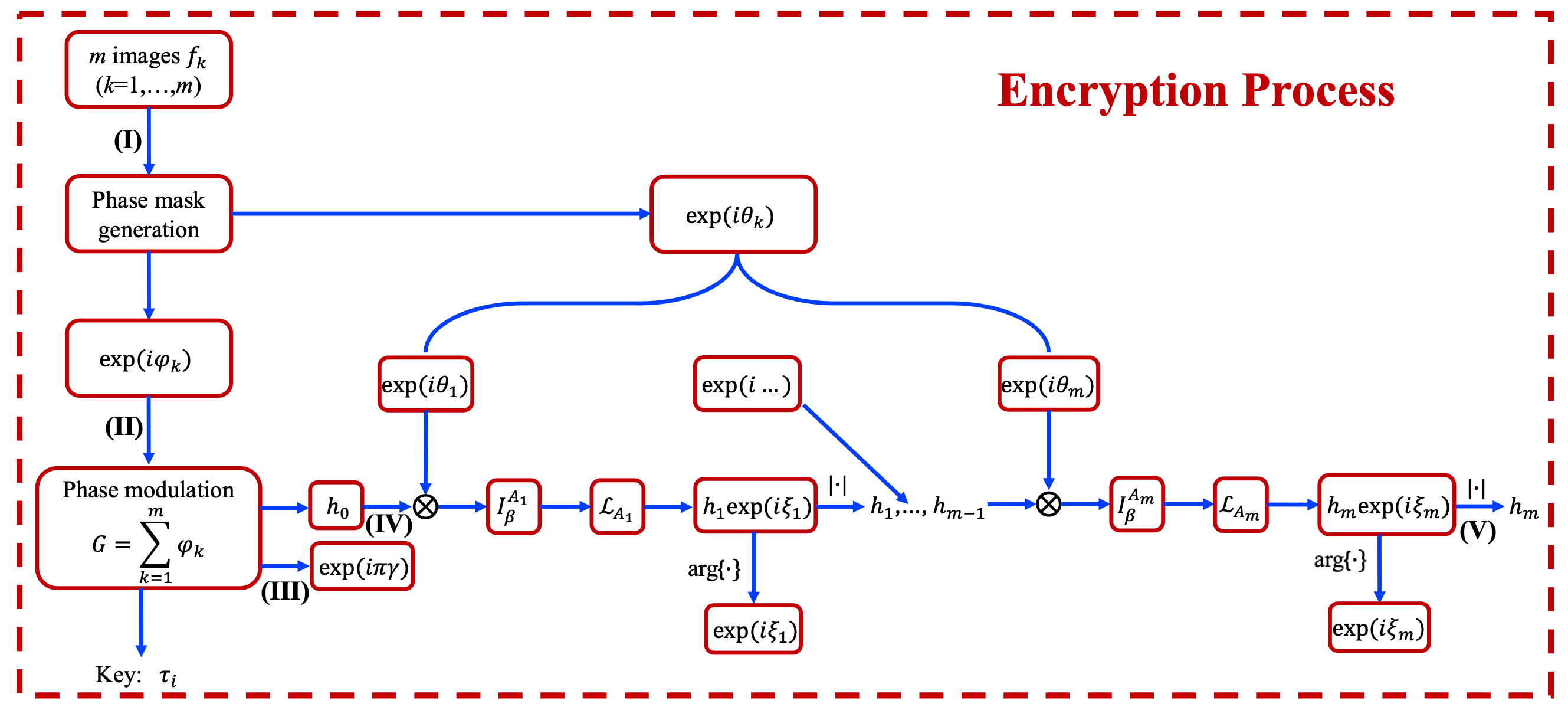} 
\vspace{-10pt}
\caption{Encryption Process of MIE-AC-LCRP Method.}
\label{fig.ep}
\end{figure}
 
\begin{enumerate}[(I)]
\item \textbf{Phase Mask Generation via Image Separation.}
We decompose the normalized input image $\{f_k(x,y)\}_{k=1}^n$ (representing the  pixel value distribution over spatial coordinates $(x,y)$) into two independent phase masks through 
a specific geometric transformation. The detailed procedure 
is outlined as the following four steps.
{\bf (a)} Generate the primary random phase mask $\exp(i\theta_k)$ with $\theta_k$ uniformly distributed in $[0, 2\pi)$.  
{\bf (b)} To achieve the desired decomposition, an image-dependent phase offset $\alpha_k$ must be computed for each pixel. This offset $\alpha_k$ is derived through an arccosine transformation applied to the normalized image intensity $f_k$ as follows
$$\alpha_k(x,y) := \pi - \arccos\left(\frac{2 - f_k^2(x,y)}{2}\right).$$
{\bf (c)} The secondary phase mask $\exp(i\varphi_k)$ is 
synthesized by combining the primary random phase 
$\theta_k$ with the computed separation angle $\alpha_k$ as follows
$$\exp\left[i\varphi_k(x,y)\right] := \exp\left\{i\left[\theta_k(x,y) +
 \alpha_k(x,y)\right]\right\}.$$
{\bf (d)} Validate the following decomposition 
$$f_k(x,y) = \left| \exp\left[i\varphi_k(x,y)\right]+ \exp\left[i\theta_k(x,y)\right] \right|.$$
Therefore, this procedure achieves image separation via random 
phase mask generation.

 \item \textbf{Nonlinear Phase Modulation with Security Enhancement.}
Via two steps, we perform a nonlinear combination of the phase masks obtained in (i), incorporating anti-cryptanalysis measures. {\bf (a)} Construct composite phase function
 $$G(x,y) := \sum_{k=1}^{m} \varphi_k(x,y),$$
where $\varphi_k$ is the phase mask derived from image 
separation generation.
{\bf (b)} Construct a modified equation system with randomized perturbation
\begin{equation*}
\left\{
\begin{aligned}
\tau_1(x,y) &:= \sum_{d=2}^{m} \varphi_d(x,y) + \mathcal{R}(x,y) \\
\tau_k(x,y) &:= \sum_{\substack{d=1, d\neq k}}^{m} \varphi_d(x,y) \quad (k=2,\dots,m)\\
G(x,y) &:= \sum_{k=1}^{m} \varphi_k(x,y),
\end{aligned}
\right.
\end{equation*}
where $\mathcal{R}$ is a random security term.
Here $\{\tau_k\}_{k=1}^m$ is kept as an essential key for decryption.  

\item \textbf{Unidirectional Phase Correction.}
We apply amplitude rectification to ensure detectability in optical domains via the following two steps.
{\bf (a)} Compute binary phase modulator
$$\gamma(x,y): =  \begin{cases} 
 1 & \text{if } \displaystyle\sum_{k=1}^m \varphi_k(x,y) < 0, \\
 0 & \text{otherwise.}
 \end{cases}$$
 {\bf (b)} Generate a real-valued interim mask
$$ h_0(x,y) := 
 \sum_{k=1}^{m} \varphi_k(x,y)\cdot \exp\left[i\pi\gamma(x,y)\right],$$
where $\exp(i\pi\gamma(x,y))$ is kept as an 
essential key for decryption.

 \item \textbf{Multi-stage Cascaded LCRP Encryption.}
Based on Theorems \ref{thm-frp} and \ref{rem-imp-1}, we encrypt the real-valued interim mask $h_0$ obtained in  Stage (iii) by employing iterative LCRPs with phase truncation in the following two steps.
{\bf (a)  Initialization.} 
Choose $h_0$ from the previous stage as the initial amplitude.
{\bf (b) Iterative Processing.}  
For each encryption stage $j \in\{1, \dots, m\}$, apply random phase modulation using the encryption key
$$\Psi_j(x,y) := h_{j-1}(x,y) \cdot \exp\left[i\theta_j(x,y)\right],$$
where $\theta_j$ is the phase mask derived from image 
separation generation in Stage (i). Then apply the symbol $m^{\beta_j}_{\boldsymbol{A}_j}$ of the LCRP and the LCT $\mathscr{L}_{\boldsymbol{A}_j}$ to obtain 
$$\Gamma_j(x,y): = \mathscr{L}_{\boldsymbol{A}_j}
\left[ \Psi_j(\cdot,\cdot) \right](x,y) \cdot m^{\beta_j}_{\boldsymbol{A}_j}(x,y),$$
where $m^{\beta_j}_{\boldsymbol{A}_j}:=(2\pi)^{-\beta_j}|(\widetilde{x},\widetilde{y})|^{-\beta_j}$ with $\widetilde{x}:=\frac{x}{b_1}$ and $\widetilde{y}:=\frac{y}{b_2}$ ($b_1$ and $b_2$ are respectively the values at position (1,2) of the first and the second matrices in ${\boldsymbol{A}_j}$ as in Definition \ref{nDLCT} with $n=2$).
Separate complex output respectively into amplitude and phase components
$$h_j(x,y) := \left| \Gamma_j(x,y) \right|$$
and
$$\xi_j(x,y): = \arg\left[\Gamma_j(x,y) \right],$$
where $\xi_j$ is kept as an  essential key for decryption.  

\item \textbf{Ciphertext Synthesis.}
The amplitude $h_m$ is the final \textit{ciphertext image}
$$C(x,y):= h_m(x,y).$$
\end{enumerate}

\textbf{Decryption Algorithm.} The decryption process, illustrated in Figure \ref{fig.ds}, constitutes the inverse of encryption and requires the correct ciphertext along with all associated keys, including the phase component $\{\xi_j\}_{j=1}^n$, binary phase modulator $\gamma$, LCRP parameter matrices $\{\boldsymbol{A}_j\}_{j=1}^m$ and order $\{\beta_j\}_{j=1}^m$, and auxiliary phase term $\{\tau_k\}_{k=1}^m$. The detailed decryption process consists of the following five stages.

\begin{enumerate}[(I)]
\item \textbf{Initialization.} We start the decryption process by setting the final ciphertext as the initial amplitude 
 $h_m(x,y): = C(x,y)$.
 
\begin{figure}[H]
\centering
\includegraphics[width=1\linewidth]{ 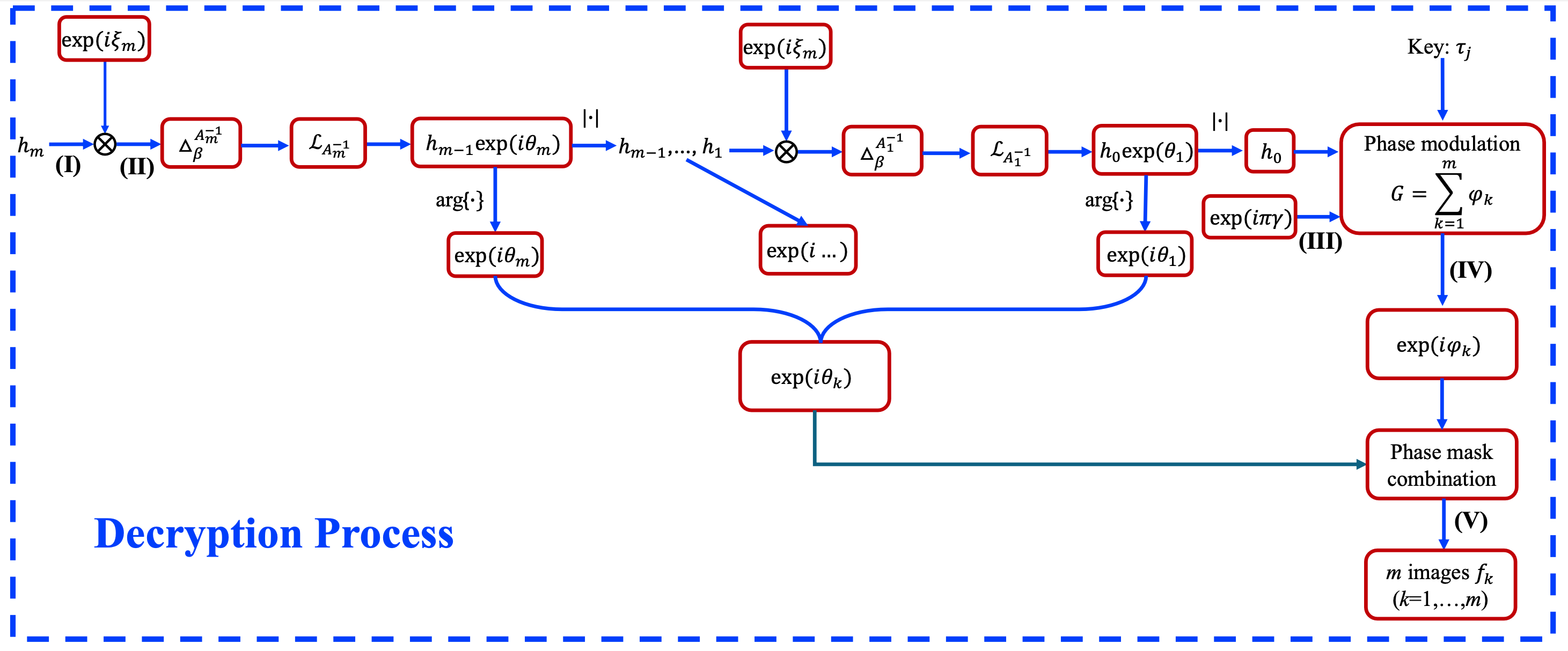}
\vspace{-10pt}
\caption{Decryption Process of MIE-AC-LCRP Method.}
\label{fig.ds}
\end{figure} 

\item \textbf{Iterative Processing.} 
We execute an iterative inverse process for each $j \in \{1, \ldots, m\}$ (in descending order from $m$ down to $1$) to retrieve the interim masks. The detailed procedure is outlined as in the following four steps.
{\bf (a) Reconstruct Complex Signal.}
We reconstruct the complex signal $\Gamma_j$ by using the amplitude $h_j$ and the
phase key $\xi_j$ as follows
$$ \Gamma_j(x,y) := h_j(x,y) \cdot \exp\left[i \xi_j(x,y)\right].$$
{\bf (b)} {\bf Apply the LCLO.} Based on Definition \ref{def-lclo} and Proposition \ref{pro-lclo}, we apply the inverse operator $\Delta_{\beta}^{A_j^{-1}}$ of the LCRP to reverse the amplitude modulation in the LCT domain.
$$V_j(x,y): = \Delta_{\beta}^{A_j^{-1}} \left[  \Gamma_j(x,y) \right],$$    
where $\boldsymbol{A}_j^{-1}$ is as in Definition \ref{def-ia} with $n=2$.            
{\bf (c)} {\bf Inverse the LCT.}
We apply the inverse LCT to obtain
$$\Psi_j(x,y) := \mathscr{L}_{A_j^{-1}} \left( V_j \right)(x,y).$$           
{\bf (d)  Extract Amplitude and Phase.}
We extract the amplitude $h_{j-1}$ and the phase $\theta_j$, respectively, by setting 
 $$h_{j-1}(x,y) := \left| \Psi_j(x,y) \right|\quad\mathrm{and}\quad\theta_j(x,y) := \arg\left[\Psi_j(x,y) \right].$$
   
\item \textbf{Phase Sum Recovery.}
We recover the combined phase sum $G$ by using the binary phase modulator  $\gamma$ as follows
$$G(x,y): = h_0(x,y) \cdot \exp\left(i \pi \gamma(x,y)\right).$$

\item \textbf{Phase Demodulation.}
By solving the following modified equation system, we recover the individual image phases $\varphi_k$ as follows
$$\begin{cases}
\varphi_k(x,y) := G(x,y) - \tau_k(x,y) &    (k\in\{2,\ldots,m\})\\
\displaystyle\varphi_1(x,y): = G(x,y) - \sum_{k=2}^m\varphi_k(x,y). 
\end{cases}$$

\item \textbf{Image Reconstruction.}
Using $\{\varphi_k\}_{k=1}^m$ and $\{\theta_k\}_{k=1}^m$, we reconstruct the original images
$$f_k(x,y) := \left| \exp\left[ i\varphi_k(x,y)\right] + \exp\left[i\theta_k(x,y)\right] \right|, \quad k\in\{1,\ldots,m\}.$$
\end{enumerate}
Through these inverse operations, the original images $\{f_k\}_{k=1}^m$ are successfully reconstructed, demonstrating the reversibility of the proposed MIE-AC-LCRP scheme under the correct key usage.

This method proposed in this subsection is also suitable for color image encryption. To be precise, noting that the color image can be decomposed into red, green, and blue channels, we separately process  each channel by  using the MIE-AC-LCRP method, and then recompose them into a color image. This decomposition approach was also given in \cite{flyy24}.

\subsection{Numerical Simulations}\label{4.2}
  
This subsection presents numerical simulations to validate the effectiveness of the proposed multi-image encryption and decryption scheme based on the MIE-AC-LCRP method. Three original images, namely Baboon, Peppers, and Barbara, are selected as test samples. We examine the influence of the key parameters, namely  the orders $\beta_1$, $\beta_2$, and $\beta_3$ of the LCRT and the parameter matrices $\boldsymbol A_1$, $\boldsymbol A_2$, and $\boldsymbol A_3$ of the LCRT, on the encryption performance.

Figure \ref{fig.421} illustrates the encryption and the decryption results obtained by using the correct keys, displaying the original, the encrypted, and the correctly decrypted images. The key parameters used are 
$\beta_1=1$, $\beta_2=1.5$, $\beta_3=0.7$,
$\boldsymbol A_1=(A_1^1,A_1^2)$
with $A_1^1=\begin{bmatrix}{6}&{7}\\{5}&{6}
\end{bmatrix}$ and  $A_1^2=\begin{bmatrix}{1}&{20}\\{0}&{1}
\end{bmatrix}$,
$\boldsymbol A_2=(A_2^1,A_2^2)$
with $A_2^1=\begin{bmatrix}{5}&{12}\\{2}&{5}\end{bmatrix}$ and  $A_2^2=\begin{bmatrix}{1}&{11}\\{9}&{100}
\end{bmatrix}$, and
$\boldsymbol A_3=(A_3^1,A_3^2)$ 
with $A_3^1=\begin{bmatrix}{7}&{11}\\{5}&{8}
\end{bmatrix}$ and  $A_3^2=\begin{bmatrix}{11}&{21}\\{1}&{2}\end{bmatrix}$. 

Next, we conduct three numerical simulations with some error keys. 

Simulation (i) consists of Figure \ref{fig.422}(I)--(III), which show the decryption outcomes when a single parameter error is introduced in $\boldsymbol A_1 = (A_1^1, A_1^2)$, $\boldsymbol A_2 = (A_2^1, A_2^2)$, and $\boldsymbol A_3 = (A_3^1, A_3^2)$, respectively. Specifically, three parameters from the correct key set in Figure \ref{fig.421}, $A_1^1$, $A_2^2$, and $A_3^1$, are replaced, respectively, by erroneous values $A_1^1=\begin{bmatrix}{6}&{6}\\{5}&{6}\end{bmatrix}$, $A_2^2=\begin{bmatrix}{1}&{11}\\{8}&{100}\end{bmatrix}$, and $A_3^1=\begin{bmatrix}{7}&{11}\\{5}&{9}\end{bmatrix}$. All other key components remain unchanged. The decrypted results, shown in Figure \ref{fig.422}(I)--(III), confirm that altering even a single parameter in any $\boldsymbol A_i$ results in a complete loss of the  original image information. 

\begin{figure}[H]
\centering
\includegraphics[width=1\linewidth]{ 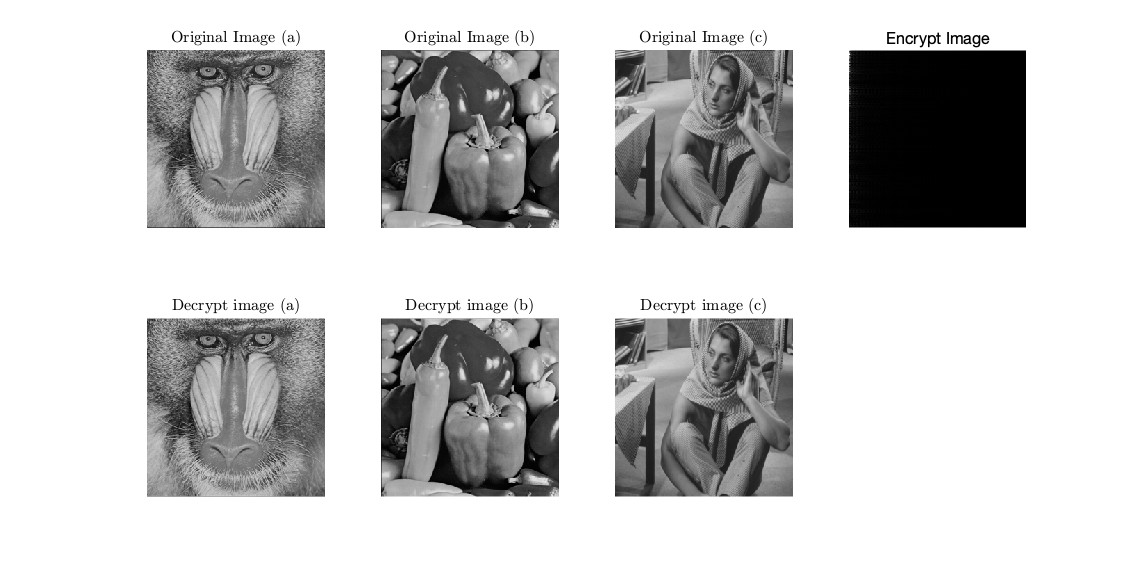}	
\vspace{-30pt}
\caption{Correct Encryption and Decryption Simulation.}
\label{fig.421}
\end{figure}

\begin{figure}[H]
\centering
\includegraphics[width=0.99\linewidth]{ 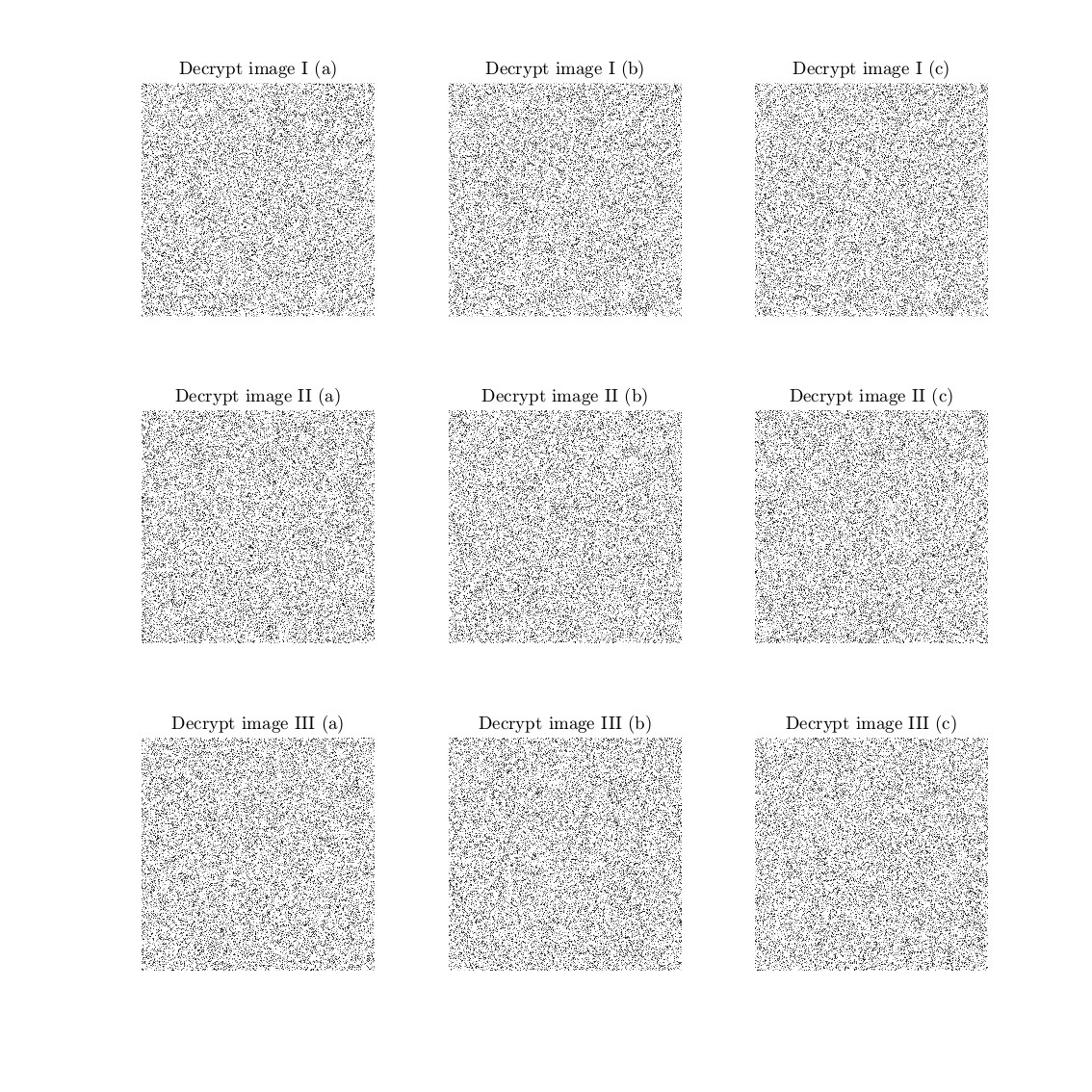} 	
\vspace{-50pt}
\caption{Error Encryption and Decryption Simulation (i).}
\label{fig.422}
\end{figure}

\begin{figure}[H]
\centering
\includegraphics[width=1\linewidth]{ 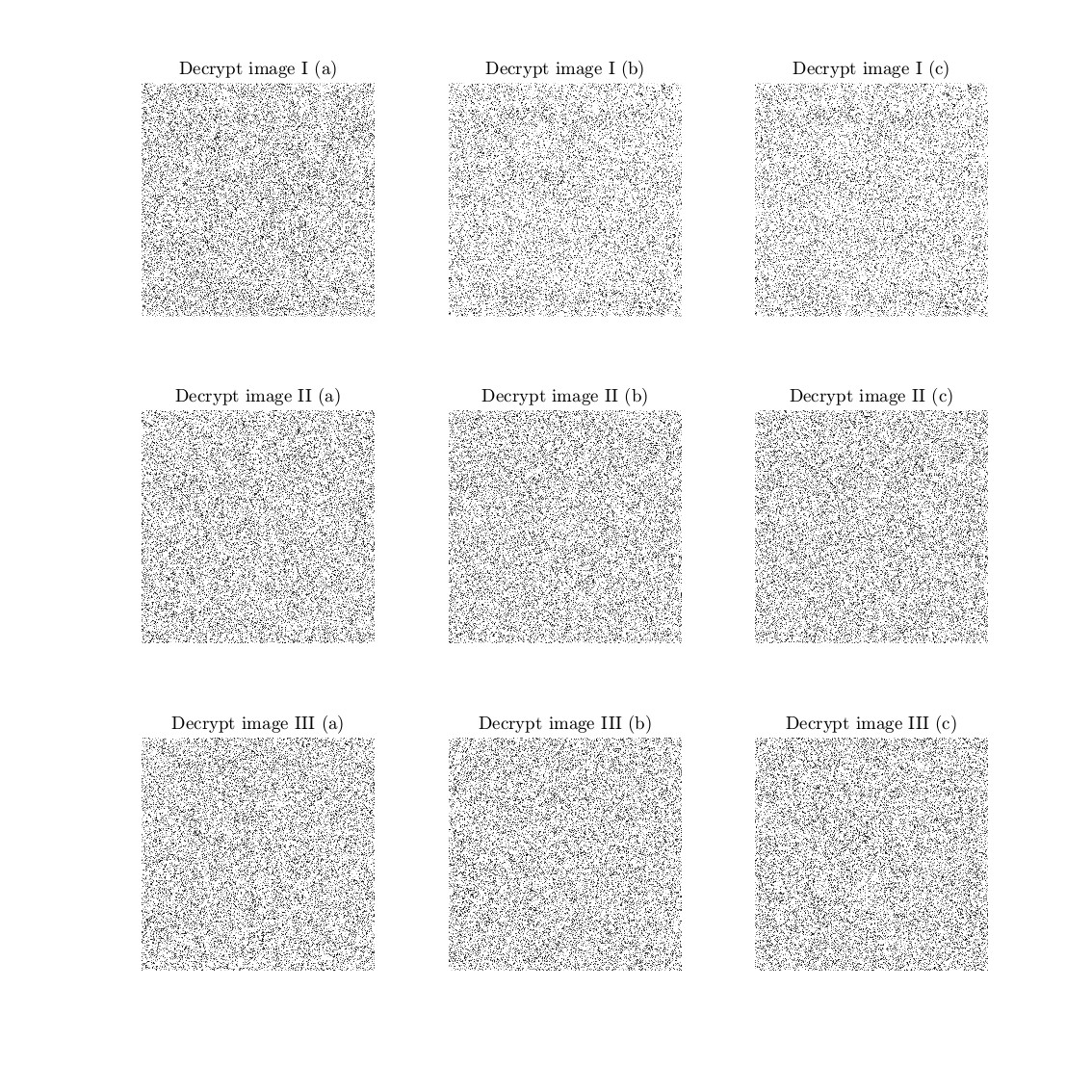}	
\vspace{-45pt}
\caption{Error Encryption and Decryption Simulation (ii).}
\label{fig.423}
\end{figure}

\vspace{-11pt}
\begin{table}[H]
	\caption{Decryption Images Corresponding to Different Keys.}
	\centering
	\begin{tabular}{ccc}
		\toprule[1.5pt]
		\textbf{\ (Correct/{\red Error}) key\quad} & \textbf{\quad(Correct/{\red Error}) Parameters\quad}    & \quad \textbf{Image}\\
		\midrule[1pt]
		Correct Key & 
		${\boldsymbol{A}_1},{\boldsymbol{A}_2},{\boldsymbol{A}_3},\ 
		\beta_1 ,\beta_2,\beta_3$ 
		&  Figure \ref{fig.421} \\
		\toprule[1.5pt]
		{\red Error} Key &
		${\red{\boldsymbol{A}_1}},{\boldsymbol{A}_2},{\boldsymbol{A}_3},\ 
		\beta_1 ,\beta_2,\beta_3$ 
		& Figure \ref{fig.422}(I)\\
		{\red Error} Key & 
		${\boldsymbol{A}_1},{\red{\boldsymbol{A}_2}},{\boldsymbol{A}_3},\ 
		\beta_1 ,\beta_2,\beta_3$ 
		&  Figure \ref{fig.422}(II) \\
		{\red Error} Key & 
		${\boldsymbol{A}_1},{\boldsymbol{A}_2},{\red{\boldsymbol{A}_3}},\ 
		\beta_1 ,\beta_2,\beta_3$  
		&  Figure \ref{fig.422}(III)\\
		\toprule[1.5pt]
		{\red Error} Key & 
		${\boldsymbol{A}_1},{\boldsymbol{A}_2},{\boldsymbol{A}_3},\ 
		{\red \beta_1},\beta_2,\beta_3$ 
		&  Figure \ref{fig.423}(I)\\
		{\red Error} Key & 
		${\boldsymbol{A}_1},{\boldsymbol{A}_2},{\boldsymbol{A}_3},\ 
		\beta_1 ,{\red \beta_2},\beta_3$     
		& Figure \ref{fig.423}(II)\\
		{\red Error} Key & 
		${\boldsymbol{A}_1},{\boldsymbol{A}_2},{\boldsymbol{A}_3},\ 
		\beta_1 ,\beta_2,{\red \beta_3}$  
		&  Figure \ref{fig.423}(III)\\
		\toprule[1.5pt]
		{\red Error} Key & 
		${\red \boldsymbol{A}_1},{\boldsymbol{A}_2},{\boldsymbol{A}_3},\ 
		{\red \beta_1},\beta_2,\beta_3$ 
		& Figure \ref{fig.424}(I)\\
		{\red Error} Key & 
		${\boldsymbol{A}_1},{\red \boldsymbol{A}_2},{\boldsymbol{A}_3},\ 
		\beta_1,{\red \beta_2},\beta_3$    
		&  Figure \ref{fig.424}(II)\\
		{\red Error} Key & 
		${\boldsymbol{A}_1},{\boldsymbol{A}_2},{\red \boldsymbol{A}_3},\ 
		\beta_1 ,\beta_2,{\red \beta_3}$   
		&  Figure \ref{fig.424}(III)\\
		\bottomrule[1.5pt]
	\end{tabular}
	\label{tab:1}
\end{table}

Simulation (ii) consists of  Figure \ref{fig.423}(I)--(III), which reflect the decryption effects when incorrect values of $\beta_1$, $\beta_2$, and $\beta_3$ are used. In this case, the correct values of $\beta_1$, $\beta_2$, and $\beta_3$ from Figure \ref{fig.421} are replaced by
$\beta_1 = 1.1$, $\beta_2 =1.45$, and $\beta_3 =0.72$. All other key elements are kept correct. As shown in Figures \ref{fig.423}(I)--(III), even a slight deviation in any $\beta_i$ will surely lead to a total loss of recoverable image content.

\vspace{-11pt}
\begin{figure}[H]
\centering
\includegraphics[width=0.99\linewidth]{ 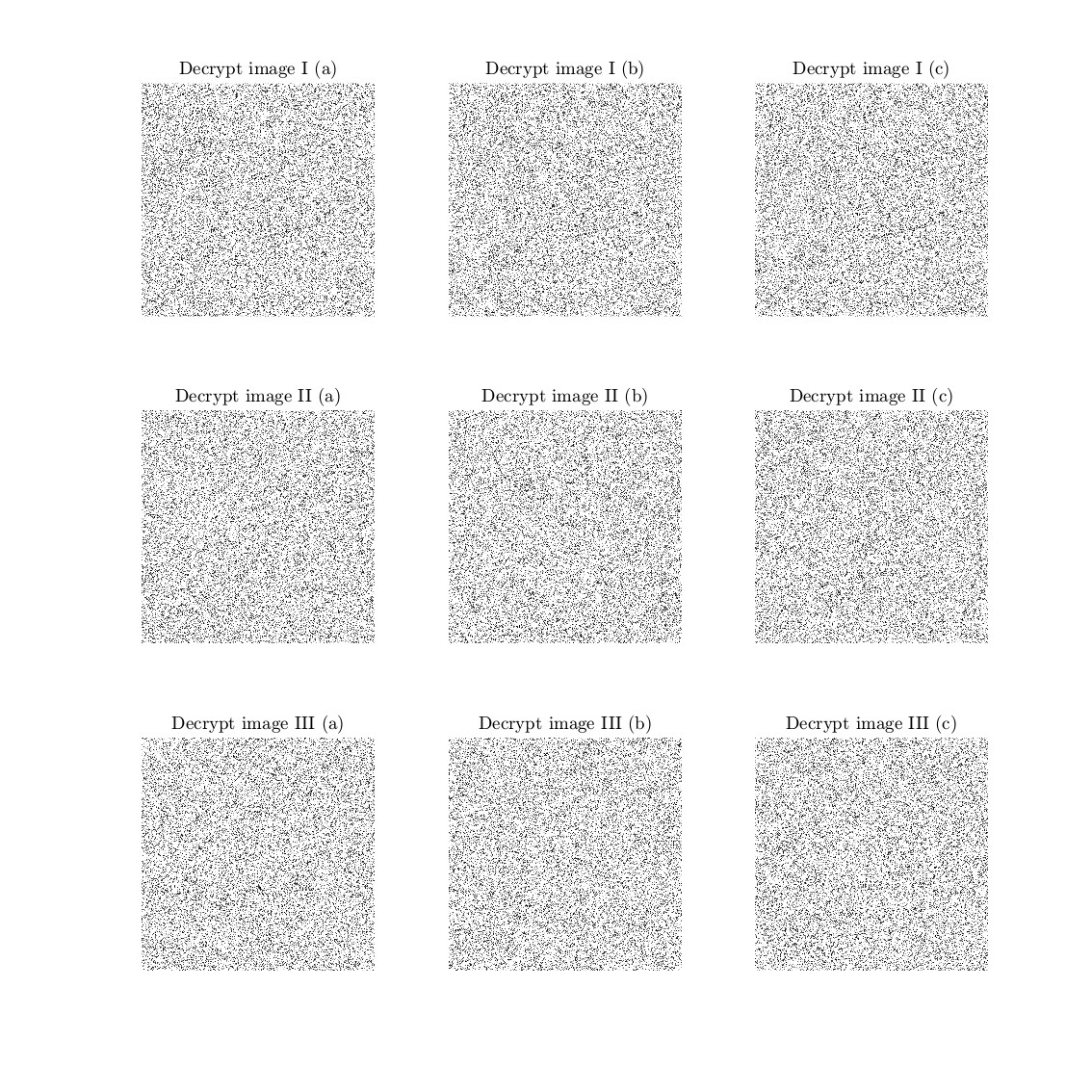} 	
\vspace{-50pt}
\caption{Error Encryption and Decryption Simulation (iii).}
\label{fig.424}
\end{figure}

Simulation (iii) consists of  Figure \ref{fig.424}(I)--(III), which present the decryption results when errors occur simultaneously in both parameters $\boldsymbol A_i$ and $\beta_i$. The erroneous $\boldsymbol A_i$ values are the same as those in Figure \ref{fig.422}, and the erroneous $\beta_i$ values match those in Figure \ref{fig.423}. Experimental results confirm that the original image information is entirely irrecoverable under such compound errors.

Table \ref{tab:1} summarizes the decryption outcomes under various key configurations, highlighting the relationship between the correctness of the encryption keys, namely the LCRP matrices $\boldsymbol A_1$, $\boldsymbol A_2$,  and $\boldsymbol A_3$ and the LCRP orders $\beta_1$, $\beta_2$, and $\beta_3$, and the resulting decrypted images. When all keys are correct, the original images are perfectly reconstructed. However, if any key is erroneous, whether in $\boldsymbol A_i$, $\beta_i$, or both, the decrypted images become entirely unrecognizable, underscoring the high key sensitivity and security of the proposed encryption system.

In summary, the numerical simulations affirm the feasibility and the robustness of the proposed encryption scheme. The correct decryption results (Figure \ref{fig.421}) validate the reversibility of the MIE-AC-LCRP method encryption process, while the sensitivity analyses (Figures \ref{fig.422}--\ref{fig.424} and Table \ref{tab:1}) emphasize its strong reliance on both the LCRT parameter matrices $\{\boldsymbol A_i\}_{i=1}^3$ and the LCRT orders $\{\beta_i\}_{i=1}^3$. Even minor deviations in these parameters will result in complete decryption failure, highlighting the security and the key sensitivity of the proposed MIE-AC-LCRP method cryptosystem.
 
\subsection{Security Analysis}\label{4.3}

Having examining the feasibility of the MIE-AC-LCRP method cryptosystem and its sensitivity to core key parameters (the LCRP matrices $\{\boldsymbol{A}_i\}_{i=1}^{3}$ and the  LCRP orders $\{\beta_i\}_{i=1}^{3}$) in the preceding section, we now proceed to conduct a systematic assessment on its robustness and its security from the following four aspects: a quantitative key sensitivity analysis (Subsection \ref{3.3.1}), an examination of its resistance to statistical attacks (Subsection \ref{3.3.2}), and tests of its performances, respectively, under noise attacks (Subsection \ref{3.3.3}) and occlusion attacks (Subsection \ref{3.3.4}).

\subsubsection{Sensitivity Analysis}\label{3.3.1}
Sensitivity analysis is crucial for evaluating the robustness 
of the encryption system against key estimation attacks. We 
quantitatively assess the sensitivity of two core keys: the 
LCRT matrix parameters $\{\boldsymbol{A}_i\}_{i=1}^3$ and the LCRT 
orders $\{\beta_i\}_{i=1}^3$. The Mean Squared Error
(for short, MSE) $\text{MSE}(I_{\text{orig}},I_{\text{dec}})$  between decrypted and original images 
serves as our primary metric
\begin{equation*}
\text{MSE}(I_{\text{orig}},I_{\text{dec}}) := \frac{1}{MN} \sum_{i=1}^{M} 
\sum_{j=1}^{N} [I_{\text{orig}}(i,j) - I_{\text{dec}}(i,j)]^2,
\end{equation*}
where $I_{\text{orig}}$ and $I_{\text{dec}}$ denote the original and the decrypted images (each is of dimensions $M \times N$) and where $I_{\text{orig}}(i,j)$ and $I_{\text{dec}}(i,j)$ represent the pixel intensity values at position $(i,j)$ in the respective images.

Next, without loss of generality, we may conduct a sensitivity analysis only on the parameter $A_i^2[1,2]$, which denotes the row 1 and the column 2 of the second matrix in the LCRP parameter matrix $\boldsymbol{A}_i:=(A_i^1, A_i^2)$, by using three $512 \times 512$ test images: Baboon, Peppers, and Barbara. For each image $I_i$, we vary its specific parameter $A_i^2[1,2]$ around the true value $A_{i,\text{cor}}^2[1,2]$ in steps of $\Delta=2$, evaluating 15 points on each side (31 points total). To ensure the unimodular condition $\det(A_i^2) = 1$ holds, we dynamically update the element $A_i^2[2,1]$ by setting
\begin{equation*} 
A_i^2[2,1] := \frac{A_i^2[1,1] \cdot A_i^2[2,2] - 1}{A_i^2[1,2]}.
\end{equation*}
For each perturbed parameter set of $A_i$, we decrypt the corresponding ciphertext image of the $i$-th image and compute the MSE between the decrypted image and the original image.

\begin{figure}[H]
	\centering
	\includegraphics[width=0.99\linewidth]{ 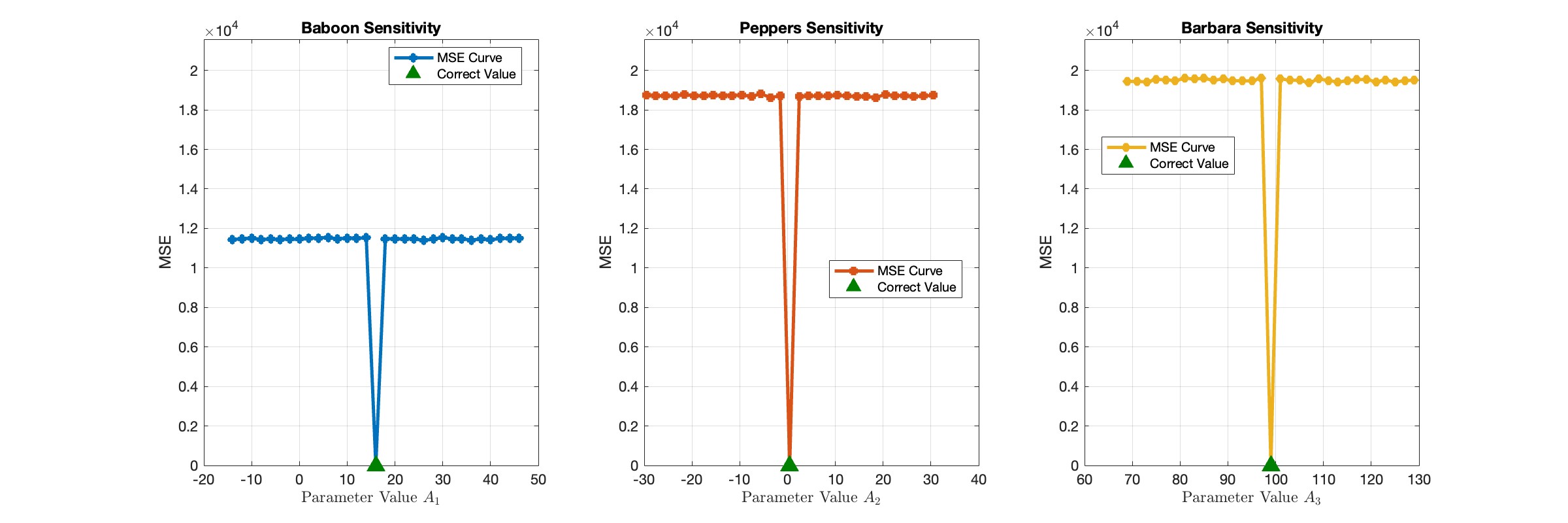} 	
	\caption{MSEs Between Decrypted Image of Error Key $A_i^2[1,2]$  and the Original Image.}
	\label{fig. 11}
\end{figure}

Figure \ref{fig. 11} demonstrates the relationship 
between MSE variation and the corresponding parameter 
$A_i^2[1,2]$ for each image. The correct parameter 
values and the minimal MSE values obtained with 
incorrect parameters are, respectively,
Baboon: $A^2_{1,\text{cor}}[1,2]  = 15.99$,
MSE$_{\text{min}} =1.114836 \times 10^{4}$;
Peppers: $A^2_{2,\text{cor}}[1,2]   =0.4495$, 
MSE$_{\text{min}} = 1.86587 \times 10^{4}$;
Barbara: $A^2_{3,\text{cor}}[1,2]  = 99$, 
MSE$_{\text{min}} = 1.94533 \times 10^{4}$.
Notably, the MSE exhibits exponential growth 
with increasing parameter deviation. For the Baboon 
image, a minor perturbation in $A_1^2[1,2]$ elevates
the MSE by four orders of magnitude ($10^4$). This 
hypersensitivity confirms that slight parameter deviations 
completely disrupt the decryption process. Crucially, we observe 
relatively plateaued MSE curves under incorrect parameters,
indicating attackers cannot deduce valid keys through MSE 
analysis, thereby enhancing resistance against statistical 
cryptanalysis.

Now, we evaluate the sensitivity of the parameters $\{\beta_i\}_{i=1}^3$ by using the same three test images (Baboon, Peppers, and Barbara). For each image $I_i$, we perturb its corresponding parameter $\beta_i$ around the correct value $\beta_{i,\text{cor}}$ with a step size of $\Delta = 0.025$, sampling 15 points in each direction (31 points in total), while all other parameters remain unchanged.

Figure \ref{fig. 12} demonstrates the relationship between the variation in MSE and the corresponding parameter $\beta_i$ for each test image. The correct parameter values and the minimal MSE values achievable with incorrect parameters are, respectively,
Baboon: $\beta_{1,\text{cor}} = 1$, $\text{MSE}_{\text{min}} = 1.02834 \times 10^{4}$;
Peppers: $\beta_{2,\text{cor}} = 1.8$, $\text{MSE}_{\text{min}} = 1.76192 \times 10^{4}$;
Barbara: $\beta_{3,\text{cor}}= 0.3$, $\text{MSE}_{\text{min}} = 1.93988 \times 10^{4}$.
We observe that the impact of $\beta_i$ on the MSE is qualitatively similar to that of $A_i^2[1,2]$ on the MSE.

 \begin{figure}[H]
\centering
\includegraphics[width=0.99\linewidth]{ 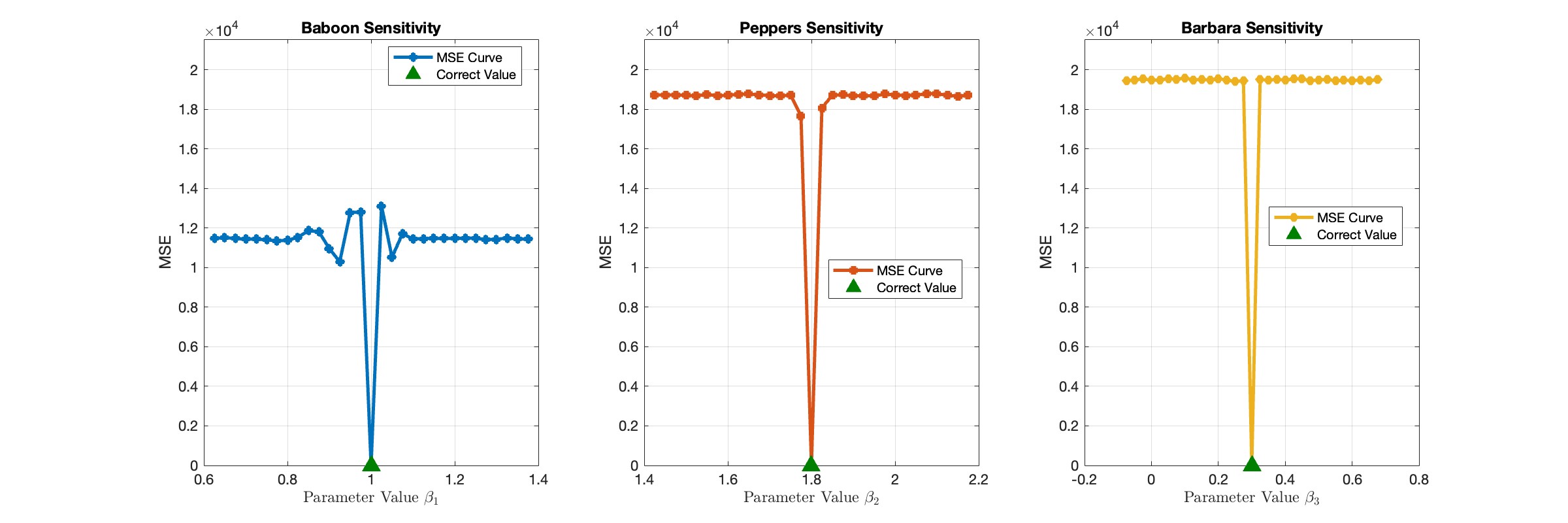} 	
\caption{MSEs Between Decrypted Image of 
	Error Key $\beta_i$  and  Original Image.}
\label{fig. 12}
\end{figure}

This hypersensitivity is not merely an empirical observation but a direct analytical consequence of Theorems \ref{rem-imp-1} and \ref{thm-polygon}.
Specifically, the cryptosystem's nonlinear defense mechanism is rooted in the two completely different properties of the LCRP: the global divergence risk for non-decaying backgrounds proven in Theorem  \ref{rem-imp-1}, and the local analytic instability at image edges established in Theorem  \ref{thm-polygon}.
As further analyzed in Remarks \ref{zyr-1}, \ref{zyr-2}, and \ref{the-spaces} and in Tables \ref{tab:convergence-comparison} and \ref{tab:sensitivity-mechanism}, the structural mismatch and the phase misalignment during near-critical parameter recovery provide a rigorous theoretical guarantee for the system's robustness against key-estimation attacks.

Although the MIE-AC-LCRP method proposed in this article is designed for multi-images, its security advantages remain equally prominent in single-image encryption scenarios. Figure~\ref{fig. bjmse} presents a key sensitivity analysis of the proposed scheme applied to single-image encryption. To ensure a fair comparison, we adopt the same test image (Cameraman) as used in \cite[Figure~12 ]{flyy24}, and compare our method with the encryption approach based on the fractional Riesz potential. The experimental results demonstrate that, when the key parameters, including the LCRP order $\beta$ and the LCRT matrix $\boldsymbol{A}$, are subject to minor perturbations, the MSE between the decrypted and the original images for our method is significantly higher than that reported in \cite{flyy24} by approximately a factor of three. This result demonstrates that, even in single-image scenarios, the proposed MIE-AC-LCRP method exhibits higher key sensitivity and stronger security performance. Slight key perturbations lead to severe distortion in the decrypted image, effectively countering key estimation attacks and further affirming the superiority of the present approach.

\begin{figure}[H]
\centering
\includegraphics[width=1\linewidth]{ 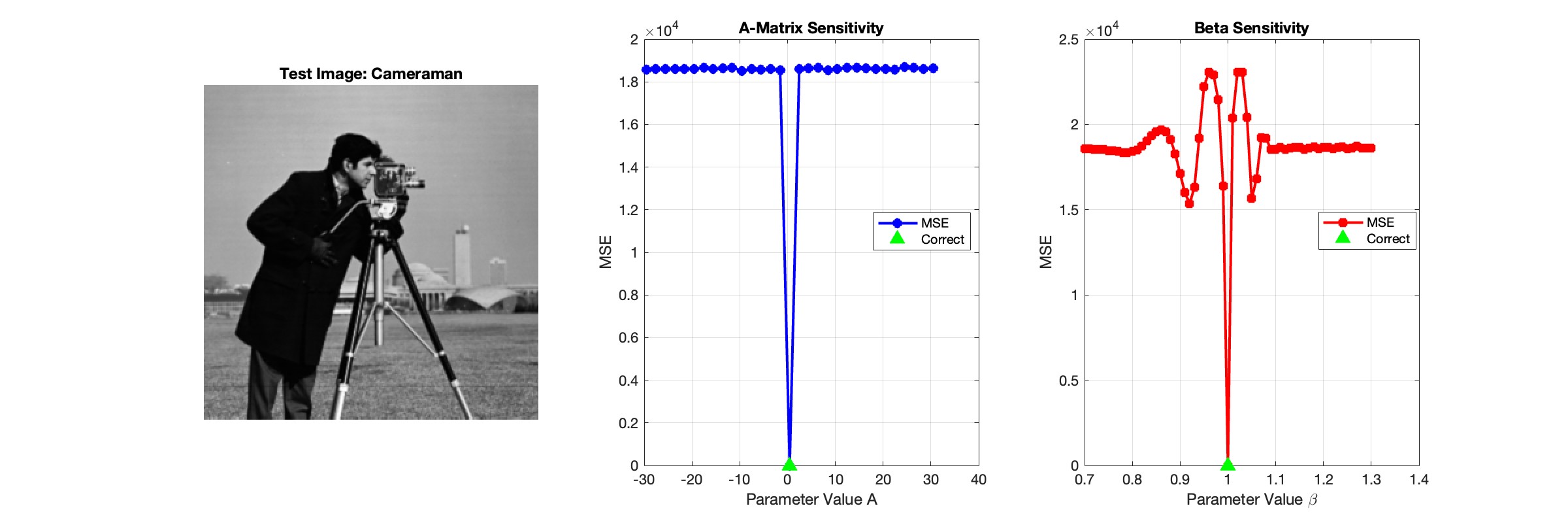} 	
\caption{MSEs of Erroneously Decrypted Images in Single-Image Encryption.}
\label{fig. bjmse}
\end{figure}

The above experiments further verify that both the parameters $\boldsymbol{A}$ and $\beta$ in the  MIE-AC-LCRP method exhibit nonlinear hypersensitivity, with minimum mean square error values reaching orders of magnitude of $10^4$. Moreover, the MSE curves in regions of incorrect parameters display a plateau-like behavior. These characteristics together ensure the algorithm's robustness against quantitative key perturbation estimation attacks. The underlying sensitivity mechanism stems from the structure of the LCRP symbol system and the matrix parameters, where minor deviations cause catastrophic mismatches during decryption, constituting a fundamental defense mechanism.

\subsubsection{Statistical Analysis}\label{3.3.2}
The resilience of an image encryption scheme against 
statistical attacks is critically evaluated by two key 
metrics: internal correlation and global correlation. 
The former quantifies the adjacency dependency on
pixels, while the latter gauges the similarity between 
the statistical profiles of the cipher and the plain images.

The internal correlation is assessed by computing 
the Pearson correlation coefficient for adjacent 
pixels along the horizontal, the vertical, and the diagonal 
directions. The \emph{Pearson correlation coefficient} 
$r$ is formally defined by setting
\begin{equation*}
r:= \frac{\sum\limits_{i=1}^{n}(x_i - \widetilde{x})
(y_i - \widetilde{y})}{\sqrt{\sum\limits_{i=1}^{n}(x_i - 
\widetilde{x})^2}  \sqrt{\sum\limits_{i=1}^{n}
(y_i - \widetilde{y})^2}},
\end{equation*}
where $\{(x_i,y_i)\}\}_{i=1}^{n}$ denotes a sequence of adjacent 
pixel pairs and where $\widetilde{x}$ and $\widetilde{y}$ are their 
corresponding means of  $\{x_i\}_{i=1}^{n}$ and  $\{y_i\}_{i=1}^{n}$.

As shown in Table \ref{tab.421}, all original 
images demonstrate strong spatial correlations, 
with coefficients nearing unity (for instance, 0.97361 in 
the vertical direction). In stark contrast, the ciphertext 
images exhibit near-zero correlations ($|r| < 0.015$), 
indicating that the encryption effectively scrambles the
local pixel correlations.

The results in Table \ref{tab.422} demonstrate that 
the decryption process is highly sensitive to both 
the matrix parameters $A_j$ and the orders $\beta_j$. 
When either is incorrect, the correlation coefficients 
of the decrypted outputs are in a near-zero 
range $[-0.012166, 0.010027]$ for erroneous $A_j$ and
$[-0.0061311, 0.0054178]$ for erroneous $\beta_j$. 
This indicates that any deviation from the correct key 
results in a decrypted image that preserves the 
statistical profile of random noise.

To quantify the global correlation, we need to compute the \textit{Pearson correlation coefficient} $\rho$ between the original image $I_{\text{orig}}$ and its encrypted version $I_{\text{enc}}$, which is defined by setting
\begin{equation*}
	\rho := \frac{\sum\limits_{i=1}^{M}\sum\limits_{j=1}^{N}[I_{\text{orig}}(i,j) - \mu_{\text{orig}}][I_{\text{enc}}(i,j) - \mu_{\text{enc}}]}{\sqrt{\left\{\sum\limits_{i=1}^{M}\sum\limits_{j=1}^{N}[I_{\text{orig}}(i,j) - \mu_{\text{orig}}]^2\right\} \left\{\sum\limits_{i=1}^{M}\sum\limits_{j=1}^{N}[I_{\text{enc}}(i,j) - \mu_{\text{enc}}]^2\right\}}},
\end{equation*}
where $I_{\text{orig}}(i,j)$ and $I_{\text{enc}}(i,j)$ denote,
respectively, the pixel intensities at coordinate $(i,j)$ 
of the original and the encrypted image matrices, $M$ 
and $N$ are the image dimensions, and $\mu_{\text{orig}}$ and $\mu_{\text{enc}}$ represent the global mean pixel intensities of $I_{\text{orig}}$ and $I_{\text{enc}}$, respectively.

\vspace{-10pt}
\begin{table}[H]
\caption{Internal Correlations Between  Original 
	Image and  Encrypted Image.}
\centering
\includegraphics[width=1\linewidth]{ 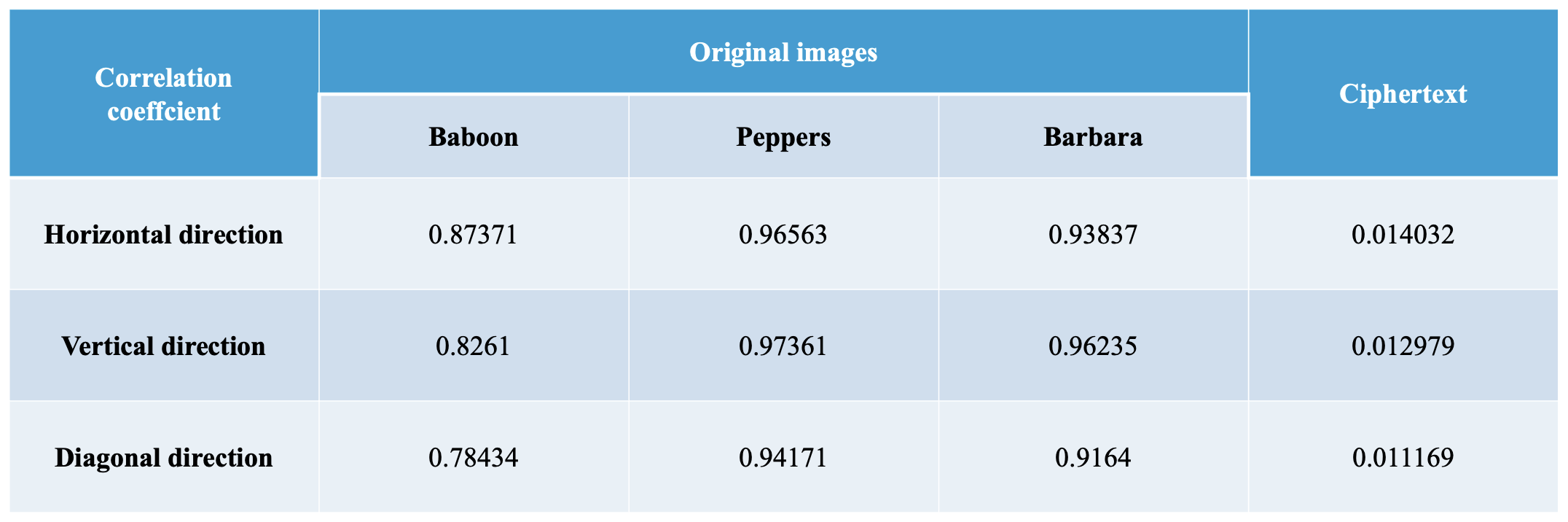}	
\vspace{-20pt}
\label{tab.421}
\end{table}

 \vspace{-10pt}
\begin{table}[H]
\caption{Internal Correlations in Erroneously 
Decrypted Image.}
	\centering
\subfigure[Internal Correlations: Decryption with 
Incorrect $A_j$.]
{\includegraphics[width=1\linewidth]
	{ 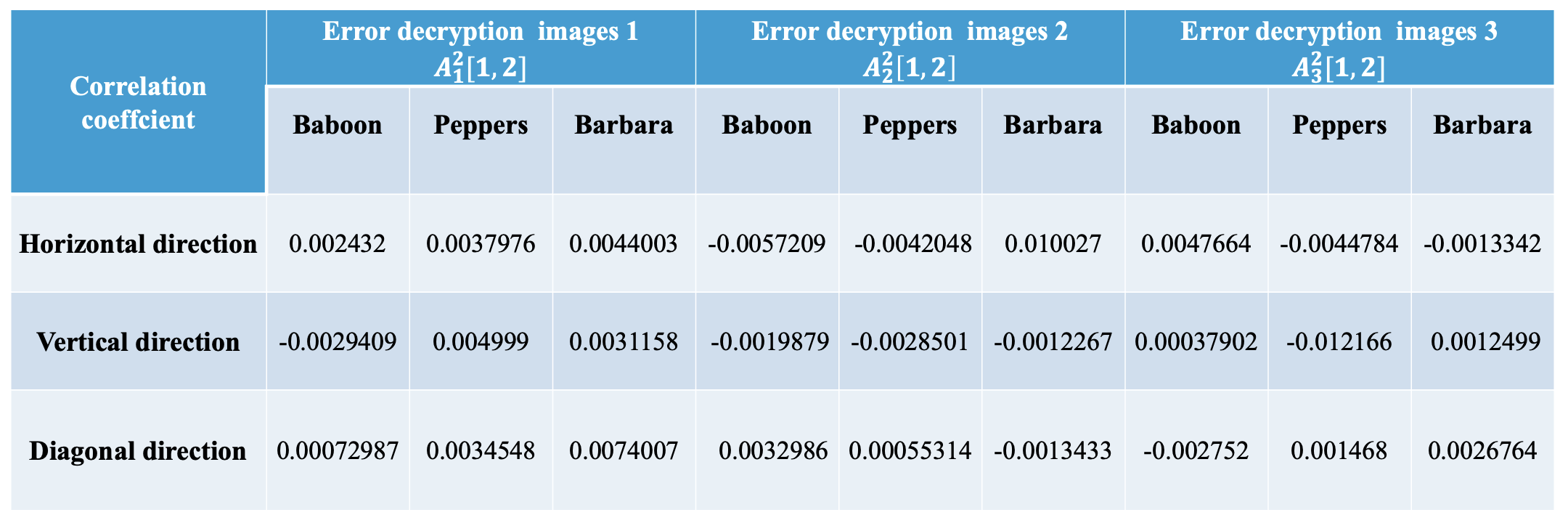}}	
\vspace{-20pt}
\subfigure[Internal Correlations: Decryption with 
Incorrect  $\beta_j$.]
{\includegraphics[width=1\linewidth]
	{ 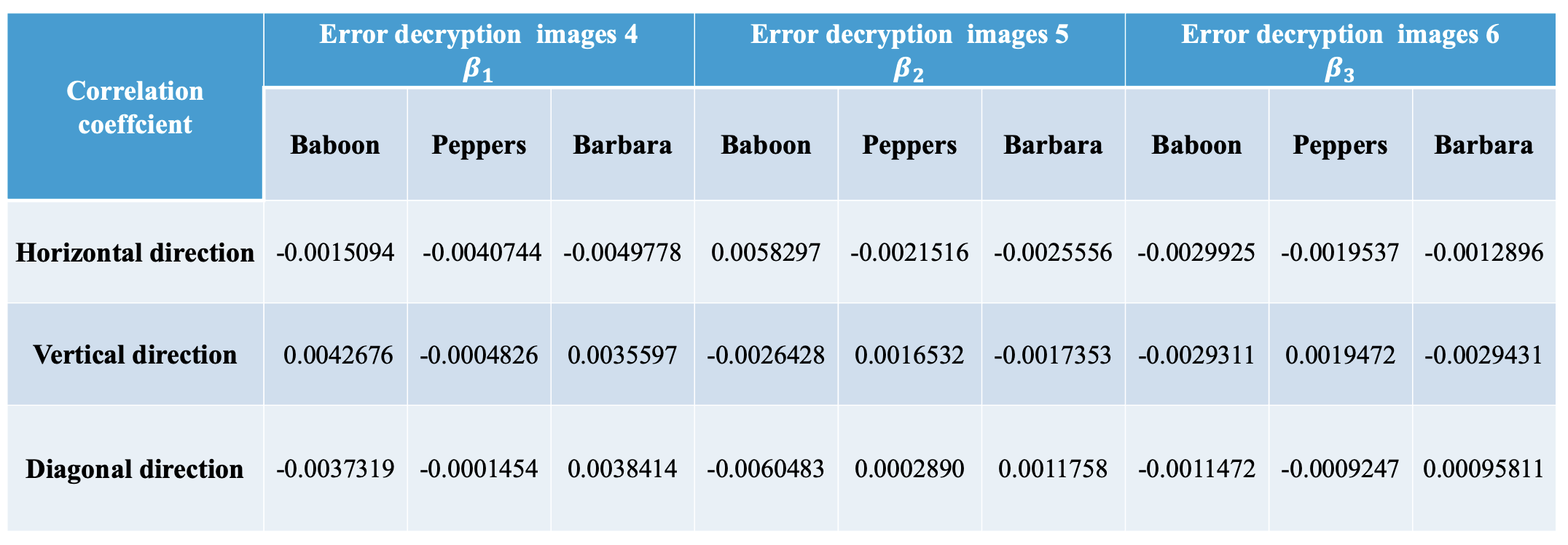}}	
\vspace{5pt}
\label{tab.422}
\end{table}

\begin{table}[H]
\caption{Global Correlations Between Original 
Image and  Encrypted Image.} 
\centering
\includegraphics[width=1\linewidth]{ 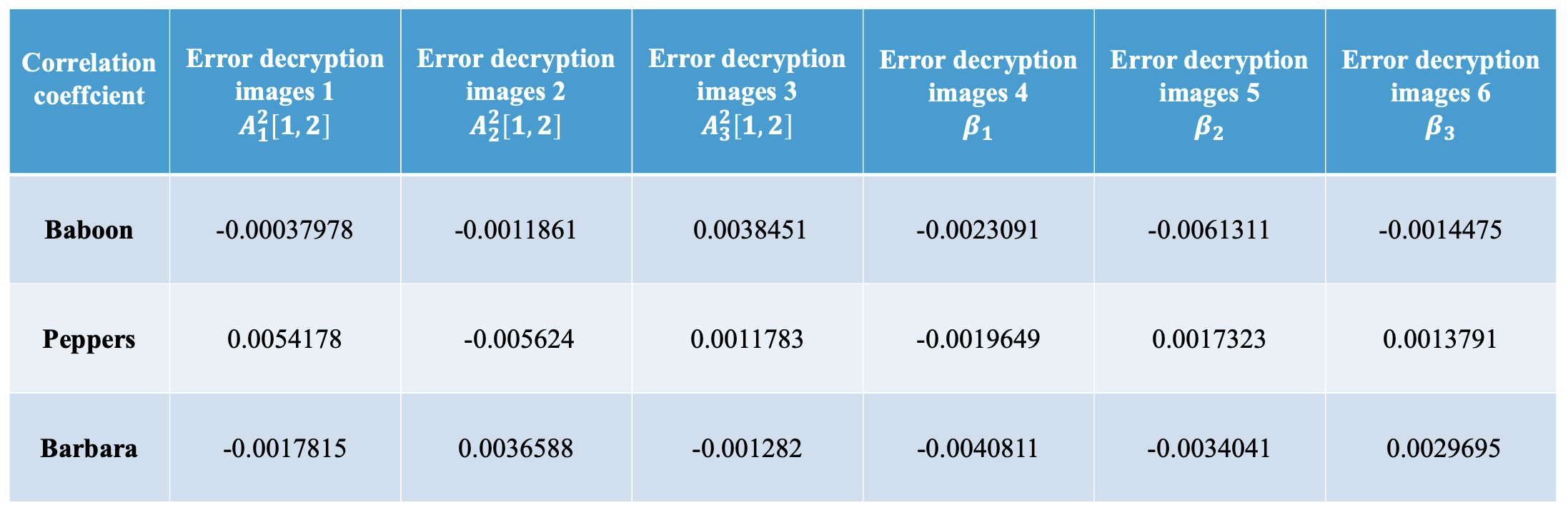} 
\vspace{-20pt} 
\label{tab.424}
\end{table}
The results in Table \ref{tab.424} confirm that any 
deviation in the matrix parameters $A_j$ or the orders 
$\beta_j$ during decryption results in a decrypted image
that is statistically indistinguishable from random noise, 
with correlation coefficients consistently approaching 
zero. This indicates a complete breakdown of structural
information, regardless of which specific key 
parameter is altered.

\begin{figure}[H]
\centering	
\subfigure[]{
\begin{minipage}{0.28\linewidth}
\centering
{\footnotesize Cameraman} \\[0.1cm]
\includegraphics[width=\linewidth]{ 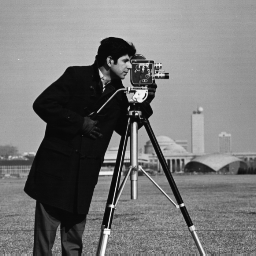}\vspace{0.2cm}
\end{minipage}}
\subfigure[]{
\begin{minipage}{0.28\linewidth}
\centering
{\footnotesize Goldhill} \\[0.1cm]
\includegraphics[width=\linewidth]{ 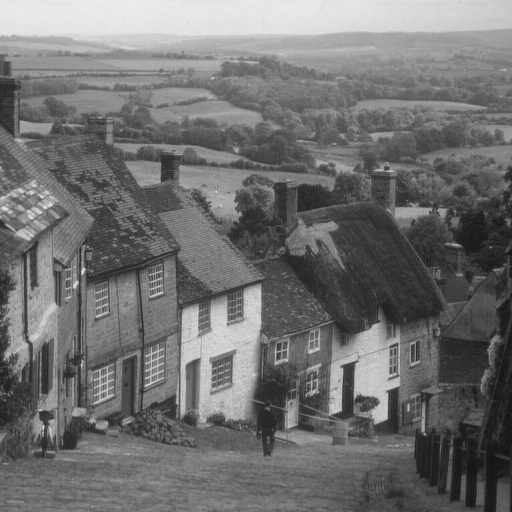}\vspace{0.2cm}
\end{minipage}
}
\subfigure[]{
\begin{minipage}{0.28\linewidth}
\centering
{\footnotesize Couple} \\[0.1cm]
\includegraphics[width=\linewidth]{ 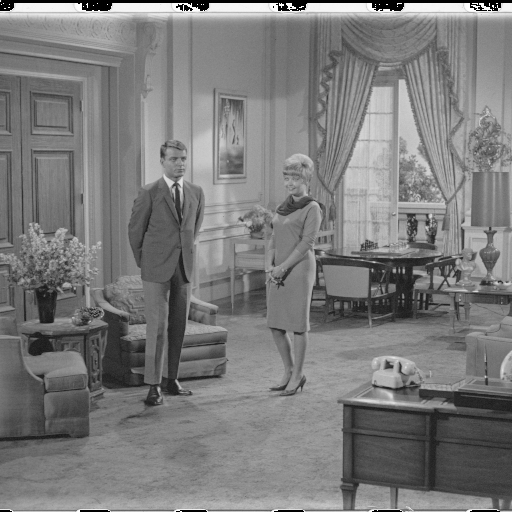}\vspace{0.2cm}
\end{minipage}
}
\vspace{-0.2cm}
\caption{New Original Images to be Encrypted.}
\label{fig.3ys}
\end{figure}

\begin{figure}[H]
\centering
\includegraphics[width=0.9\linewidth]{ 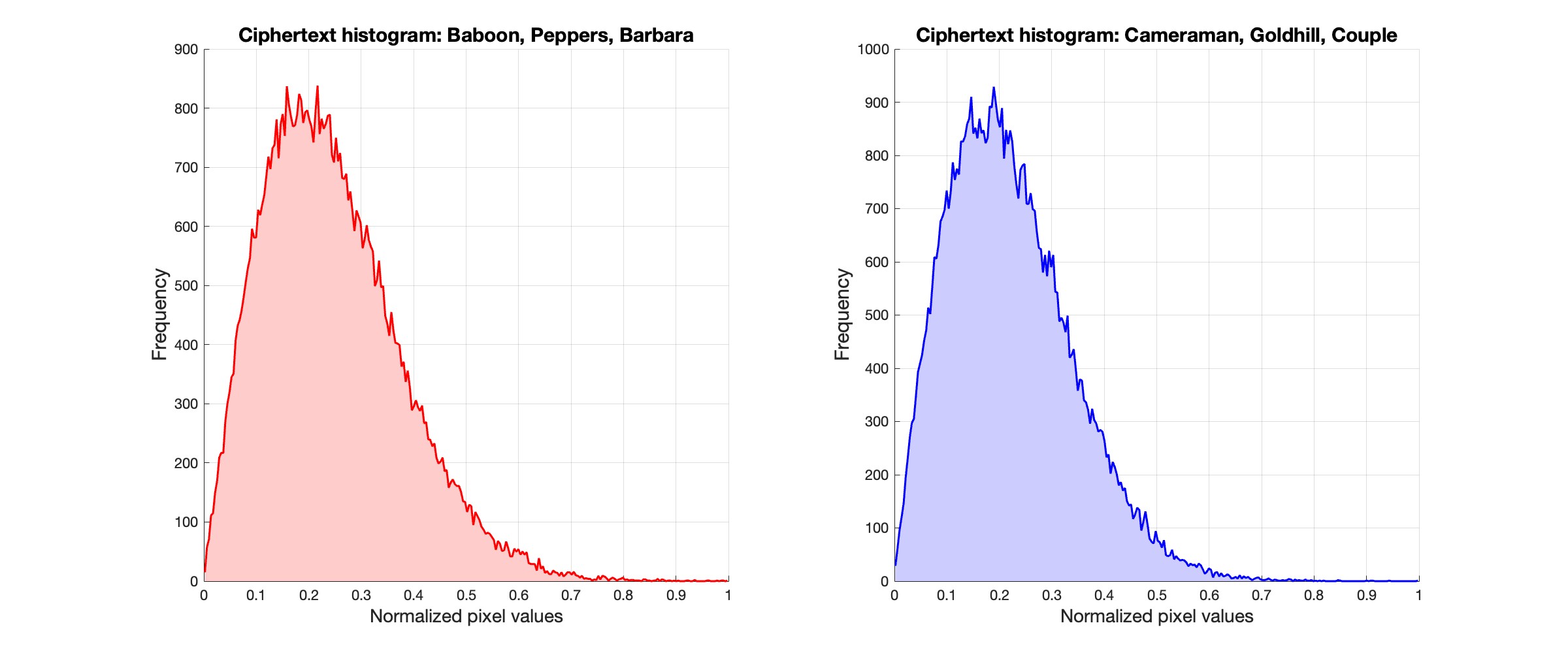}
\caption{Histograms of Encrypted Images.}
\label{fig. zhifangtu}
\end{figure}
 \vspace{-0.5cm} 	
 
 Figure \ref{fig.3ys} presents a set of new original images used for encryption.  Figure \ref{fig. zhifangtu} are the histograms of ciphertext images produced by applying the MIE-AC-LCRP method to two distinct sets of original images (namely Figures \ref{fig.421} and \ref{fig.3ys}). The horizontal and the vertical axes respectively correspond to normalized pixel values  and frequencies of their occurrences. Both subfigures reveal highly uniform distribution patterns, and these two different ciphertext histograms display nearly identical profiles. These results confirm the efficacy of the encryption algorithm in obscuring statistical information from the original images. Furthermore, the highly uniform distribution patterns of histograms observed in Figure \ref{fig. zhifangtu} can be attributed, as revealed by Theorem \ref{rem-imp-1} and Remarks \ref{zyr-1} and \ref{the-spaces}, to the intrinsic oscillatory factors of the two‑parameter chirp functions of the LCRP, which induce strong phase cancellation. When further cascaded with random phase masks, this oscillatory regularization completely obliterates the original statistical intensity distribution, yielding a highly uniform and noise‑like histogram.
 
All the above experimental results demonstrate that the proposed MIE-AC-LCRP method possesses excellent statistical security. The encryption process, leveraging the combined effect of LCRPs and random phase masks, thoroughly disrupts the statistical correlations of the image at both internal and global levels. This is supported by two key evidences: the highly uniform patterns and the nearly identical profiles of different ciphertext histograms, and the effective disruption of pixel correlations. The transformation of diverse input images into ciphertexts with these noise-like characteristics ensures that the original image can be perfectly reconstructed only with the correct keys, while any minor deviation in the key parameters during decryption will yield the results statistically indistinguishable from random noise. This fulfillment of Kerckhoffs's principle and the confirmation of strong diffusion characteristics collectively validate the encryption scheme's effectiveness and its robust resistance against statistical cryptanalysis.
 
\subsubsection{Noise Attack Analysis}\label{3.3.3}
To assess the robustness of the proposed  MIE-AC-LCRP method  against noise attacks, we perturb the ciphertext image $M$ by adding the Gaussian random noise, which is denoted by $M'$, namely $M' := M + \lambda G,$ 
where $\lambda$ is the noise strength coefficient and $G$ 
represents the standard Gaussian random noise. Figure
\ref{fig. jxzs}(a)--(f) present the original Barbara 
image together with its decrypted counterparts, respectively, under 
increasing noise intensities $\lambda = 0.2, 0.4, 
0.6, 0.8, 1.0$. The results show that the decrypted 
images remain visually recognizable even at high 
noise levels, demonstrating the scheme's considerable 
resilience. This consistent performance is also  observed 
across other test images, further validating the 
robustness of the MIE-AC-LCRP method.

\vspace{-0.3cm} 	
\begin{figure}[H]
	\includegraphics[width=0.9\linewidth]{ 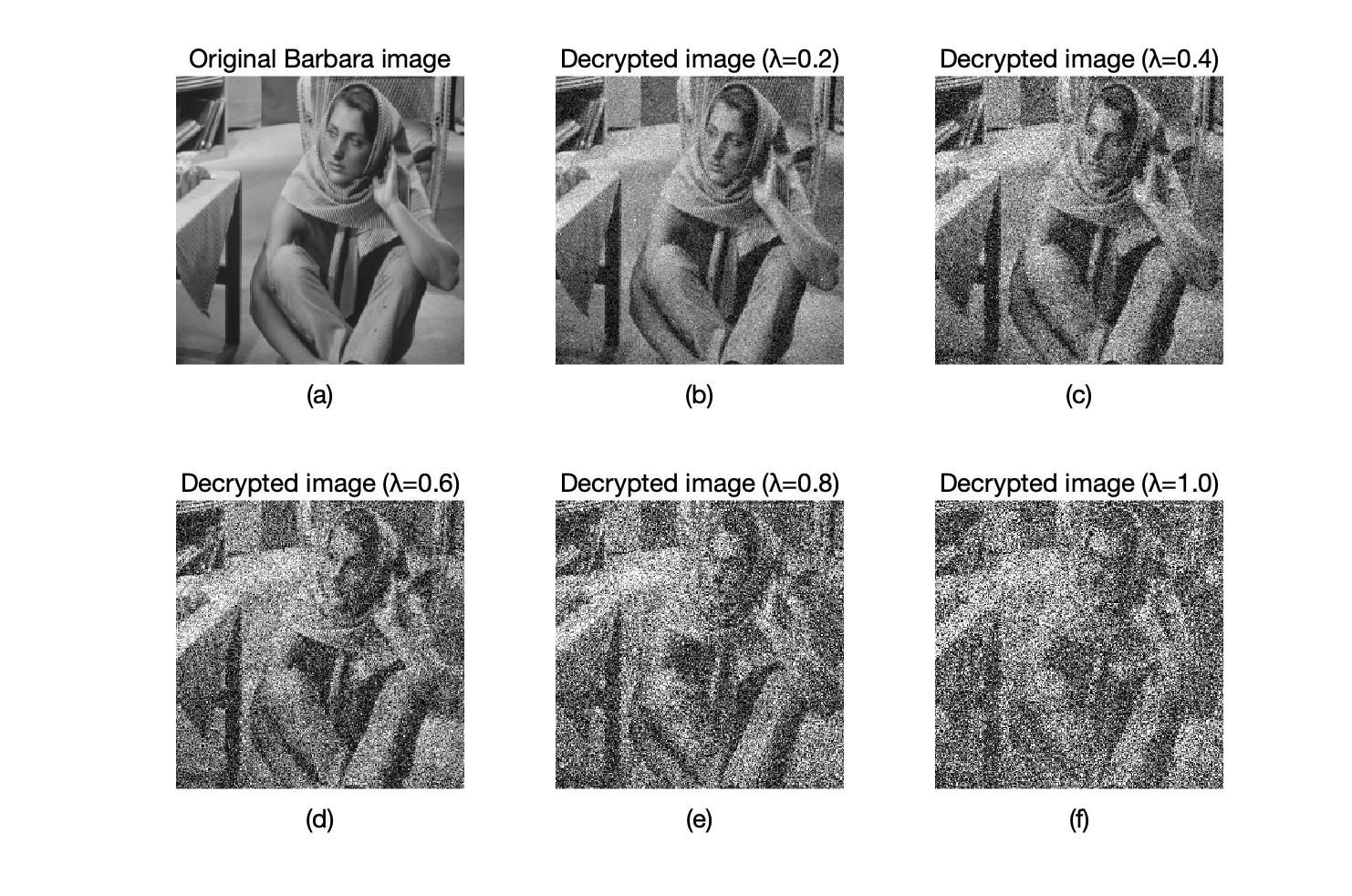}
	\vspace{-1cm} 	
	\caption{Noise Attack Analysis.}
	\label{fig. jxzs}
\end{figure}

The remarkable robustness of the proposed MIE-AC-LCRP method against noise attacks analytically stems from the oscillatory regularization mechanism established in Theorem \ref{rem-imp-1} and Remarks \ref{zyr-1} and \ref{the-spaces}. During decryption, the encrypted image reconstructs coherently only via precise phase matching relying on the correct keys. Conversely, uncorrelated random noise undergoes severe phase cancellations when encountering the LCLO, which contains highly oscillatory factors. This regularization mechanism evenly dissipates the noise energy across the spatial domain, thereby preserving the macroscopic structural features of the image even under noise interference.

\subsubsection{Occlusion Attack Analysis}\label{3.3.4}
Beyond noise resilience, the system's robustness against occlusion attacks which simulate partial data loss in ciphertext, is also systematically examined. This evaluation involves applying simulated occlusions to specific regions of the ciphertext images (top-left corner, bottom-left corner, and left side, as shown in Figures \ref{fig. oa}(d)--(f)), where the lost data are replaced by zeros. Decryption using the complete correct key set yields reconstructed ``Barbara" images, Figures \ref{fig. oa}(h)--(j), that retain sufficient structural information to clearly identify the primary features of the original image. These results confirm the cryptosystem's strong robustness against occlusion attacks, and a consistency holds across all other test images.

\vspace{-15pt}
\begin{figure}[H]
	\includegraphics[width=0.9\linewidth]{ 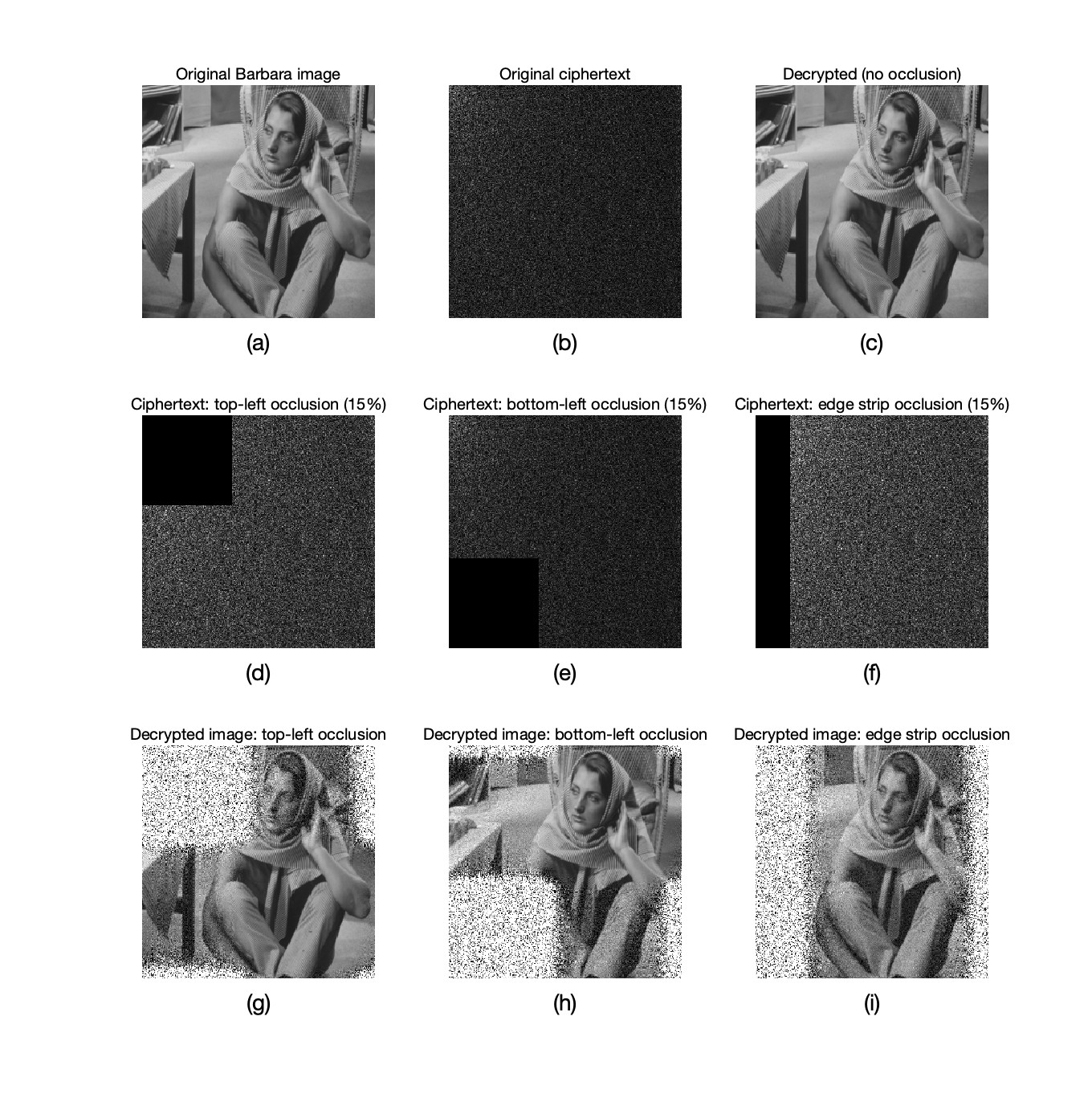} 	
	\vspace{-1cm} 	
	\caption{Occlusion Attack Analysis.}
	\label{fig. oa}
\end{figure}
\vspace{-12pt} 

The robustness of the MIE-AC-LCRP method against occlusion attacks mathematically stems from the non-local nature of the LCRP and the destructive interference mechanism of its oscillatory factors (see Theorem \ref{rem-imp-1} and Remarks \ref{zyr-1} and \ref{the-spaces}). During the encryption phase, the diffusion effect of the LCRP ensures that local occlusions do not destroy the core features of the image. During decryption, the occluded region in the ciphertext, acting as a discontinuous bounded perturbation (see Theorem \ref{thm-polygon}), undergoes severe destructive interference under the action of the LCLO due to phase mismatch. Consequently, the local perturbation energy is diffused and confined to specific spatial regions, manifesting as noise, which thereby preserves the macroscopic structure of the original image.

\section{Conclusions}\label{sec5}
This article introduces the LCRP related to the chirp Poisson equation, formulates its symbol in terms of the LCT, and establishes the convergence and the divergence properties of these LCRPs on different function classes. 
Notably, our analysis uncovers the intrinsic analytic mechanisms acting on fundamental image spaces.  For bounded non-decaying spaces (representing global image backgrounds), the LCRP utilizes the oscillatory factors of chirp functions to guarantee its convergence, overcoming the divergence issues of the classical Riesz potential. This transition from the ``incapability'' of the classical operator to the ``capability'' of the LCRP enables the successful processing of full-frame images with non-zero backgrounds. 
Furthermore, for piecewise smooth spaces (representing image edges), the limiting behavior of the classical Riesz potential exhibits discontinuities near critical values due to a mismatch between global normalization and local sectorial truncation geometry. 
This analytic instability ensures that any slight deviation in key parameters disrupts phase alignment and destroys edge reconstruction, providing a rigorous theoretical foundation for the cryptosystem's extreme key sensitivity. 
Building upon this theoretical framework, the inverse operator of the LCRP, namely the LCLO, is also defined.

As applications, we propose the MIE-AC-LCRP method for multi-image encryption, constructing an efficient and secure asymmetric cryptosystem. 
This system achieves the effective fusion and encryption of multi-images through cascaded LCRP processing and nonlinear phase modulation. Systematic security evaluations, including key sensitivity, statistical, noise attack, and occlusion attack analyses, validate the method's remarkable security and strong robustness. Even for single-image scenarios, the proposed method demonstrates higher efficiency and greater security, compared to the known encryption approach based on the fractional Riesz potential.

\bigskip

\noindent Zunwei Fu

\smallskip

\noindent School of Mathematics and Statistics, Linyi University,
Linyi 276000, The People's Republic of China; College of Information
Technology, The University of Suwon, Hwaseong-si 18323,
South Korea

\smallskip

\noindent{\it E-mail:} \texttt{zwfu@suwon.ac.kr}

\bigskip

\noindent  Dachun Yang (Corresponding author) and  Shuhui Yang

\smallskip

\noindent  Laboratory of Mathematics and Complex
Systems (Ministry of Education of China),
School of Mathematical Sciences, Institute for Advanced Study, Beijing Normal
University, Beijing 100875, The People's Republic of China

\smallskip

\noindent{\it E-mails:} \texttt{dcyang@bnu.edu.cn} (D. Yang)

\noindent\phantom{{\it E-mails:}} \texttt{shuhuiyang@bnu.edu.cn} (S. Yang) 

 \end{document}